\documentclass[11pt]{article}
\usepackage{amssymb}
\usepackage{amsfonts}
\usepackage{amsmath}
\usepackage{multirow}
\usepackage{latexsym}
\usepackage{epsfig}
\usepackage{bm}
\usepackage{times}
\usepackage{color}

\usepackage{url,amsmath,amsthm,amssymb,epsfig}
\usepackage[margin=.8in]{geometry}

\newcommand{\ignore}[1]{}

\parindent=18pt

\makeatletter
\renewcommand{\subsection}{\@startsection{subsection}{2}{0pt}{-12pt}{-5pt}{\normalsize\textbf}}
\renewcommand{\subsubsection}{\@startsection{subsubsection}{3}{0pt}{-12pt}{-5pt}{\normalsize\textbf}}
\makeatother


\newtheorem{claim}{Claim}

\newtheorem{lemma}[claim]{Lemma}
\newtheorem{observation}[claim]{Observation}
\newtheorem{theorem}{Theorem}
\newtheorem{definition}{Definition}
\newtheorem{corollary}[claim]{Corollary}

\newtheorem{fact}[claim]{Fact}

\newcommand{\dtv}{d_{\mathrm TV}}
\newcommand{\dk}{d_{\mathrm K}}

\newcommand{\eps}{\epsilon}
\newcommand{\littlesum}{\mathop{{\textstyle \sum}}}

\newcommand{\poly}{\mathrm{poly}}

\newcommand{\eqdef}{\stackrel{\textrm{def}}{=}}
\newcommand{\polylog}{\mathrm{polylog}}

\newcommand{\wh}[1]{{\widehat{#1}}}




    \newcommand{\rnote}[1]{}
    \newcommand{\inote}[1]{}
   
   \newcommand{\cnote}[1]{\footnote{\textsc{ [[Costis: {#1}\textsc ]] }}}
   \newcommand{\todo}[1]{\textsc{ [[[XXX TO DO:{~~#1}\textsc XXX]]] }}



\newcommand{\RR}{\mathbb{R}}
\newtheorem{prop}[claim]{Proposition}

\begin{document}

\title{Testing $k$-Modal Distributions:  Optimal Algorithms via Reductions}

\author{Constantinos Daskalakis\thanks{{\tt costis@csail.mit.edu.}  Research supported by NSF CAREER award CCF-0953960 and by a
Sloan Foundation Fellowship.}\\
MIT\\
\and
Ilias Diakonikolas\thanks{
{\tt ilias@cs.berkeley.edu}.  Research supported by a Simons Foundation Postdoctoral Fellowship. Some of this work
was done while at Columbia University, supported by NSF grant CCF-0728736, and by an Alexander S. Onassis Foundation
Fellowship.}\\
UC Berkeley\\
\and
Rocco A. Servedio
\thanks{{\tt rocco@cs.columbia.edu}.  Supported by NSF grants CCF-0347282 and CCF-0523664.}\\
Columbia University\\
\and
Gregory Valiant\thanks{{\tt gregory.valiant@gmail.com}.  Supported by an NSF graduate research fellowship.}\\
UC Berkeley\\
\and
Paul Valiant\thanks{{\tt pvaliant@gmail.com}.  Supported by an NSF postdoctoral research fellowship.}\\
UC Berkeley\\
}

\setcounter{page}{0}

\maketitle

\thispagestyle{empty}

\begin{abstract}

We give highly efficient algorithms, and almost matching lower bounds, for a range of basic statistical problems that involve testing and estimating the $L_1$ (total variation) distance between two $k$-modal distributions $p$ and $q$ over the discrete domain $\{1,\dots,n\}$.
More precisely, we consider the following four problems:  given sample access to an unknown $k$-modal
distribution $p$,

\medskip

\noindent \textsc{ Testing identity to a known or unknown distribution:}
\begin{enumerate}

\vspace{-.2cm} \item Determine whether $p = q$ (for an explicitly given $k$-modal distribution $q$) versus
$p$ is $\eps$-far from $q$;

\vspace{-.2cm} \item Determine whether $p=q$ (where $q$ is available via sample access) versus
$p$ is $\eps$-far from $q$;
\end{enumerate}

\noindent \textsc{ Estimating $L_1$ distance (``tolerant testing'') against a known or unknown distribution:}

\begin{enumerate}
\vspace{-.2cm} \item [3.] Approximate $d_{TV}(p,q)$ to within additive $\epsilon$ where $q$ is an explicitly
given $k$-modal distribution $q$;

\vspace{-.2cm} \item [4.] Approximate $d_{TV}(p,q)$ to within additive $\epsilon$ where $q$ is available via sample access.
\end{enumerate}

\noindent

For each of these four problems we give sub-logarithmic sample algorithms, that we show are tight up to additive $\poly(k)$ and multiplicative $\polylog\log n+\polylog k$ factors.
Thus our bounds significantly improve the previous results of \cite{BKR:04}, which were for testing identity of distributions (items (1) and (2) above) in the special cases
$k=0$ (monotone distributions) and $k=1$ (unimodal distributions) and required $O((\log n)^3)$ samples.

\ignore{

only via samples, our algorithms require
$k^2 \cdot \poly(1/\eps) + \tilde{O}((k\log n)^{2/3}) \cdot \poly(1/\eps)$ samples.\footnote{We
write $\tilde{O}(\cdot)$ to hide factors which are polylogarithmic
in the argument to $\tilde{O}(\cdot)$;
thus for example $\tilde{O}(a \log b)$ denotes a quantity
which is $O((a \log b) \cdot  (\log (a \log b))^c)$ for some
absolute constant $c$.}
When $q$ is known and $p$ is available only via samples, our
algorithms require $k^2 \cdot \poly(1/\eps) + \tilde{O}((k\log n)^{1/2}) \cdot \poly(1/\eps)$
samples from $p$.  These bounds are nearly optimal, as we give an $\Omega((k \log n)^{1/2})$ lower bound for the case where $q$ is known and an $\Omega((k \log n)^{2/3})$ lower bound for the
case where $q$ is unknown.

}

As our main conceptual contribution, we introduce a new reduction-based approach for distribution-testing
problems that lets us obtain all the above results in a unified way.  Roughly
speaking, this approach enables us to transform various distribution testing problems for $k$-modal
distributions over $\{1,\dots,n\}$ to the corresponding distribution testing problems for unrestricted
distributions over a much smaller domain $\{1,\dots,\ell\}$ where $\ell = O(k \log n).$

\end{abstract}

\newpage

\section{Introduction} \label{sec:intro}

Given samples from a pair of unknown distributions, the problem of ``identity
testing''---that is, distinguishing whether the two distributions are \emph{the
same} versus significantly different---and, more generally, the problem of
estimating the $L_1$ distance between the distributions, is perhaps \emph{the} most
fundamental statistical task.  Despite a long history of study, by both the
statistics and computer science communities, the sample complexities of these basic
tasks were only recently established.  Identity testing, given samples from a pair
of distributions of support $[n]$, can be done using $\tilde{O}(n^{2/3})$ samples
\cite{BFR+:00}, and this upper bound is optimal up to $\polylog(n)$ factors~\cite{PValiant:08}. 
Estimating the $L_1$ distance (``tolerant testing'')
between distributions of support $[n]$ requires $\Theta(n/ \log n)$
samples, and this is tight up to constant factors~\cite{ValiantValiant:11,ValiantValiant:11focs}. 
The variants of these problems when one of the two distributions is explicitly given
require $\widetilde{\Theta}(\sqrt{n})$ samples for identity testing
\cite{BFFKRW:01} and $\Theta(n/\log n)$ samples for $L_1$ distance estimation
\cite{ValiantValiant:11,ValiantValiant:11focs} respectively.

While it is surprising that these tasks can be performed using a sublinear number of samples,
for many real-world applications using $\sqrt{n}$, $n^{2/3}$, or $\frac{n}{\log n}$ samples is still impractical. As these bounds characterize worst-case instances, one might hope that drastically better performance may be possible for many settings typically encountered in practice.
Thus, a natural research direction, which we pursue in this paper, is to understand
how structural properties of the distributions in question may be leveraged to yield improved
sample complexities.

In this work we consider monotone, unimodal, and more generally $k$-modal
distributions.  Monotone, unimodal, and bimodal distributions abound in the natural
world.  The distribution of many measurements---heights or weights of members of a population, concentrations of various chemicals in cells, parameters of many atmospheric phenomena--often belong to this class of distributions.  Because of their ubiquity, much work in the natural sciences rests on the analysis of such distributions (for example, on November 1, 2011 a Google Scholar search for the exact phrase ``bimodal distribution'' in the bodies of papers returned more than 90,000 hits).  Though perhaps not as pervasive, $k$-modal distributions
for larger values of $k$ commonly arise as mixtures of unimodal distributions and are natural
objects of study. On the theoretical side, motivated by the many applications, monotone, unimodal, and $k$-modal distributions have been intensively studied in the
probability and statistics literatures for decades, see e.g.
\cite{Grenander:56,PrakasaRao:69,BBBB:72,CKC:83,Groeneboom:85,Birge:87,Birge:87b,Kemperman:91,Fougeres:97,ChanTong:04,JW:09}.

\subsection{Our results.}

Our main results are algorithms, and nearly matching lower bounds, that give a complete
picture of the sample complexities of identity testing and estimating $L_1$
distance for monotone and $k$-modal distributions.  We obtain such results  in both the setting where the two
distributions are given via samples, and the setting where one of the distributions
is given via samples and the other is described explicitly.

All our results have the nature of a reduction:   performing these tasks on $k$-modal distributions
over $[n]$ turns out to have almost exactly the same sample complexities as performing
the corresponding tasks on \emph{arbitrary} distributions over $[k \log
n]$. For any small constant $k$ (or even $k=O((\log n)^{1/3})$) and arbitrarily small constant $\epsilon$, all our results are tight to within either $\polylog \log n$ or $\polylog\log\log n$ factors. See Table~1 for the new sample complexity upper and lower
bounds for the monotone and $k$-modal tasks; see Section~\ref{sec:prelims} for the (exponentially higher) sample complexities of the general-distribution tasks on which our results rely.
While our main focus is on sample complexity rather than running time, we note that all of our algorithms
run in $\poly(\log n, k, 1/\eps)$ bit operations (note that even reading a single sample from a distribution
over $[n]$ takes $\log n$ bit operations).

We view the equivalence between the sample complexity of each of the above tasks on a monotone or unimodal distribution of domain $[n]$ and the sample complexity of the same task on an unrestricted distribution of domain $[\log n]$ as a surprising result, because such an equivalence \emph{fails} to hold for related estimation tasks. For example, consider the task of distinguishing whether a distribution on $[n]$ is uniform versus far from uniform.  For general distributions this takes $\Theta(\sqrt{n})$ samples, so one might expect the corresponding problem for monotone distributions to need $\sqrt{\log n}$ samples; in fact, however, one can test this with a \emph{constant} number of samples, by simply comparing the empirically observed probability masses of the left and right halves
of the domain.  An example in the other direction is the problem of finding a constant additive estimate for the entropy of a distribution. On domains of size $[n]$ this can be done in $\frac{n}{\log n}$ samples, and thus one might expect to be able to estimate entropy for monotone distributions on $[n]$ using $\frac{\log n}{\log\log n}$ samples. Nevertheless, it is not hard to see that  $\Omega(\log^2 n)$ samples are required.

\begin{table}[t]
\begin{center}
\begin{tabular}{|l|c|c|c|}%
\hline \bf Testing problem & \bf Our upper bound & \bf Our lower bound
\\\hline\hline

\multicolumn{3}{|c|}{{\bf $\bm{p,q}$ are both monotone:}}\\\hline

Testing identity, $q$ is known: &
$O\left(\left(\log n\right)^{1/2}\left(\log\log n\right) \cdot \eps^{-5/2}\right)$ & $\Omega\left(\left(\log n\right)^{1/2}\right)$ \\\hline

Testing identity, $q$ is unknown: & $O\left(\left(\log n\right)^{2/3} \cdot \left(\log\log n\right) \cdot
\eps^{-10/3}\right)$ & $\Omega\left(\left({\frac {\log n}{\log \log n}}\right)^{2/3}\right)$
\\\hline

Estimating $L_1$ distance, $q$ is known:  &
$O\left({\frac {\log n} {\log \log n}} \cdot \eps^{-3}\right)$ &
$\Omega\left({\frac {\log n} {\log \log n \cdot \log \log \log n}}\right)$ \\ \hline

Estimating $L_1$ distance, $q$ is unknown:& $O\left({\frac {\log n} {\log \log n}} \cdot \eps^{-3} \right)$ &
$\Omega\left({\frac {\log n} {\log \log n \cdot \log \log \log n}}\right)$  \\\hline

\multicolumn{3}{|c|}{{\bf $\bm{p,q}$ are both $\bm k$-modal:}}\\\hline

Testing identity, $q$ is known: &
$O\left({\frac {k^2} {\eps^4}} + {\frac {(k \log n)^{1/2}}{\eps^3}} \cdot \log \left({
\frac {k \log n}{\eps}} \right)\right)$ &
$\Omega\left(\left({k \log n}\right)^{1/2}\right)$
\\\hline

Testing identity, $q$ is unknown: &
$O\left({\frac {k^2} {\eps^4}} +
{\frac {\left(k \log n\right)^{2/3}} {\eps^{10/3}}} \cdot \log \left({\frac {k\log n}{\eps}}\right)\right)
$ & $\Omega\left(\left({\frac {k \log n}{\log (k\log n)}}\right)^{2/3}\right)$
\\\hline

Estimating $L_1$ distance, $q$ is known:  &
$O\left({\frac {k^2} {\eps^4}} + {\frac {k \log n}{\log(k \log n)}} \cdot \eps^{-4} \right)$  & $\Omega\left({\frac {k \log n} {\log(k \log n) \cdot \log \log (k\log n)}}\right)$ \\ \hline

Estimating $L_1$ distance, $q$ is unknown:&
 $O\left({\frac {k^2} {\eps^4}} + {\frac {k \log n}{\log(k \log n)}} \cdot  \eps^{-4}\right)$ & $\Omega\left({\frac {k \log n} {\log (k\log n) \cdot \log \log (k\log n)}}\right)$ \\\hline

\end{tabular}
\end{center}
\caption{Our upper and lower bounds for identity testing and $L_1$ estimation.
In the table we omit a ``$\log(1/\delta)$'' term which is present in all our upper bounds for
algorithms which give the correct answer with probability $1-\delta.$
For the ``testing identity'' problems, our lower bounds are for distinguishing whether
$p = q$ versus $\dtv(p,q) > 1/2$ with success probability $2/3.$  For estimating $L_1$ distance, our bounds are for estimating $\dtv(p,q)$ to within $\pm \eps$, for any $k=O(n^{1/2})$, with the lower bounds corresponding to success probability $2/3$.}
\label{tab:results}
\end{table}

The reduction-like techniques which we use to establish both our algorithmic results and our lower bounds
(discussed in more detail in Section~\ref{sec:techniques} below) reveal an unexpected relationship between the class of $k$-modal distributions of support $[n]$ and the class of general distributions of support $[k\log n]$. We hope that this reduction-based approach may provide a framework for the discovery of other
relationships that will be useful in future work in the extreme sublinear regime of statistical property estimation and property testing.

\medskip

\noindent {\bf Comparison with prior work.}
Our results significantly extend and improve upon the previous algorithmic results of Batu et al~\cite{BKR:04} for identity testing of monotone or unimodal ($k=1$) distributions, which required
$O(\log^3 n)$ samples.  More recently, \cite{DDS:11kmodallearn} established the sample complexity
of \emph{learning} $k$-modal distributions to be essentially $\Theta(k \log(n) \eps^{-3})$.
Such a learning algorithm easily yields a testing algorithm with the same sample complexity for all four
variants of the testing problem (one
can simply run the learner twice to obtain hypotheses $\wh{p}$
and $\wh{q}$ that are sufficiently close to $p$ and $q$
respectively, and output accordingly).

While the \cite{DDS:11kmodallearn} result can be applied to our testing problems (though giving
suboptimal results), we stress that the ideas underlying \cite{DDS:11kmodallearn} and this paper are quite
different.  The \cite{DDS:11kmodallearn} paper learns a $k$-modal distribution by using a known algorithm for learning monotone distributions \cite{Birge:87b} $k$ times in a black-box manner; 
the notion of \emph{reducing the domain size}---which we view as central to the results and contributions of this paper---is nowhere present in \cite{DDS:11kmodallearn}.
By contrast, the focus in this paper is on introducing the use of reductions as a powerful (but surprisingly, seemingly previously unused) tool in the development of algorithms for basic statistical tasks on distributions, which, at least in this case, is capable of giving essentially optimal upper
and lower bounds for natural restricted classes of distributions.

\ignore{

The learning results in \cite{DDS:11kmodallearn} for monotone distributions follows easily from \cite{Birge:87b}, and extending this approach to $k$-modal distributions leverages the efficient learnability of monotone distributions in a black-box manner. 

}

\ignore{Our reduction-based technique gives us essentially
optimal bounds in terms of the dependence on $k$ and $n$, but not necessarily in terms of the $\eps$-dependence (though our algorithms do have a relatively low-order polynomial dependence on $1/\eps$).}

\ignore{

The study of property testing algorithms for probability distributions was initiated in
the influential papers \cite{BFR+:00,BFFKRW:01} and has since developed into a thriving research area
with close connections to fields such as learning theory and statistics.  The paradigmatic algorithmic problem
in this area is that one is given access to samples drawn from an unknown distribution $p$ over an
$n$-element set $[n] = \{1,\dots,n\}$, and one wishes to determine whether $p$ has some property or is ``far''
(in total variation distance $\dtv$) from any distribution having the property.  The hope is to obtain an algorithm that uses very few samples -- certainly asymptotically fewer than the support size $n$, and ideally much less than that.

Perhaps the simplest
instantiation of this problem is that the property in question is simply that of being
identical to, or close to, another distribution $q$ (which may be explicitly given or, like $p$,
accessible via samples).  After a decade of study, all four variants of this most basic distribution testing
problem ($p$ is identical to $q$ or is ``close'' to $q$; and $q$ is explicitly given or is accessible via samples) are now  well-understood, with nearly matching upper and lower bounds.  (See Section~\ref{sec:disttesttools} for a detailed overview of these results.)  However, while all four of these problems can indeed be solved using $o(n)$ samples, the number of samples required turns out to be quite large, ranging from $\Theta(\sqrt{n})$ for
testing whether $p$ is identical to a known distribution $q$, to $\Theta(n/\log(n))$ for testing whether
two unknown distributions $p,q$ are close to each other versus far apart.

These relatively large sample complexities motivate the study of refined versions of these distribution
testing problems, in which the distributions $p,q$ in question are assumed to have some
``nice structure.''  This is the direction we investigate in this paper, by studying the above-described
distribution testing problems when $p$ and $q$ are assumed to be \emph{monotone} or \emph{$k$-modal} distributions.

\subsection{Testing monotone distributions}

\ignore{In this work we deal with probability distributions over the discrete domain $[n]=\{1,\dots,n\}$.}  The simplest types of distributions over $[n]$ that we will be interested in are \emph{monotone} distributions. A \emph{non-increasing} probability distribution $p$ over
$[n]$ is one that satisfies $p(i) \leq p(i+1)$ for all
$i \in [n-1]$, and a non-decreasing distribution is one that satisfies $p(i) \geq p(i+1)$ for all
$i \in [n-1]$.   A distribution which is either non-increasing or non-decreasing is said to be \emph{monotone}.

Monotone distributions have been intensively studied in the
probability and statistics literatures for decades, see e.g.
\cite{Grenander:56,PrakasaRao:69,BBBB:72,Groeneboom:85,Birge:87,Birge:87b,Fougeres:97,JW:09}.  Monotone distributions also arise in various
applied and real-world contexts.
Thus it is natural to consider various algorithmic
problems on probability distributions where the input distribution is
assumed to be monotone.

As described above, we will consider four different property testing problems for monotone distributions.
In the \emph{known distribution} variant the testing algorithm is given an explicit
description of a non-increasing distribution $q$, while in the \emph{unknown distribution} variant
$q$, like $p$, is accessible via samples.  In the \emph{testing identity} problem the property in question is
that $p = q$, while in the $L_1$ estimation problem the property in question is that
$\dtv(p,q) \in (c-\eps,c+\eps)$ for some $\eps >0.$

Some of these monotone distribution testing problems have been previously considered in the
property testing literature. The work~\cite{BKR:04} considered the ``unknown
distribution, identity'' testing problem,
in which both $p$ and $q$ are unknown non-increasing distributions and
the algorithm is testing whether $p=q$ versus $\dtv(p,q)>\eps.$
\cite{BKR:04} gave an algorithm for this problem that uses
$O(\log^3(n)/\eps^3)$ samples from $p$ and $q$, which clearly implies
an $O(\log^3(n)/\eps^3)$-sample algorithm for the ``known distribution, identity''
version of the problem as well.
\ignore{We note briefly that the algorithm of \cite{BKR:04} works by decomposing
any non-increasing distribution $p$ over $[n]$ into a small number of sub-intervals
(and a small ``error region'') in such a way that over each sub-interval the distribution
$p$ is close to the uniform distribution over that sub-interval.  This decomposition is done adaptively based on the outcome of samples from $p$ (indeed, it is easy to see that such a decomposition must
necessarily depend on the structure of $p$).}
}

\ignore{
More broadly, more general versions of
both the ``known distribution'' and ``unknown distribution''
property testing problems, in which both $p$ and $q$ may be arbitrary
distributions over $[n]$ (i.e. there is no monotonicity assumption),
are by now well-studied problems.
\cite{BFR+:00} gave a $\tilde{O}(n^{2/3})/\eps^4$-sample algorithm for the ``unknown
distribution'' problem, which was subsequently improved to a
$\tilde{O}(n^{2/3})/\eps^{8/3}$-sample algorithm in \cite{BFR+:10arxiv}.
This upper bound is nearly tight, as P. Valiant \cite{PValiant:08} has given a lower bound showing that
any algorithm for this problem must use $\Omega(n^{2/3}/\eps^{2/3})$ samples.
\cite{BFFKRW:01} gave a $\tilde{O}(n^{1/2})/\eps^2$-sample
algorithm for the ``known distribution'' problem; this is nearly optimal as a
function of $n$, since \cite{GR00} showed that $\Omega(n^{1/2})$ samples are required
even to test whether $p$ is identical to versus $\Theta(1)$-far from the uniform
distribution over $[n]$.

As we will see, these results for the general distribution testing problems (with no
monotonicity assumption) play an important role in our new results,
described below, for the monotone distribution testing problems.
\rnote{See .tex file for some old prose which I commented out, in case anyone
wants to bring some of it back.}
}

\ignore{

BEGIN ignore'd OLD PROSE ABOUT THIS

\begin{itemize}

\item

What are the relevant papers from the testing literature ?
Testing closeness ~\cite{BFR+:00}: This paper gives an $O(n^{2/3} \log n /  \eps^{8/3})$ sample algorithm (see updated arxiv version)
to test $0$-closeness vs $\eps$-closeness. The running time is linear in the sample size.
Testing for independence and identity:~\cite{BFFKRW:01}: The relevant result is an algorithm to test identity, i.e. $0$-closeness vs $\eps$-closeness
to an explicit distribution. The sample complexity is $\tilde{O}(\sqrt{n} \cdot  \poly(1/\eps))$.  The running time is $\Omega(n)$. K. Onak~\cite{Onak:10} has a note on the arxiv
that tweaks the algorithm so that it runs in time linear in the sample size.
Testing monotone and unimodal distributions~\cite{BKR:04}. The main result of this paper is about testing whether a distribution is monotone.
However, they also consider the problem of testing closeness between
two monotone/unimodal distributions.
The approach involves a
decomposition of monotone distributions
similar to Birge; and learning each distribution, using the decomposition.
The decomposition is adaptive, and suboptimal wrt to its parameters.

Over the past decade several papers in theoretical computer science \cite{BDKR:02,BKR:04} have studied probability distribution
testing problems that are closely related to the $k=0$ and $k=1$ cases of the $k$-modal testing problems that we consider.  In particular,
\cite{BKR:04} gives an $O(\log^3(n)/\eps^3)$-sample algorithm for testing whether two unknown monotone or unimodal distributions are identical versus $\eps$-far.
(As described below we give an improved algorithm for this special case as well as an algorithm for general $k$.)
We note that the recent works of P. Valiant \cite{PValiant:08} and of Valiant and Valiant~\cite{VV:11a, VV:11b} give very strong results on various probability
distribution testing problems, but since these results deal with symmetric properties and the properties we consider are not symmetric, those strong results do not apply to our problem.

Also: need to compare Birge's / our oblivious decomposition against the on by Ronitt et al.
The Ronitt et al decomposition is not oblivious, etc.

\end{itemize}

END ignore'd OLD PROSE ABOUT THIS

}

\subsection{Techniques.} \label{sec:techniques}

Our main conceptual contribution is a new reduction-based approach that lets us obtain all our upper and lower bounds in a clean and unified way.
The approach works by reducing the monotone and $k$-modal distribution testing problems to general distribution testing and estimation problems \emph{over a much smaller domain}, and vice versa.  For the monotone case this smaller domain is essentially of size $\log(n)/\eps$, and for the $k$-modal case the smaller domain is essentially of size $k \log(n)/\eps^2.$  By solving the general distribution problems
over the smaller domain using known results we get a valid answer for the original (monotone or $k$-modal) problems over domain $[n]$. More details on our algorithmic reduction are given in Section~\ref{sec:reduction}.

Conversely, our lower bound reduction lets us reexpress arbitrary distributions over a small domain $[\ell]$ by monotone (or unimodal, or $k$-modal, as required) distributions over an exponentially larger domain, while preserving many of their features with respect to the $L_1$ distance. Crucially, this reduction allows one to \emph{simulate} drawing samples from the larger monotone distribution given access to samples from the smaller distribution, so that a known impossibility result for unrestricted distributions on $[\ell]$ may be leveraged to yield a corresponding impossibility result for monotone (or unimodal, or $k$-modal) distributions on the exponentially larger domain.

The inspiration for our results is an observation of Birg\'e \cite{Birge:87b}\ignore{ (as reframed in \cite{DDS:11kmodallearn})} that given a monotone-decreasing probability distribution over $[n]$, if one subdivides $[n]$ into an exponentially increasing series of consecutive sub-intervals, the $i$th having size $(1+\epsilon)^i$, then if one replaces the probability mass on each interval with a uniform distribution on that interval, the distribution changes by only $O(\epsilon)$ in total variation distance. Further, given such a subdivision of the support into $\log_{1+\eps}(n)$ intervals, one may essentially treat the original monotone distribution as essentially a distribution over these intervals, namely a distribution of support $\log_{1+\eps}(n)$. In this way, one may hope to reduce monotone distribution testing or estimation on $[n]$ to general distribution testing or estimation on a domain of size $\log_{1+\eps}(n)$, and vice versa. See Section~\ref{sec:monotone} for details.

For the monotone  testing problems the partition into subintervals is constructed obliviously (without drawing any samples or making any reference
to $p$ or $q$ of any sort) -- for a given value of
$\eps$ the partition is the same for all non-increasing distributions.  For the $k$-modal testing problems, constructing the desired partition is significantly
more involved.  This is done via a careful procedure which uses $k^2 \cdot \poly(1/\eps)$ samples\footnote{Intuitively, the partition must be finer in regions of higher probability density; for non-increasing distributions (for example) this region is at the left side of the domain, but for general $k$-modal distributions, one must draw samples  to discover the high-probability regions.} from
$p$ and $q$ and uses the oblivious
decomposition for monotone distributions in a delicate way.  This construction is given in Section~\ref{sec:testkmodal}.

\section{Notation and Preliminaries} \label{sec:prelims}

\subsection{Notation.}
We write $[n]$ to denote the set $\{1, \ldots, n\}$, and for
integers $i\leq j$ we write  $[i,j]$ to denote the set $\{i, i+1, \ldots, j\}$.
We consider discrete probability distributions over $[n]$, which are
functions $p: [n] \to [0,1]$ such that $\littlesum_{i=1}^n p(i)=1$.
For $S \subseteq [n]$ we write $p(S)$ to denote $\littlesum_{i \in S} p(i)$.\ignore{, and we write
$p_S$ to denote the {\em conditional distribution}
over $S$ that is induced by $p,$ i.e. $p_S(i) = p(i)/p(S).$} We use the  notation $P$ for the
{\em cumulative distribution function (cdf)} corresponding to $p$, i.e.
$P: [n] \to [0,1]$ is defined by $P(j) = \littlesum_{i=1}^j p(i)$.

A distribution $p$ over $[n]$ is non-increasing (resp. non-decreasing)
if $p(i+1) \leq p(i)$ (resp. $p(i+1) \geq p(i)$), for all $i \in [n-1]$;
$p$ is \emph{monotone} if it is either non-increasing or non-decreasing.
Thus the ``orientation'' of a monotone distribution is either
non-decreasing (denoted $\uparrow$) or non-increasing
(denoted $\downarrow$).

We call a nonempty interval $I=[a,b] \subseteq [2, n-1]$ a \emph{max-interval}
of $p$ if $p(i) = c$ for all $i \in I$ and $\max\{p(a-1), p(b+1)\} < c$.
Analogously, a \emph{min-interval} of $p$ is an interval
$I=[a,b] \subseteq [2, n-1]$ with $p(i) = c$ for all $i \in I$
and $\min \{p(a-1), p(b+1)\} > c$.
We say that $p$ is \emph{$k$-modal} if it has at most $k$ max-intervals
and min-intervals.  We note that according to our definition, what is usually referred to as a bimodal distribution is a $3$-modal distribution.

Let $p, q$ be distributions over $[n]$ with corresponding cdfs $P, Q$.
The {\em total variation distance} between $p$ and $q$ is
$\dtv (p,q) := \max_{S \subseteq [n]} |p(S) - q(S)|  = (1/2) \littlesum_{i \in [n]}|p(i)-q(i)|.$
The {\em Kolmogorov distance} between $p$ and $q$ is defined as
$\dk(p,q):= \max_{j \in [n]} \left| P(j) - Q(j) \right|.$ Note that $\dk(p,q) \le \dtv (p,q).$

Finally, a \emph{sub-distribution} is a function $q: [n] \to [0,1]$ which satisfies
$\littlesum_{i=1}^n q(i) \leq 1.$  For $p$ a distribution over $[n]$ and $I \subseteq [n]$, the
\emph{restriction of $p$ to $I$} is the sub-distribution $p^I$ defined by
$p^I(i)=p(i)$ if $i \in I$ and $p^I(i)=0$ otherwise.
\ignore{
\[
p^I(i) = \begin{cases}
p(i) & \text{~if~}i \in I;\\
0 & \text{~if~}i \notin I.
\end{cases}
\]
} Likewise, we denote by $p_I$ the conditional distribution of $p$ on $I$,
i.e. \ignore{
\[
p_I(i) = \begin{cases}
p(i)/p(I) & \text{~if~}i \in I;\\
0 & \text{~if~}i \notin I.
\end{cases}
\]
}$p_I(i)=p(i)/p(I)$ if $i \in I$ and $p_I(i)=0$ otherwise.

\subsection{Basic tools from probability.}
We will require the \emph{ Dvoretzky-Kiefer-Wolfowitz
(DKW) inequality} (\cite{DKW56}) from probability theory.
This basic fact says that
$O(1/\eps^2)$ samples suffice to learn any distribution
within error $\eps$ with respect to the \emph{Kolmogorov distance}.
More precisely, let $p$ be any distribution over $[n].$
Given $m$ independent samples $s_1,\dots,s_m$
drawn from $p:[n] \to [0,1],$
the {\em empirical distribution} $\wh{p}_m : [n] \to [0,1]$ is defined as
follows: for all $i \in [n]$, $\wh{p}_m(i) = |\{j \in [m] \mid s_j=i\}| / m$.
The DKW inequality states that for $m=\Omega((1/\eps^2)\cdot \ln(1/\delta))$,
with probability $1-\delta$ the {empirical distribution} $\wh{p}_m$ will be
$\eps$-close to $p$ in Kolmogorov distance.
This sample bound is asymptotically optimal and independent of the support size.

\begin{theorem}[\cite{DKW56,Massart90}] \label{thm:DKW}
For all $\eps>0$, it holds: $\Pr [ \dk(p, \wh{p}_m) > \eps ] \leq 2e^{-2m\eps^2}.$
\end{theorem}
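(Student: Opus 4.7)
The plan is to establish a uniform Hoeffding-type bound on the supremum of $|\wh{P}_m(j)-P(j)|$ over $j\in[n]$. The starting observation is that for each fixed threshold $j$, the empirical cdf value $\wh{P}_m(j)=\frac{1}{m}\sum_{i=1}^m \ind[s_i\le j]$ is the mean of $m$ i.i.d.\ Bernoulli random variables with common mean $P(j)$. Hoeffding's inequality therefore immediately gives the \emph{pointwise} bound $\Pr[|\wh{P}_m(j)-P(j)|>\eps]\le 2e^{-2m\eps^2}$. The whole difficulty lies in converting this pointwise bound into a bound on the supremum without paying the naive union-bound factor of $n$.

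To do so, I would exploit the fact that the process $j\mapsto \wh{P}_m(j)-P(j)$ is highly constrained: $\wh{P}_m$ is a step function with at most $m$ jump points (at the sampled values), so even though the index set has size $n$, the ``effective'' complexity is only $O(m)$. Concretely, I would use a symmetrization argument: couple the sample $s_1,\dots,s_m$ with an independent ``ghost'' sample $s_1',\dots,s_m'$ from $p$ and note that
\[
\Pr\bigl[\sup_j |\wh{P}_m(j)-P(j)|>\eps\bigr]\le 2\,\Pr\bigl[\sup_j |\wh{P}_m(j)-\wh{P}_m'(j)|>\eps/2\bigr].
\]
After symmetrization one can introduce i.i.d.\ Rademacher signs and replace the supremum over $j\in[n]$ by a supremum over the at most $2m+1$ distinct values that $\wh{P}_m-\wh{P}_m'$ actually attains. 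A union bound combined with Hoeffding then yields the supremum bound $\Pr[\sup_j|\wh{P}_m(j)-P(j)|>\eps]\le C\,m\,e^{-cm\eps^2}$, which already suffices for the qualitative $O(\eps^{-2}\log(1/\delta))$ sample complexity claimed in the text.

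To obtain the tight Massart constants in the statement as written (leading factor $2$, exponent $2m\eps^2$, independent of $m$), I would instead pass to the one-sided version $\Pr[\sup_j(\wh{P}_m(j)-P(j))>\eps]\le e^{-2m\eps^2}$ and then union-bound the two tails. The one-sided bound is proved by first reducing (via the probability integral transform $s\mapsto P(s)$) to the case where $p$ is uniform on $[0,1]$, so that the empirical process becomes a uniform empirical process whose supremum over $j$ equals the supremum over $m+1$ order statistics; then applying a martingale/exchangeable-pair argument (this is essentially Massart's approach, refining the original DKW proof, which used the reflection principle for the associated Brownian bridge).

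The main obstacle is unquestionably the tight constant $2$ in the exponent: matching the optimal rate requires the careful combinatorial/analytic bookkeeping in Massart's argument, rather than the softer symmetrization route which only gives a constant-factor-worse exponent. For the applications in this paper, however, only the qualitative behavior $m=\Omega(\eps^{-2}\log(1/\delta))$ is used, so either proof route suffices; we simply invoke the sharp form from \cite{DKW56,Massart90} as stated.
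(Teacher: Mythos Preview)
The paper does not prove this theorem at all: it is stated as a classical result and attributed to \cite{DKW56,Massart90} without further argument. Your sketch is therefore not something to compare against a proof in the paper, but rather an outline of how the cited references establish the inequality. As such it is accurate in spirit: the symmetrization-plus-union-bound route you describe does give a bound of the form $C m\, e^{-c m \eps^2}$, which already yields the $m = \Omega(\eps^{-2}\log(1/\delta))$ sample complexity the paper actually uses; and you correctly flag that obtaining the sharp constants (leading factor $2$, exponent $2m\eps^2$) requires Massart's more delicate analysis rather than soft symmetrization. One minor quibble: the probability integral transform reduction to the uniform case needs a small tweak for discrete $p$ (one handles ties by randomized tie-breaking, which can only make the Kolmogorov deviation larger), but this does not affect the argument. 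Since the paper only invokes the theorem as a black box, there is nothing further to reconcile.
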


Another simple result that we will need is the following, which is easily verified
from first principles:

\begin{observation} \label{obs:mon}
Let $I=[a,b]$ be an interval and let $u_I$ denote the uniform
distribution over $I.$  Let $p_I$ denote a non-increasing distribution over $I$.
Then for every initial interval $I' = [a,b']$ of $I$, we have
$u_I(I') \leq p_I(I').$
\end{observation}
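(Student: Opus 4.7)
The plan is straightforward and relies on the basic fact that the average of a non-increasing sequence over a prefix is at least the average over the whole sequence. Let $n = |I| = b-a+1$ and $n' = |I'| = b'-a+1$, so that $u_I(I') = n'/n$. The goal is to show $p_I(I') \geq n'/n$.

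First I would split the interval $I$ into the prefix $I' = [a,b']$ and the suffix $J = [b'+1, b]$, noting $|J| = n - n'$ and $p_I(I') + p_I(J) = 1$. The monotonicity assumption gives $p_I(i) \geq p_I(j)$ for every $i \in I'$ and every $j \in J$ (since $i \leq j$ and $p_I$ is non-increasing). Averaging over $i \in I'$ and $j \in J$ on each side then yields
\[
\frac{p_I(I')}{n'} \;=\; \frac{1}{n'}\sum_{i=a}^{b'} p_I(i) \;\geq\; \frac{1}{n-n'}\sum_{j=b'+1}^{b} p_I(j) \;=\; \frac{1-p_I(I')}{n-n'}.
\]
(If $n'=n$ the claim is trivial since $p_I(I')=1=u_I(I')$, so I may assume $n'<n$ here.)

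Cross-multiplying and collecting terms gives $(n-n')\,p_I(I') \geq n'(1 - p_I(I'))$, i.e. $n\cdot p_I(I') \geq n'$, which is exactly $p_I(I') \geq n'/n = u_I(I')$, as desired.

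There is really no obstacle here — the only thing to be careful about is the corner case $n'=n$ (handled trivially) and the direction of the inequality when cross-multiplying (which is fine because $n-n' > 0$). The whole observation is essentially a one-line consequence of ``prefix averages of a non-increasing sequence dominate the global average.''
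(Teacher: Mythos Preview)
Your argument is correct. The paper does not actually give a proof of this observation; it merely states that it ``is easily verified from first principles,'' and your prefix-average-versus-suffix-average computation is exactly such a first-principles verification.
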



\subsection{Testing and estimation for arbitrary distribution}
Our testing algorithms work by reducing to known algorithms for testing arbitrary distributions
over an $\ell$-element domain.  We will use the following well known results: \ignore{\rnote{Somebody please check (i) all the bounds claimed below, and (ii) that what's in our table in the Introduction matches these
bounds.  If someone does this, remove this comment.}}

\begin{theorem}  [testing identity, known distribution \cite{BFFKRW:01}] \label{thm:testidentityknown}
Let $q$ be an explicitly given distribution over $[\ell]$.  Let $p$ be
an unknown distribution over $[\ell]$ that is accessible via samples.
There is a testing algorithm
\textsc{{Test-Identity-Known}}$(p, q, \eps,\delta)$
that uses $s_{IK}(\ell,\eps,\delta) := O(\ell^{1/2} \log(\ell) \eps^{-2} \log(1/\delta))$ samples from $p$ and has the following properties:

\begin{itemize}
\vspace{-.2cm} \item If $p \equiv q$ then with probability at least
$1-\delta$ the algorithm outputs ``accept;'' and
\vspace{-.2cm} \item If $\dtv(p,q) \geq \eps$ then with probability at least $1-\delta$
the algorithm outputs ``reject.''
\end{itemize}
\end{theorem}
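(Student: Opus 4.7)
The plan is to follow the standard bucketing-plus-$L_2$-tester approach. First I would partition the domain $[\ell]$ into $O(\log \ell)$ ``buckets'' according to the probabilities assigned by the known distribution $q$: let $B_0 = \{i \in [\ell] : q(i) < \eps/\ell\}$ contain the ``light'' elements, and for $j \ge 1$ let $B_j = \{i \in [\ell] : q(i) \in [2^{j-1}\eps/\ell, 2^j\eps/\ell)\}$. Only $O(\log(\ell/\eps))$ of these buckets can be non-empty, since no element has $q$-mass exceeding $1$. The key property is that within each non-trivial bucket $B_j$, all elements have $q$-probabilities within a factor of $2$ of one another, so the conditional distribution $q_{B_j}$ is close (in $L_1$ and in $L_2$) to the uniform distribution on $B_j$.

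The algorithm then draws $N = \tilde{O}(\sqrt{\ell}/\eps^2)\log(1/\delta)$ samples from $p$ and performs two kinds of sub-tests. The coarse test compares the empirical weights $\hat{p}(B_j)$ with the known weights $q(B_j)$ for each bucket, rejecting if any discrepancy exceeds $O(\eps/\log \ell)$; this uses the DKW-style concentration on $O(\log \ell)$ events and needs only $\tilde O(\eps^{-2})$ samples. The fine test checks, within each bucket $B_j$ that received a reasonable number of samples, that the conditional distribution $p_{B_j}$ is close to $q_{B_j}$ in $L_1$. For this I would use the standard $L_2$-based ``collision tester'': for a distribution $r$ over a size-$m$ set, the collision statistic gives an unbiased estimator of $\|r\|_2^2$, and distinguishing $\|r\|_2^2 = 1/m$ (uniform) from $\|r\|_2^2 \ge (1+4\eps^2)/m$ ($\eps$-far from uniform in $L_1$) takes $O(\sqrt{m}/\eps^2)$ samples. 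Since $q_{B_j}$ is near-uniform within $B_j$, the same tester works for closeness of $p_{B_j}$ to $q_{B_j}$.

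For the sample budget, the bucket $B_j$ receives roughly $N \cdot p(B_j)$ samples, and one checks that this suffices to run the within-bucket tester at the required resolution $\eps/\log\ell$. A Cauchy--Schwarz calculation over the $O(\log \ell)$ buckets shows that $N = O(\sqrt{\ell}\,\log\ell \cdot \eps^{-2})$ samples is enough for all bucket-level testers to succeed simultaneously, and a union bound with amplification by a factor of $\log(1/\delta)$ yields the claimed complexity. Completeness is immediate: when $p \equiv q$, both the coarse and fine tests accept with high probability. For soundness, if $\dtv(p,q) \ge \eps$, I would argue that the $L_1$ discrepancy decomposes as $\sum_j |p(B_j) - q(B_j)| + \sum_j \min(p(B_j),q(B_j)) \cdot \dtv(p_{B_j}, q_{B_j})$, so at least one term is $\Omega(\eps/\log \ell)$, forcing some sub-test to reject.

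The main technical obstacle will be the soundness analysis of the within-bucket $L_2$ tester when the bucket is not exactly uniform under $q$: one must show that if $p_{B_j}$ is $\Omega(\eps/\log \ell)$-far from $q_{B_j}$ in $L_1$, then, because $q_{B_j}$ is within a factor of $2$ of uniform, $p_{B_j}$ is also $\Omega(\eps/\log \ell)$-far from uniform on $B_j$ in $L_1$, which then translates into an $L_2$ gap that the collision statistic detects. Handling the ``light'' bucket $B_0$ separately (its total mass under $q$ is at most $\eps$, so any $L_1$ discrepancy there is already absorbed into the error budget) and carefully allocating the $\eps/\log \ell$ slack across buckets are the remaining bookkeeping steps.
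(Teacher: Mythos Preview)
The paper does not prove this theorem at all: it is quoted as a black-box result from \cite{BFFKRW:01} and used as a subroutine in the reductions. So there is no ``paper's own proof'' to compare against.

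That said, your sketch is essentially the original argument from \cite{BFFKRW:01}: bucket the domain by powers of two according to the known $q$-probabilities, do a coarse test on bucket masses, and within each bucket reduce to an $L_2$/collision-based near-uniformity test. The decomposition of $\dtv(p,q)$ you write down is precisely Lemma~\ref{lem:decomp} of the present paper, so that step is fine. One place to be a bit more careful is the step where you pass from ``$p_{B_j}$ is far from $q_{B_j}$'' to ``$p_{B_j}$ is far from uniform on $B_j$'': being within a factor of $2$ of uniform only guarantees $q_{B_j}$ is $O(1)$-close to uniform in $L_1$, not $O(\eps/\log\ell)$-close, so a direct triangle inequality does not give the conclusion. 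The actual argument in \cite{BFFKRW:01} works at the $L_2$ level (comparing $\|p_{B_j}\|_2^2$ to $\|q_{B_j}\|_2^2$ directly, exploiting that $\|q_{B_j}\|_2^2$ is known and close to $1/|B_j|$), rather than detouring through $L_1$-closeness to uniform. With that correction the outline is correct.
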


\begin{theorem}
[testing identity, unknown distribution \cite{BFRSW:10}]\label{thm:testidentityunknown}
Let $p$ and $q$ both be unknown distributions over $[\ell]$ that are accessible via samples.
There is a testing algorithm \textsc{ Test-Identity-Unknown}$(p, q, \eps,\delta)$
that uses $s_{IU}(\ell,\eps,\delta) := O(\ell^{2/3} \log(\ell/\delta) \eps^{-8/3})$ samples from $p$ and $q$ and has the following properties:

\begin{itemize}
\vspace{-.2cm} \item If $p \equiv q$ then with probability at least
$1-\delta$ the algorithm outputs ``accept;'' and

\vspace{-.2cm} \item If $\dtv(p,q) \geq \eps$ then with probability at least $1-\delta$
the algorithm outputs ``reject.''
\end{itemize}
\end{theorem}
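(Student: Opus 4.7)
The plan is to prove Theorem~\ref{thm:testidentityunknown} via the \emph{collision-counting/$L_2$-testing} approach of Batu et al., combined with a \emph{heavy/light bucketing} step. The starting observation is that $\|p-q\|_2^2 = \|p\|_2^2 + \|q\|_2^2 - 2\langle p,q\rangle$, and each of these three quantities admits an unbiased estimator from sample collisions: if $s_1,\dots,s_m$ are i.i.d.\ from $p$, then $\binom{m}{2}^{-1}\sum_{i<j}\ind[s_i=s_j]$ is unbiased for $\|p\|_2^2$, and similarly the number of cross-coincidences between independent samples from $p$ and $q$, suitably normalized, is unbiased for $\langle p,q\rangle$. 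By Cauchy--Schwarz, $\|p-q\|_1 \leq \sqrt{\ell}\,\|p-q\|_2$, so the hypothesis $\dtv(p,q) \geq \eps$ forces $\|p-q\|_2^2 \geq 2\eps^2/\ell$, whereas $p\equiv q$ gives $\|p-q\|_2^2 = 0$. Distinguishing these two cases requires additive accuracy $\Theta(\eps^2/\ell)$ on the $L_2$ estimator.

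The obstacle is that the variance of the collision-based estimator of $\|p\|_2^2$ scales roughly like $\|p\|_2^2/m^2 + \|p\|_3^3/m^3$, so a single element of large mass blows up the variance. The standard fix is bucketing: set a threshold $\tau \asymp \ell^{-2/3}$ and let $H = \{i : \max(p(i),q(i)) > \tau\}$, $L = [\ell]\setminus H$, with $|H| = O(1/\tau) = O(\ell^{2/3})$. A first phase of $O(\ell^{2/3}\log(\ell/\delta))$ samples from each distribution identifies (with error $\tau/2$) which indices land in $H$; then the empirical $\hat p(i), \hat q(i)$ on those indices are accurate enough, by DKW/Chernoff, to give an additive $\eps/3$ estimate of $\sum_{i \in H}|p(i)-q(i)|$. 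This handles the heavy part directly.

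For the light part $L$, one runs the $L_2$-tester on the sub-distributions $p^L, q^L$. Because every $i\in L$ has mass at most $\tau$, one gets $\|p^L\|_3^3 \leq \tau\|p^L\|_2^2$, and a careful moment computation shows the variance of the combined estimator $Z = \widehat{\|p^L\|_2^2} + \widehat{\|q^L\|_2^2} - 2\widehat{\langle p^L, q^L\rangle}$ is $O(\tau/m^2 + 1/m^3 \cdot (\cdots))$, which is below $\eps^4/\ell^2$ once $m = \Omega(\ell^{2/3}\eps^{-8/3})$. By Chebyshev, a constant probability of correctly distinguishing the null and alternative hypothesis on the light bucket follows; repeating $O(\log(1/\delta))$ times and taking a median boosts success to $1-\delta$, and a union bound with the heavy-bucket test gives the claimed $s_{IU}(\ell,\eps,\delta) = O(\ell^{2/3}\log(\ell/\delta)\eps^{-8/3})$ total samples.

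The main technical hurdle is the variance bound for $Z$ restricted to the light bucket: one must expand the variance into $O(1)$ terms indexed by coincidence patterns among the sample indices, bound each by using $\max_i \max(p(i),q(i)) \leq \tau$ together with $\|p^L\|_2^2,\|q^L\|_2^2 \leq \tau$, and verify that the dominant term yields exactly the $\tau/m^2$ rate that makes $\tau = \ell^{-2/3}$ and $m = \ell^{2/3}\eps^{-8/3}$ the right calibration. Everything else (the heavy-element estimation, the Cauchy--Schwarz step, the median-of-means amplification) is routine once this variance calculation is in place.
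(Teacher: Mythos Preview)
The paper does not prove this theorem at all: it is stated as a known result, attributed to \cite{BFRSW:10} (the journal/arXiv version of Batu et al.), and invoked purely as a black box inside the reductions. So there is no ``paper's own proof'' to compare against.

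That said, your sketch is a faithful high-level reconstruction of the Batu et al.\ argument that the citation points to: estimate $\|p-q\|_2^2$ via self- and cross-collision statistics, convert an $L_1$ gap into an $L_2$ gap by Cauchy--Schwarz, and control the estimator's variance by first stripping off heavy elements (handled directly by empirical frequencies) so that on the light part the collision variance is small enough. One detail to watch if you flesh this out: the bucketing threshold in the cited work is not simply $\ell^{-2/3}$ but also carries a dependence on $\eps$ (roughly $\tau \asymp \eps^{2/3}\ell^{-1/3}$ in the original parameterization), which is what makes the $\eps^{-8/3}$ exponent fall out correctly from the variance computation; with $\tau = \ell^{-2/3}$ independent of $\eps$ the arithmetic does not quite close. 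Apart from that calibration, the plan is the standard and correct one.
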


\begin{theorem}  [$L_1$ estimation \cite{ValiantValiant:11focs}] \label{thm:testtolerantknown}
  Let $p$ be
an unknown distribution over $[\ell]$ that is accessible via samples, and let $q$ be a distribution over $[\ell]$ that is either explicitly given, or accessible via samples.
There is an estimator
\textsc{ $L_1$-Estimate}$(p, q, \eps, \delta)$ that, with probability at least $1-\delta$, outputs a value in the interval $(\dtv(p,q)-\eps, \dtv(p,q)+\eps).$
The algorithm uses $s_{E}(\ell,\eps,\delta) := O\left({\frac \ell {\log \ell}} \cdot \eps^{-2}\log(1/\delta)\right)$ samples.
\end{theorem}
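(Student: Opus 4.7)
The plan is to follow the strategy of Valiant--Valiant, namely to build a linear estimator based on the paired empirical fingerprint of the samples and then show, via a polynomial approximation argument, that $\ell / \log \ell$ samples suffice to achieve additive error $\eps$ with constant probability. I will begin by applying the Poissonization trick: draw $\mathrm{Poi}(m)$ samples from $p$ (and, if $q$ is available by samples, $\mathrm{Poi}(m)$ samples from $q$; otherwise replace $q$-counts by their exact expectations $m \cdot q(i)$). Poissonization makes the counts $X_i, Y_i$ of each domain element $i$ mutually independent, $X_i \sim \mathrm{Poi}(m\,p(i))$ and $Y_i \sim \mathrm{Poi}(m\,q(i))$, which is essential for clean variance bounds. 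From the samples I form the \emph{paired fingerprint} $F_{a,b} = \bigl|\{i \in [\ell] : X_i = a,\ Y_i = b\}\bigr|$, so that any symmetric-in-the-domain functional of $(p,q)$ (such as $\dtv(p,q) = \tfrac{1}{2}\sum_i |p(i)-q(i)|$) is estimated by a linear combination $\sum_{a,b} c_{a,b} F_{a,b}$.

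The next step is to design the coefficients $c_{a,b}$. I split each domain element into two regimes depending on whether $m \cdot \max(p(i),q(i))$ is above or below a threshold $\tau = \Theta(\log \ell)$. For \textbf{heavy} elements ($X_i + Y_i \gtrsim \tau$), the empirical frequencies $X_i/m$ and $Y_i/m$ concentrate tightly around $p(i)$ and $q(i)$, so $|X_i - Y_i|/m$ already gives a low-variance pointwise estimator of $|p(i)-q(i)|$. For \textbf{light} elements, one cannot estimate $|p(i)-q(i)|$ pointwise; instead, I use the best polynomial $P$ of degree $d = \Theta(\log \ell)$ that uniformly approximates $|x-y|$ on the square $[0,\tau/m]^2$ (Chebyshev / Paszkiewicz--Ditzian type approximation of the absolute value, which achieves error $O(\tau / (m \cdot d))$). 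Since $P$ is a polynomial in $(p(i), q(i))$, an unbiased estimator for $P(p(i),q(i))$ can be read off from falling-factorial moments of the independent Poissons $X_i, Y_i$, which gives explicit coefficients $c_{a,b}$ for $a+b \le 2d$.

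The analysis then has two pieces. The \emph{bias} of $\sum_i \widehat{|p(i)-q(i)|}$ is bounded by the uniform polynomial-approximation error times $\ell$, giving $\ell \cdot O(\tau/(md)) = O(\ell/m)$ as $d,\tau \asymp \log \ell$; setting $m = \Theta(\ell/(\eps \log \ell))$ makes this $O(\eps)$. The \emph{variance} is the harder piece and is the main obstacle: the falling-factorial estimators have coefficients that grow rapidly in $a+b$, and one needs the polynomial $P$ to have small coefficients in an appropriate basis so that, after summing variances across all $\ell$ independent Poissonized coordinates, the total variance is $O(\eps^2)$. Controlling this requires the refined Chebyshev-type construction of Valiant--Valiant, which chooses $P$ to minimize a weighted combination of approximation error and a coefficient norm tuned to the Poisson variance; the final tradeoff yields the stated sample complexity $m = O\bigl((\ell/\log \ell)\cdot \eps^{-2}\bigr)$.

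Finally, to boost the success probability from a constant to $1-\delta$, I would run $O(\log(1/\delta))$ independent copies and take the median, producing the $\log(1/\delta)$ factor in the claimed bound $s_E(\ell,\eps,\delta) = O\bigl((\ell/\log \ell)\,\eps^{-2}\,\log(1/\delta)\bigr)$. The case where $q$ is given explicitly is handled by the same construction with $Y_i$ replaced by its Poisson mean, which only decreases variance. I expect the genuinely difficult step---and the one I would have to import from \cite{ValiantValiant:11focs} rather than redo by hand---to be the optimal polynomial approximation and the accompanying coefficient-norm estimate, since tightening those constants is what distinguishes the $\ell/\log \ell$ rate from the naive $\ell$ rate one gets by plug-in estimation.
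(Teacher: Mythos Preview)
The paper does not prove this theorem at all: it is stated in the preliminaries (Section~\ref{sec:prelims}) as a known result imported from \cite{ValiantValiant:11focs} and used purely as a black box in the reductions. There is therefore no ``paper's own proof'' to compare your attempt against.

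That said, your sketch is a reasonable high-level outline of the Valiant--Valiant argument in the cited reference: Poissonization, the paired fingerprint, the heavy/light split at threshold $\Theta(\log \ell)$, an unbiased polynomial estimator for the light part built from a degree-$\Theta(\log \ell)$ approximation of $|x-y|$, and median-of-means boosting for the $\log(1/\delta)$ factor. You are also right that the crux is the variance control, which hinges on choosing the approximating polynomial to have small coefficients in the appropriate (factorial-moment) basis; this is exactly the technical content of \cite{ValiantValiant:11focs} that you would have to import rather than reprove. So as a proof plan for the underlying result your outline is on target, but for the purposes of this paper the correct ``proof'' is simply the citation.
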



\section{Testing and Estimating Monotone Distributions}

\subsection{Oblivious decomposition of monotone distributions} \label{sec:obliv}
Our main tool for testing monotone distributions is an {\em oblivious decomposition} of monotone distributions that is a variant of a construction of Birg\'{e}
\cite{Birge:87b}. As we will see it enables us to reduce the problem of
testing a monotone distribution to the problem of testing an arbitrary distribution over a much
smaller domain.

Before stating the decomposition, some notation will be helpful.  Fix a distribution $p$ over $[n]$ and a partition of $[n]$ into disjoint intervals
$\mathcal{I} :=  \{ I_i \}_{i=1}^{\ell}.$
The {\em flattened distribution} $(p_f)^{\mathcal{I}}$ corresponding to $p$ and $\mathcal{I}$
is the distribution over $[n]$ defined as follows:  for $j \in [\ell]$ and $i \in I_j$, $(p_f)^{\mathcal{I}} (i) = \littlesum_{t \in I_j} p(t) / |I_j|$.
That is, $(p_f)^{\mathcal{I}}$ is obtained from $p$ by averaging the weight that $p$ assigns to each interval over the entire interval.
The {\em reduced distribution} $(p_r)^{\mathcal{I}}$ corresponding to $p$ and $\mathcal{I}$ is the distribution over $[\ell]$
that assigns the $i$th point the weight $p$ assigns to the
interval $I_i$; i.e., for $i \in [\ell]$, we have $(p_r)^{\mathcal{I}} (i) = p(I_i)$.
Note that if $p$ is non-increasing then so is $(p_f)^{\mathcal{I}}$, but
this is not necessarily the case for $(p_r)^{\mathcal{I}}$.

The following simple lemma, proved in Section~\ref{sec:reduction}, shows
why reduced distributions are useful for us:

\begin{definition} \label{def:flatdecomp}
Let $p$ be a distribution over $[n]$ and let ${\cal I}= \{ I_i \}_{i=1}^{\ell}$ be a partition
of $[n]$ into disjoint intervals.  We say that ${\cal I}$ is a \emph{$(p,\eps,\ell)$-flat decomposition of $[n]$} if
$\dtv(p,(p_f)^{{\cal I}}) \leq \eps.$
\end{definition}

\begin{lemma} \label{lem:reduction}
Let ${\cal I} = \{ I_i \}_{i=1}^{\ell}$ be a partition of $[n]$ into disjoint intervals.
Suppose that $p$ and $q$ are distributions over $[n]$ such that ${\cal I}$ is both a $(p,\eps,\ell)$-flat decomposition of $[n]$
and is also a $(q,\eps,\ell)$-flat decomposition of $[n]$.  Then
$ \left| \dtv(p,q) - \dtv ( (p_r)^{\cal I}, (q_r)^{\cal I} ) \right|  \leq 2\eps.$
Moreover, if $p = q$ then $(p_r)^{\cal I} = (q_r)^{\cal I}$.
\end{lemma}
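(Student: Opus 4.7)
The plan is to reduce the claim to two observations: (i) ``flattening'' changes each distribution by at most $\eps$ in total variation, and (ii) the flattened pair and the reduced pair have exactly equal total variation distance. Neither step is deep; the content of the lemma is really the clean identity in (ii), and the bookkeeping in (i) just uses the hypothesis of Definition~\ref{def:flatdecomp} together with the triangle inequality.

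For step (i), the hypothesis that ${\cal I}$ is a $(p,\eps,\ell)$-flat and $(q,\eps,\ell)$-flat decomposition gives $\dtv(p,(p_f)^{\cal I}) \le \eps$ and $\dtv(q,(q_f)^{\cal I}) \le \eps.$ Two applications of the triangle inequality then yield
\[
\bigl| \dtv(p,q) - \dtv\bigl((p_f)^{\cal I},(q_f)^{\cal I}\bigr) \bigr| \;\le\; \dtv(p,(p_f)^{\cal I}) + \dtv(q,(q_f)^{\cal I}) \;\le\; 2\eps.
\]

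For step (ii), I would unfold definitions pointwise. On each interval $I_j$, both $(p_f)^{\cal I}$ and $(q_f)^{\cal I}$ are constant, taking values $p(I_j)/|I_j|$ and $q(I_j)/|I_j|$ respectively, so the contribution of $I_j$ to $\sum_i |(p_f)^{\cal I}(i)-(q_f)^{\cal I}(i)|$ is exactly $|I_j|\cdot |p(I_j)-q(I_j)|/|I_j| = |p(I_j)-q(I_j)|$. Summing over $j \in [\ell]$ and using $(p_r)^{\cal I}(j)=p(I_j)$ and $(q_r)^{\cal I}(j)=q(I_j)$ gives
\[
\dtv\bigl((p_f)^{\cal I},(q_f)^{\cal I}\bigr) \;=\; \tfrac12 \sum_{j=1}^\ell |p(I_j)-q(I_j)| \;=\; \tfrac12 \sum_{j=1}^\ell \bigl|(p_r)^{\cal I}(j)-(q_r)^{\cal I}(j)\bigr| \;=\; \dtv\bigl((p_r)^{\cal I},(q_r)^{\cal I}\bigr).
\]
Chaining the two displays yields the claimed bound $|\dtv(p,q)-\dtv((p_r)^{\cal I},(q_r)^{\cal I})|\le 2\eps$.

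The ``moreover'' assertion is immediate from the definition of the reduced distribution: if $p=q$ then $p(I_i)=q(I_i)$ for every $i$, so $(p_r)^{\cal I}(i)=(q_r)^{\cal I}(i)$ for all $i\in[\ell].$ There is no real obstacle here---the only subtlety is noticing that the ``averaging'' implicit in flattening redistributes mass \emph{identically} within each $I_j$ for the two distributions, which is precisely what makes $\dtv$ of the flattened pair collapse to the $\ell^1$ difference of the masses $p(I_j),q(I_j)$ and thus to $\dtv$ of the reduced pair.
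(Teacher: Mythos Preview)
Your proof is correct and follows essentially the same approach as the paper: both establish the identity $\dtv((p_f)^{\cal I},(q_f)^{\cal I}) = \dtv((p_r)^{\cal I},(q_r)^{\cal I})$ and then use the triangle inequality together with the flat-decomposition hypothesis. The only cosmetic difference is that the paper first observes $\dtv(p,q) \ge \dtv((p_f)^{\cal I},(q_f)^{\cal I})$ to drop the absolute value before applying the triangle inequality, whereas you handle both directions of the absolute value directly---your route is arguably a touch cleaner.
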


We now state our oblivious decomposition result for monotone distributions:

\begin{theorem}[\cite{Birge:87b}] \label{thm:birge-oblivious}
(oblivious decomposition)  Fix any $n \in \mathbb{Z}^+$ and $\eps>0$.
The partition $\mathcal{I} : = \{ I_i\}_{i=1}^{\ell}$ of $[n],$ in which the $j$th interval has size $\lfloor (1+ \eps)^j \rfloor$ has the following properties:  $\ell = O\left( (1/\eps) \cdot \log (\eps \cdot n + 1) \right)$,
and  ${\cal I}$ is a $(p,O(\eps),\ell)$-flat decomposition of $[n]$
for any non-increasing distribution $p$ over
$[n]$.
\end{theorem}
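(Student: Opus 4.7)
The statement has two parts: the bound on $\ell$, and the flat-decomposition property. The bound $\ell = O(\eps^{-1}\log(\eps n + 1))$ follows from a direct geometric-series computation: since the $\ell$ intervals cover $[n]$, we need $\sum_{j=1}^{\ell}\lfloor(1+\eps)^j\rfloor \geq n$, and lower-bounding the left side by $\sum_{j=1}^\ell((1+\eps)^j - 1) \geq (1+\eps)^{\ell+1}/(2\eps)$ (for $\ell$ not too small) yields the stated bound on $\ell$.

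For the substantive part, write $I_j = [a_j, b_j]$ and $\bar{p}_j := p(I_j)/|I_j|$. Since $p$ is non-increasing, both $p(i)$ for $i \in I_j$ and $\bar{p}_j$ lie in the interval $[p(b_j), p(a_j)]$, so $|p(i) - \bar{p}_j| \leq p(a_j) - p(b_j)$. Summing over $i \in I_j$ and over $j$ yields
\begin{equation*}
2\dtv(p, (p_f)^{\mathcal{I}}) \;=\; \sum_j \sum_{i \in I_j}|p(i) - \bar{p}_j| \;\leq\; \sum_j |I_j|\bigl(p(a_j) - p(b_j)\bigr),
\end{equation*}
so the task reduces to bounding the latter by $O(\eps)$.

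To do so, I would telescope within each interval via $|I_j|(p(a_j) - p(b_j)) = \sum_{i=a_j}^{b_j-1}|I_j|(p(i) - p(i+1))$ and exchange summation order to rewrite the total as $\sum_i \phi_i\,(p(i) - p(i+1))$, where the sum runs over indices $i$ such that $i$ and $i+1$ lie in the same interval, and $\phi_i$ denotes the size of that interval. Intervals of size $1$ contribute no such $i$, so every contributing $i$ satisfies $\phi_i \geq 2$. The key estimate is that whenever $\phi_i \geq 2$, one has $\phi_i \leq C\eps \cdot i$ for an absolute constant $C$: for $j_0 := \lceil \log 2/\log(1+\eps) \rceil$ (the smallest $j$ with $|I_j| \geq 2$), summing the sizes of the earlier intervals geometrically gives $a_j \geq |I_j|/(C\eps)$ for all $j \geq j_0$, and any $i \in I_j$ satisfies $i \geq a_j$. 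Combining with the Abel identity
\begin{equation*}
\sum_{i=1}^n i\bigl(p(i) - p(i+1)\bigr) \;=\; \sum_{i=1}^n p(i) \;=\; 1
\end{equation*}
(with the convention $p(n+1):=0$) yields $\sum_j |I_j|(p(a_j) - p(b_j)) \leq C\eps$, and hence $\dtv(p, (p_f)^{\mathcal{I}}) = O(\eps)$.

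The main technical obstacle is establishing $a_j \geq |I_j|/(C\eps)$ with an absolute constant $C$, particularly in the regime $j \approx j_0$ where the floor function in $|I_j| = \lfloor(1+\eps)^j\rfloor$ creates rounding artifacts: for each small integer $k$ there are $\Theta(1/(k\eps))$ indices $j$ with $|I_j| = k$, and one must verify that their cumulative length leaves $a_j$ large enough relative to $|I_j|$ to maintain the required ratio independently of $\eps$ and $n$.
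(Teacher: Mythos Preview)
Your proposal is correct and takes a genuinely different route from the paper's proof.

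Both arguments begin identically, reducing to the bound $\sum_j |I_j|\bigl(p(a_j)-p(b_j)\bigr) = O(\eps)$. From there the paper proceeds by splitting the intervals into \emph{short} ones (length $<1/\eps$) and \emph{long} ones (length $\geq 1/\eps$). Long intervals are handled exactly as in Birg\'e's continuous argument, using $l_{j+1}-l_j \leq 2\eps\, l_j$ and a telescoping sum. Short intervals are grouped by their common length $l\in\{1,2,\dots\}$; each group $G_l$ with $l<1/\eps$ is shown to contain $\Omega(1/\eps)$ domain elements, and a second telescoping sum over the groups (using that $G_1$ contributes nothing) yields the $O(\eps)$ bound for the short part.

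Your argument instead unifies the two regimes via a single Abel summation: rewriting $\sum_j |I_j|(p(a_j)-p(b_j))=\sum_i \phi_i(p(i)-p(i+1))$ and invoking the pointwise inequality $\phi_i \leq C\eps\cdot i$ (whenever $\phi_i\geq 2$) together with $\sum_i i(p(i)-p(i+1))=1$. This is cleaner and shorter; the entire case split disappears. The technical obstacle you flag---verifying $a_j \geq |I_j|/(C\eps)$ uniformly, including for the first few $j$ with $|I_j|\geq 2$ where the floor is most disruptive---is real but routine: the $\Theta(1/\eps)$ singleton intervals preceding $j_0$ already force $a_{j_0}\geq (\log 2)/\eps$, which handles the small-$|I_j|$ range, and once $(1+\eps)^j$ exceeds a fixed multiple of $(1+\eps)^{j_0}$ the geometric sum of earlier sizes dominates. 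The paper's grouping-by-length argument is essentially establishing the same fact (each group has length $\Omega(1/\eps)$) but packaged differently; your formulation makes the dependence on position $i$ explicit and avoids the separate treatment of long intervals.
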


There is an analogous version of Theorem~\ref{thm:birge-oblivious}, asserting the existence of an ``oblivious''
partition for non-decreasing distributions (which is of course different from the ``oblivious''
partition ${\cal I}$ for non-increasing distributions of Theorem~\ref{thm:birge-oblivious}); this will be useful later.


While our construction is essentially that of Birg\'{e}, we
note that the version given in \cite{Birge:87b} is for
non-increasing distributions over the continuous domain $[0,n]$,
and it is phrased rather differently.  Adapting the arguments
of \cite{Birge:87b} to our discrete setting of
distributions over $[n]$ is not conceptually difficult but requires
some care.  For the sake of being self-contained we
provide a self-contained proof of the discrete version, stated above, that
we require in Appendix~\ref{ap:birge}.

\subsection{Efficiently testing monotone distributions} \label{sec:testmonotone}

Now we are ready to establish our upper bounds on testing monotone distributions
(given in the first four rows of Table~1).
All of the algorithms are essentially the same: each works by reducing the given monotone distribution testing problem to the same testing problem for arbitrary distributions over support of size $\ell = O(\log n / \eps)$
using the oblivious decomposition from the previous subsection.
For concreteness we explicitly describe the tester for the ``testing identity, $q$ is known'' case below, and then indicate the small changes that are necessary to get the testers for the other three cases.

\medskip

\ignore{\hskip-.2in}
\noindent\framebox{

\medskip \noindent \begin{minipage}{17cm}

\medskip

\textsc{ Test-Identity-Known-Monotone}

\noindent {\bf Inputs:} $\eps,\delta >0$; sample access to non-increasing
distribution $p$ over $[n]$; explicit description of non-increasing distribution
$q$ over $[n]$

\begin{enumerate}
\vspace{-.2cm}
\item Let $\mathcal{I} : = \{I_i\}_{i=1}^{\ell}$, with $\ell= \Theta (\log(\eps n+1)/\eps)$, be the partition of $[n]$ given by Theorem~\ref{thm:birge-oblivious}, which is a $(p',\eps/8,\ell)$-flat decomposition of $[n]$ for any non-increasing distribution $p'$.

\vspace{-.2cm}
\item Let $(q_r)^{\cal I}$ denote the reduced distribution over $[\ell]$ obtained from $q$ using
${\cal I}$ as defined in Section~\ref{sec:reduction}.
\vspace{-.2cm}
\item  Draw $m = s_{IK}(\ell,\eps/2,\delta)$  samples 
        from $(p_r)^{\cal I}$, where $(p_r)^{\cal I}$ is the reduced distribution over $[\ell]$ obtained from $p$ using ${\cal I}$ as defined in Section~\ref{sec:reduction}.

\vspace{-.2cm}
\item Output the result of \textsc{ Test-Identity-Known}$((p_r)^{\cal I}, (q_r)^{\cal I}, {\frac \eps 2},\delta)$ on the samples from Step~3.

\end{enumerate}
\end{minipage}}
\medskip

We now establish our claimed upper bound for the ``testing identity, $q$ is known'' case.
%
%
We first observe that in Step~3, the desired $m=s_{IK}(\ell,\eps/2,\delta)$ samples from $(p_r)^{\cal I}$ can easily be obtained by drawing $m$ samples from $p$ and converting each one to the corresponding draw from
$(p_r)^{\cal I}$ in the obvious way.
If $p = q$ then $(p_r)^{\cal I} = (q_r)^{\cal I}$, and \textsc{ Test-Identity-Known-Monotone} outputs ``accept'' with probability at least $1 - \delta$ by Theorem~\ref{thm:testidentityknown}.
If $\dtv(p,q) \geq \eps$, then by Lemma~\ref{lem:reduction}, Theorem~\ref{thm:birge-oblivious} and the triangle inequality, we have that $\dtv((p_r)^{\cal I},(q_r)^{\cal I}) \geq 3\eps/4$, so \textsc{ Test-Identity-Known-Monotone} outputs ``reject'' with probability at least $1 - \delta$ by Theorem~\ref{thm:testidentityknown}.
For the ``testing identity, $q$ is unknown'' case, the the algorithm \textsc{ Test-Identity-Unknown-Monotone} is very similar to \textsc{ Test-Identity-Known-Monotone}.  The differences are as follows:  instead of Step~2, in Step~3 we draw
$m = s_{IU}(\ell,\eps/2,\delta)$ samples from $(p_r)^{\cal I}$ and the same number
of samples from $(q_r)^{\cal I}$; and in Step~4, we run \textsc{ Test-Identity-Unknown}$((p_r)^{\cal I},(q_r)^{\cal I},
{\frac \eps 2},\delta)$ using the samples from Step~3.
The analysis is exactly the same as above (using Theorem~\ref{thm:testidentityunknown} in place
of Theorem~\ref{thm:testidentityknown}).

We now describe the algorithm \textsc{$L_1$-Estimate-Known-Monotone} for the ``tolerant testing, $q$ is known'' case. This algorithm takes values $\eps$ and $\delta$ as input, so the partition ${\cal I}$ defined in Step~1 is a $(p',\eps/4,\ell)$-flat decomposition of $[n]$ for any non-increasing $p'.$
In Step~3 the algorithm draws $m = s_{E}(\ell,\eps/2,\delta)$ samples
and runs \textsc{ $L_1$-Estimate}$((p_r)^{\cal I}, (q_r)^{\cal I},\eps/2,\delta)$ in Step~4.
If $\dtv(p,q)=c$ then by the triangle inequality we have that $\dtv((p_r)^{\cal I},(q_r)^{\cal I}) \in [c-\eps/2,c+\eps/2]$ and \textsc{ $L_1$-Estimate-Known-Monotone} outputs a value within the prescribed range with probability at least $1-\delta,$ by Theorem~\ref{thm:testtolerantknown}.
The algorithm \textsc{$L_1$-Estimate-Unknown-Monotone}
and its analysis are entirely similar.

\section{From Monotone to $k$-modal}

In this section we establish our main positive testing results for $k$-modal distributions, the upper bounds
stated in the final four rows of Table~1.  In the previous section, we were able to use the oblivious decomposition to yield a partition of $[n]$ into relatively few intervals, with the guarantee that the corresponding flattened distribution is close to the true distribution.  The main challenge in extending these results to unimodal or $k$-modal distributions, is that  in order to make the  analogous decomposition, one must first determine--by taking samples from the distribution--which regions are monotonically increasing vs decreasing.  Our algorithm \textsc{ Construct-Flat-Decomposition}$(p,\eps,\delta)$ performs this task with the following guarantee:

\begin{lemma} \label{lem:cfsd}
Let $p$ be a $k$-modal distribution over $[n]$.  Algorithm \textsc{ Construct-Flat-Decomposition}$(p,\eps,\delta)$
draws $O(k^2 \eps^{-4} \log(1/\delta))$ samples from $p$ and outputs a $(p,\eps,\ell)$-flat decomposition of $[n]$ with probability at least $1-\delta$, where $\ell = O(k \log(n)/\eps^2)$.
\end{lemma}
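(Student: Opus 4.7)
A $k$-modal distribution $p$ decomposes into at most $k{+}1$ maximal monotone pieces, separated by its extrema; on any such piece Theorem~\ref{thm:birge-oblivious} already yields an oblivious flat decomposition of geometrically-sized intervals. The plan for \textsc{Construct-Flat-Decomposition} is therefore to use samples to locate the extrema of $p$ approximately, cut $[n]$ into $O(k)$ candidate monotone pieces, and apply Birg\'e's oblivious partition with the appropriate orientation within each of them.

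\noindent\textbf{Step 1 (DKW and atoms).} First I would draw $m = \Theta(k^2\eps^{-4}\log(1/\delta))$ samples and form the empirical distribution $\wh p$; by Theorem~\ref{thm:DKW}, with probability $\ge 1-\delta$ we have $\dk(p,\wh p) \le \eta := c\eps^2/k$ for a suitable constant $c$. Using the empirical quantiles of $\wh p$ I would partition $[n]$ into $T = O(k/\eps^2)$ consecutive \emph{atoms} $A_1,\ldots,A_T$, each of empirical mass $\approx \eps^2/k$. Kolmogorov-closeness then guarantees that each atom's true $p$-mass lies in an $O(\eta)$-band around $\eps^2/k$, and that the per-atom densities $d_i = \wh p(A_i)/|A_i|$ reflect the atom-scale density structure of $p$ up to the DKW noise.

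\noindent\textbf{Step 2 (monotone blocks).} Next I would sweep the atoms from left to right and merge them into $K = O(k)$ contiguous \emph{monotone blocks} $J_1,\ldots,J_K$, each tagged $\uparrow$ or $\downarrow$: a new block is opened only at an atom where the density sequence $(d_i)$ reverses by a margin that exceeds the DKW-implied estimation error at the two neighboring atoms. Since $p$ is $k$-modal, its own atom-level density sequence exhibits at most $k$ sign reversals, and no spurious reversal in $\wh p$ can exceed the noise threshold, so the greedy procedure produces only $K = O(k)$ blocks.

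\noindent\textbf{Step 3 (Birg\'e per block and error analysis).} Within each block $J_j$ I would apply Theorem~\ref{thm:birge-oblivious} with parameter $\Theta(\eps)$ in the correct orientation, producing $O(\eps^{-1}\log|J_j|)$ sub-intervals per block and $\ell = O(k\eps^{-1}\log n) \le O(k\eps^{-2}\log n)$ sub-intervals overall. To prove that the resulting partition $\mathcal{I}$ is $(p,\eps,\ell)$-flat, I would split the total flattening error into a sum over blocks and bound each block's contribution by (i) the Birg\'e flattening error for the \emph{closest} monotone distribution to $p_{J_j}$ in the correct orientation, which by Theorem~\ref{thm:birge-oblivious} is $O(\eps\cdot p(J_j))$ and sums over $j$ to $O(\eps)$, plus (ii) twice $p(J_j)\cdot\xi_j$, where $\xi_j$ is the TV distance between $p_{J_j}$ and that monotone approximation. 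The main obstacle is controlling $\sum_j p(J_j)\cdot\xi_j$: I would argue that any true extremum of $p$ hidden inside some $J_j$ must correspond to a local bump in $p$ of mass at most $O(\eta) = O(\eps^2/k)$ (otherwise the density test in Step~2 would have triggered a block boundary there), and since $p$ admits at most $k$ extrema in total, the missed bumps contribute at most $O(k\eta) = O(\eps^2) \le O(\eps)$ to the sum, completing the proof.
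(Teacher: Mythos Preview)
Your overall architecture (DKW, atoms, Birg\'e per piece) matches the paper's, but the central step---merging atoms into only $O(k)$ ``monotone blocks'' via a neighboring-density reversal test---is where the argument breaks, and this is exactly where the paper does something different.

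\textbf{The gap in Step~3(ii).} Your claim that ``any true extremum of $p$ hidden inside some $J_j$ must correspond to a local bump of mass $O(\eta)=O(\eps^2/k)$'' is false. Take $k=1$ and let $p$ be the symmetric triangle on $[n]$. Near the mode the atom size is $\Theta(n\eps^2)$, so the true density change between the mode atom and its neighbor is $\Theta(\eps^2/n)$, while the DKW-implied noise in $d_i$ at that atom is $\Theta(\eta/|A_i|)=\Theta(1/n)$. The signal-to-noise ratio is $\Theta(\eps^2)$, so for small $\eps$ no reversal exceeds your threshold and the whole domain becomes one block. But the triangle is at constant distance $\Omega(1)$ from every monotone distribution, so $\xi_j=\Omega(1)$ and your bound $\sum_j p(J_j)\xi_j=O(\eps)$ fails outright. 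More generally, any smooth unimodal $p$ has $p'=0$ at its mode, so the adjacent-atom density difference is second-order in the atom width and always loses to the DKW noise; missing the mode is the rule, not the exception. A related issue is that Step~2 never says how to tag a block whose density trend is ambiguous (as for any symmetric unimodal), and you have no analogue of the paper's ``$\bot$'' escape.

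\textbf{How the paper avoids this.} The paper does \emph{not} try to merge atoms into $O(k)$ global monotone blocks. It uses $O(k/\eps)$ atoms, each of mass $\Theta(\eps/k)$, and runs a per-atom orientation test \textsc{Orientation}$(\wh p,I)$ that compares the empirical conditional CDF on $I$ to the uniform CDF and outputs $\uparrow$, $\downarrow$, or $\bot$ (close to uniform). Birg\'e is then applied per atom, in the indicated direction, and an atom with output $\bot$ is kept whole. The error analysis never needs $\xi_j$ to be small: the key point is that at most $k$ atoms can fail to be monotone (since $p$ has at most $k$ extrema), and each such atom has mass only $O(\eps/k)$, so even with the trivial bound $\dtv\le 1$ on those atoms their total contribution is $O(\eps)$. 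This is why the paper's interval count is $O(k/\eps)\cdot O(\log n/\eps)=O(k\log(n)/\eps^2)$ rather than your $O(k\log(n)/\eps)$---the extra $1/\eps$ is precisely the price of making every piece light enough that mislabeled pieces are harmless.
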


The bulk of our work in Section~\ref{sec:testkmodal} is to describe \textsc{ Construct-Flat-Decomposition}$(p,\eps,\delta)$ and prove Lemma~\ref{lem:cfsd}, but first we show how Lemma~\ref{lem:cfsd} yields our claimed testing results for $k$-modal distributions.
As in the monotone case all four algorithms are essentially the same:  each works by reducing the given $k$-modal distribution testing problem to the same testing problem for arbitrary distributions over $[\ell].$ One slight complication is that the partition obtained for distribution $p$ will generally differ from that for $q$.  In the monotone distribution setting, the partition was oblivious to the distributions, and thus this concern did not arise.  Naively, one might hope that the flattened distribution corresponding to any refinement of a partition will be at least as good as the flattened distribution corresponding to the actual partition.  This hope is easily seen to be strictly false, but we show that it is true up to a factor of 2, which suffices for our purposes.

The following terminology will be useful:  Let ${\cal I}= \{I_i\}_{i=1}^r$ and ${\cal I}'=\{I'_i\}_{i=1}^s$ be two
partitions of $[n]$ into $r$ and $s$ intervals respectively.  The \emph{common refinement} of ${\cal I}$
 and ${\cal I}'$ is the partition ${\cal J}$ of $[n]$ into intervals obtained from ${\cal I}$ and ${\cal I}'$ in the obvious way, by taking all possible nonempty intervals of the form
 $I_i \cap I'_j.$  It is clear that ${\cal J}$ is both a refinement of ${\cal I}$ and of ${\cal I}'$
 and that the number of intervals $|{\cal J}|$ in ${\cal J}$ is at most $r+s.$  We prove the following lemma
 in Section~\ref{sec:reduction}:

\begin{lemma} \label{lem:refinement}
Let $p$ be any distribution over $[n]$, let
${\cal I}=\{I_i\}_{i=1}^{a}$ be a $(p,\eps,a)$-flat
decomposition of $[n]$, and let ${\cal J}=\{J_i\}_{i=1}^b$ be
a refinement of ${\cal I}.$ Then ${\cal J}$ is a $(p,2\eps,b)$-flat
decomposition of $[n]$.
\end{lemma}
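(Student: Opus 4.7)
The plan is to bound $\dtv(p,(p_f)^{{\cal J}})$ directly in terms of $\dtv(p,(p_f)^{{\cal I}})$ by working interval-by-interval within the coarser partition ${\cal I}$. The heart of the argument is the observation that on any interval $J$, replacing $p$ by its average value on $J$ is at most a factor of $2$ worse, in $L_1$, than replacing $p$ by any other constant $c$ on $J$. Concretely, if $a_J := p(J)/|J|$ then by the triangle inequality followed by Jensen's inequality,
\[
\sum_{i \in J} |p(i) - a_J| \;\le\; \sum_{i \in J}|p(i)-c| + |J|\cdot |c - a_J| \;\le\; 2\sum_{i \in J}|p(i)-c|,
\]
where the last step uses $|c - a_J| = \bigl|\tfrac{1}{|J|}\sum_{i\in J}(c - p(i))\bigr| \le \tfrac{1}{|J|}\sum_{i\in J}|c-p(i)|$.

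With this tool in hand, I would proceed as follows. Write
\[
2\dtv(p,(p_f)^{\cal J}) = \sum_{j=1}^{b}\sum_{i \in J_j}\bigl|p(i) - a_{J_j}\bigr|,
\qquad
2\dtv(p,(p_f)^{\cal I}) = \sum_{k=1}^{a}\sum_{i \in I_k}\bigl|p(i) - a_{I_k}\bigr|,
\]
where $a_{J_j} = p(J_j)/|J_j|$ and $a_{I_k} = p(I_k)/|I_k|$. Since ${\cal J}$ refines ${\cal I}$, every $J_j$ is contained in a unique $I_k$. Fix $k$ and apply the lemma above, with $J = J_j$ and $c = a_{I_k}$, to each $J_j \subseteq I_k$; summing over these $J_j$'s gives
\[
\sum_{J_j \subseteq I_k}\sum_{i \in J_j}\bigl|p(i) - a_{J_j}\bigr| \;\le\; 2\sum_{J_j \subseteq I_k}\sum_{i \in J_j}\bigl|p(i) - a_{I_k}\bigr| \;=\; 2\sum_{i \in I_k}\bigl|p(i) - a_{I_k}\bigr|.
\]
Summing over $k \in [a]$ yields $2\dtv(p,(p_f)^{\cal J}) \le 4\dtv(p,(p_f)^{\cal I}) \le 4\eps$, i.e.\ $\dtv(p,(p_f)^{\cal J}) \le 2\eps$, as required.

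The main (and only) obstacle is the factor-of-$2$ inequality bounding $\sum_{i\in J}|p(i)-a_J|$ by $2\sum_{i\in J}|p(i)-c|$; everything else is bookkeeping via the refinement structure. This factor of $2$ is in fact necessary (it comes from the mean not being the $L_1$-optimal constant — the median is — which is also the reason the lemma statement only gives $2\eps$ and not $\eps$), so no stronger statement is available from this approach. Once the factor-of-$2$ lemma is in place, summing within each $I_k$ and then across the coarse partition is immediate.
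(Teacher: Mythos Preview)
The proposal is correct and is essentially the same argument as the paper's proof. The only cosmetic difference is that the paper first shifts so that the coarse average $a_{I_k}$ equals $0$, whereas you keep the general constant $c$; in both cases the key step is the triangle-inequality bound $|J|\cdot|a_J - c| \le \sum_{i\in J}|p(i)-c|$, applied to each refined block and then summed within each coarse interval and across the partition.
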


 We describe the \textsc{ Test-Identity-Known-kmodal} algorithm below. \ignore{and then indicate the necessary changes to get the other three testers.}


\ignore{(Note that it uses samples from $q$
even though it has an explicit description of $q$; this is for consistency with the
later \textsc{ Test-kmodal-Unknown} algorithm, and does not hurt the sample
complexity of \textsc{ Test-kmodal-Known}.)}

\medskip

\ignore{\hskip-.2in}

\noindent \framebox{

\medskip \noindent \begin{minipage}{17cm}

\medskip

\textsc{ Test-Identity-Known-kmodal}

\noindent {\bf Inputs:} $\eps,\delta >0$; sample access to $k$-modal
distributions $p,q$ over $[n]$

\begin{enumerate}

\vspace{-.2cm} \item Run \textsc{ Construct-Flat-Decomposition}$(p,\eps/2,\delta/4)$ and let ${\cal I}=\{I_i\}_{i=1}^{\ell},$
    $\ell = O(k\log(n)/\eps^2)$, be the partition that it outputs.  Run
    \textsc{ Construct-Flat-Decomposition}$(p,\eps/2,\delta/4)$ and let ${\cal I}'=\{I'_i\}_{i=1}^{\ell'},$
    $\ell' = O(k\log(n)/\eps^2)$, be the partition that it outputs.  Let ${\cal J}$ be the
    common refinement of ${\cal I}$ and ${\cal I}$ and
    let $\ell_{\cal J} = O(k \log(n)/\eps^2)$ be the number of intervals in ${\cal J}.$

\vspace{-.2cm} \item Let $(q_r)^{\cal J}$ denote the reduced distribution over $[\ell_{\cal J}]$ obtained from $q$
using ${\cal J}$ as defined in Section~\ref{sec:reduction}.

\vspace{-.2cm} \item  Draw $m = s_{IK}(\ell_{\cal J},\eps/2,\delta/2)$ samples 
        from $(p_r)^{\cal J}$, where $(p_r)^{\cal J}$ is the reduced distribution over $[\ell_{\cal J}]$ obtained from $p$ using ${\cal J}$ as defined in Section~\ref{sec:reduction}.


\vspace{-.2cm} \item Run \textsc{ Test-Identity-Known}$((p_r)^{\cal J}, (q_r)^{\cal J}, {\frac \eps 2},{\frac \delta 2})$ using the samples from Step~3 and output what it outputs.


\end{enumerate}
\end{minipage}}
\medskip

We note that Steps~2, 3 and~4 of \textsc{ Test-Identity-Known-kmodal} are the same as the corresponding
steps of \textsc{ Test-Identity-Known-Monotone}.  For the analysis of \textsc{ Test-Identity-Known-kmodal},
 Lemmas~\ref{lem:cfsd} and~\ref{lem:refinement} give us that with probability $1 - \delta/2$, the partition
${\cal J}$ obtained in Step~1 is both a $(p,\eps,\ell_{\cal J})$-flat and $(q,\eps,\ell_{\cal J})$-flat decomposition of $[n]$;
we condition on this going forward.  From this point on the analysis is essentially identical to
the analysis for \textsc{ Test-Identity-Known-Monotone}
and is omitted.

The modifications required to obtain
algorithms \textsc{ Test-Identity-Unknown-kmodal}, \textsc{ $L_1$-Estimate-Known-kmodal} and
\textsc{ $L_1$-Estimate-Unknown-kmodal}, and the analysis of these algorithms, are completely analogous to the
modifications and analyses of Section~\ref{sec:testmonotone} and are omitted.

\subsection{The \textsc{ Construct-Flat-Decomposition} algorithm.}
We present \textsc{ Construct-Flat-Decomposition}$(p,\eps,\delta)$ followed by an intuitive explanation.  Note that it employs a procedure \textsc{ Orientation}$(\wh{p},I)$, which uses no samples and is presented and analyzed in Section~\ref{sec:kmodaltools}.

\medskip

\noindent\framebox{

\medskip \noindent \begin{minipage}{17cm}

\medskip

\textsc{ Construct-Flat-Decomposition}

\noindent \textsc{ Inputs:} $\eps,\delta >0$; sample access to $k$-modal
distribution $p$ over $[n]$

\begin{enumerate}

\vspace{-.2cm} \item Initialize ${\cal I} := \emptyset.$

\vspace{-.2cm} \item Fix $\tau:=\eps^2/(20000k)$. Draw $ r = \Theta(\log(1/\delta)/\tau^2)$ samples
from $p$ and let $\wh{p}$ denote the resulting empirical distribution (which by Theorem~\ref{thm:DKW} has
$\dk(\wh{p},p) \leq \tau$ with probability at least $1-\delta$).

\vspace{-.2cm} \item Greedily partition the domain $[n]$ into $\alpha$ {\em atomic intervals} $\{I_i\}_{i=1}^{\alpha}$ as follows:
$I_1 := [1, j_1]$, where $j_1 := \min\{ j \in [n] \mid \wh{p}([1,j]) \geq \eps/(100k)\}$.
For $i \geq 1$, if $\cup_{j=1}^i I_j = [1, j_i]$, then
$I_{i+1}:=[j_i+1, j_{i+1}]$, where $j_{i+1}$ is defined as follows:
If $\wh{p}([j_i+1, n]) \geq \eps/(100k)$, then $j_{i+1}: = \min\{ j \in [n]
\mid \wh{p}([j_{i}+1, j]) \geq \eps/(100k)\}$, otherwise, $j_{i+1} := n$.

\vspace{-.2cm} \item Construct a set of $n_m$ {\em moderate intervals}, a set of
$n_h$ {\em heavy points}, and a set of $n_n$ {\em
negligible intervals} as follows:  For each atomic interval $I_i = [a,b] $,
\begin{itemize}

\vspace{-.2cm} \item [(a)] if $\wh{p}([a,b]) \leq 3\eps/(100k)$ then $I_i$ is declared to be
a \emph{moderate} interval;

\vspace{-.2cm} \item [(b)] otherwise we have $\wh{p}([a,b]) > 3\eps/(100k)$ and we declare
$b$ to be a \emph{heavy point}.  If $a < b$ then we declare $[a,b-1]$
to be a \emph{negligible} interval.
\end{itemize}

For each interval $I$ which is a heavy point, add $I$ to ${\cal I}.$
Add each negligible interval $I$  to ${\cal I}.$

\vspace{-.2cm} \item For each moderate interval $I$, run procedure \textsc{ Orientation}$(\wh{p},I)$; let $\circ \in \{\uparrow,\downarrow,\bot\}$ be its output.

     If $\circ = \bot$ then add $I$  to ${\cal I}.$

    If $\circ = \downarrow$ then let ${\cal J}_I$ be the partition of $I$ given
    by Theorem~\ref{thm:birge-oblivious} which is a $(p',\eps/4,O(\log(n)/\eps))$-flat decomposition of
    $I$ for any non-increasing distribution $p'$ over $I.$  Add all the elements of ${\cal J}_I$ to ${\cal I}.$

    If $\circ = \uparrow$ then let ${\cal J}_I$ be the partition of $I$ given
    by the dual version of Theorem~\ref{thm:birge-oblivious}, which is a $(p',\eps/4,O(\log(n)/\eps))$-flat decomposition of $I$ for any non-decreasing distribution $p'$ over $I.$ Add all the elements of ${\cal J}_I$ to ${\cal I}.$

\vspace{-.2cm} \item Output the partition ${\cal I}$ of $[n]$.

\end{enumerate}
\end{minipage}}

\medskip

Roughly speaking, when \textsc{ Construct-Flat-Decomposition} constructs a partition ${\cal I}$, it initially breaks $[n]$ up into
two types of intervals.  The first type are intervals that are ``okay'' to include in
a flat decomposition, either because they have very little mass, or because they
consist of a single point, or because they are close to uniform.  The second type are intervals
that are ``not okay'' to include in a flat decomposition -- they have significant mass
and are far from uniform -- but the algorithm is able to ensure that almost all of these are monotone
distributions with a known orientation.  It then uses the oblivious decomposition of
Theorem~\ref{thm:birge-oblivious} to construct a flat decomposition of each such interval.  (Note that
it is crucial that the orientation is known in order to be able to use Theorem~\ref{thm:birge-oblivious}.)

In more detail, \textsc{ Construct-Flat-Decomposition}$(p,\eps,\delta)$
works  as follows.  The algorithm first draws a batch of samples from $p$ and uses them to construct an
estimate $\wh{p}$ of the CDF of $p$ (this is straightforward using the DKW
inequality). Using $\wh{p}$ the algorithm partitions $[n]$ into a collection
of $O(k/\eps)$ disjoint intervals in the following way:
\begin{itemize}
\vspace{-.2cm} \item A small collection of the intervals are ``negligible''; they collectively have total mass less than $\eps$ under $p$.  Each negligible interval $I$ will be an element of the partition ${\cal I}.$

\vspace{-.2cm} \item Some of the intervals are ``heavy points''; these are intervals consisting of a single point
that has mass $\Omega(\eps/k)$ under $p$.  Each heavy point $I$ will also be an element of the
partition ${\cal I}.$

\vspace{-.2cm} \item The remaining intervals are ``moderate'' intervals, each of which has mass $\Theta(\eps/k)$ under $p$.
\end{itemize}

It remains to incorporate the moderate intervals into the partition ${\cal I}$ that is being constructed.
This is done as follows:  using $\wh{p}$, the algorithm comes up with a ``guess'' of the correct orientation (non-increasing, non-decreasing, or close to uniform) for each moderate interval.  Each moderate interval where the ``guessed'' orientation is ``close to uniform'' is included in the partition ${\cal I}.$ Finally, for each moderate interval $I$ where the guessed orientation is ``non-increasing'' or ``non-decreasing'', the algorithm invokes Theorem~\ref{thm:birge-oblivious} on $I$ to perform the oblivious decomposition for monotone distributions, and the resulting sub-intervals are included in ${\cal I}$.
The analysis will show that the guesses are almost always
correct, and intuitively this should imply that  the ${\cal I}$ that is constructed is indeed a $(p,\eps,\ell)$-flat decomposition of $[n]$.

\subsection{The \textsc{ Orientation} algorithm.} \label{sec:kmodaltools}

The \textsc{ Orientation} algorithm takes as input an explicit distribution of a distribution $\wh{p}$ over
$[n]$ and an interval $I \subseteq [n].$  Intuitively, it assumes that $\wh{p}_I$ is close
(in Kolmogorov distance) to a monotone distribution $p_I$, and its goal is to determine the orientation of $p_I$: it outputs either $\uparrow$, $\downarrow$ or $\bot$ (the last of which means ``close to uniform'').
The algorithm is quite simple; it checks whether there exists an
initial interval $I'$ of $I$ on which $\wh{p}_I$'s weight is significantly different from $u_I(I')$ (the weight
that the uniform distribution over $I$ assigns to $I'$) and bases its output on this in the obvious way.  A precise description of the algorithm (which uses no samples) is given below.

\bigskip

\noindent \framebox{

\medskip \noindent \begin{minipage}{17cm}

\medskip

\textsc{ Orientation}

\noindent \textsc{ Inputs:} explicit description of distribution $\wh{p}$ over $[n]$; interval
$I =[a,b] \subseteq [n]$

\smallskip

\begin{enumerate}

\vspace{-.2cm} \item If $|I|=1$ (i.e. $I = \{a\}$ for some $a \in [n]$) then return ``$\bot$'', otherwise continue.

\vspace{-.2cm} \item If there is an initial interval $I'=[a,j]$ of $I$ that satisfies
$
u_I(I')-(\wh{p})_I(I') > {\frac \eps 7}
$
then halt and output ``$\uparrow$''.  Otherwise,

\vspace{-.2cm} \item If there is an initial interval $I'=[a,j]$ of $I$ that satisfies
$
u_I(I')-(\wh{p})_I(I') < - {\frac \eps 7}$
then halt and output ``$\downarrow$''.  Otherwise,

\vspace{-.2cm} \item Output ``$\bot$''.

\end{enumerate}

\end{minipage}}
\medskip

We proceed to analyze \textsc{ Orientation}. We show that if $p_I$ is far from uniform then \textsc{ Orientation} outputs the correct orientation for it. We also show that whenever \textsc{ Orientation} does not output ``$\bot$'', whatever it outputs is the correct orientation of $p_I$. The proof is given in Section~\ref{sec:kmodaltools2}.

\begin{lemma} \label{claim:orientation}
Let $p$ be a distribution over $[n]$ and let interval $I =[a,b] \subseteq [n]$ be such that $p_I$ is monotone.
Suppose $p(I) \geq 99\eps/(10000k)$, and suppose that for every interval $I' \subseteq I$ we have that $|\wh{p}(I')-p(I')| \leq {\frac {\eps^2}{10000k}}.$
Then
\begin{enumerate}
\vspace{-.2cm} \item If $p_I$ is non-decreasing and $p_I$ is $\eps/6$-far from the uniform distribution
$u_I$ over $I$, then \textsc{ Orientation}$(\wh{p},I)$ outputs ``$\uparrow$'';

\vspace{-.2cm} \item if \textsc{ Orientation}$(\wh{p},I)$ outputs ``$\uparrow$'' then $p_I$ is non-decreasing;

\vspace{-.2cm} \item if $p_I$ is non-increasing and $p_I$ is $\eps/6$-far from the uniform distribution
$u_I$ over $I$, then \textsc{ Orientation}$(\wh{p},I)$ outputs ``$\downarrow$'';

\vspace{-.2cm} \item if \textsc{ Orientation}$(\wh{p},I)$ outputs ``$\downarrow$'' then $p_I$ is non-increasing.
\end{enumerate}
\end{lemma}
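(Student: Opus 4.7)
The plan is to first transfer the Kolmogorov-type closeness between $\wh{p}$ and $p$ at the interval level to a similar closeness between the conditional distributions $(\wh{p})_I$ and $p_I$, and then exploit the monotonicity of $p_I$ via Observation~\ref{obs:mon} to express the total variation distance $\dtv(p_I, u_I)$ as a signed discrepancy on an initial interval, which is exactly what the algorithm inspects.

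First, for any initial interval $I' \subseteq I$, I would write
\[
(\wh{p})_I(I') - p_I(I') \;=\; \frac{\wh{p}(I')}{\wh{p}(I)} - \frac{p(I')}{p(I)}
\;=\; \frac{(\wh{p}(I')-p(I'))\,p(I) - p(I')\,(\wh{p}(I)-p(I))}{\wh{p}(I)\,p(I)},
\]
and then use $p(I) \ge 99\eps/(10000k)$ together with the pointwise hypothesis $|\wh{p}(I')-p(I')| \le \eps^2/(10000k)$ (applied with $I'=I$) to conclude $\wh{p}(I) \ge 98\eps/(10000k)$. Plugging these bounds in gives $|(\wh{p})_I(I') - p_I(I')| \le \eps/49$ uniformly over all initial intervals $I'\subseteq I$. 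This is the only place where the quantitative hypotheses are used, and it is the main bookkeeping step.

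Next, for parts (1) and (3), suppose $p_I$ is non-decreasing (the non-increasing case is symmetric). Since $u_I$ is constant and $p_I$ is non-decreasing, there is a threshold index $t \in I$ such that $p_I(i) \le u_I(i)$ for $i < t$ and $p_I(i) \ge u_I(i)$ for $i \ge t$; consequently
\[
\dtv(p_I,u_I) \;=\; \sum_{i \ge t}(p_I(i)-u_I(i)) \;=\; u_I([a,t-1]) - p_I([a,t-1]).
\]
Hence the initial interval $I'_\star := [a,t-1]$ satisfies $u_I(I'_\star)-p_I(I'_\star) \ge \eps/6$. Combining with Step~1, $u_I(I'_\star) - (\wh{p})_I(I'_\star) \ge \eps/6 - \eps/49 > \eps/7$, so \textsc{Orientation} outputs ``$\uparrow$''.

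Finally, for the converse parts (2) and (4), I would argue by contrapositive. Suppose \textsc{Orientation} outputs ``$\uparrow$'' for some initial interval $I'$ witnessing $u_I(I') - (\wh{p})_I(I') > \eps/7$. By Step~1, $u_I(I') - p_I(I') > \eps/7 - \eps/49 > 0$. But Observation~\ref{obs:mon} says that if $p_I$ were non-increasing, then $u_I(I') \le p_I(I')$ for every initial interval $I'$, a contradiction. Since $p_I$ is monotone by hypothesis, it must therefore be non-decreasing; the argument for ``$\downarrow$'' is identical with inequalities reversed. The main (minor) obstacle throughout is simply verifying that the constants $99/10000$, $1/10000$, $1/6$, $1/7$, and $1/49$ fit together, which the choice of $\tau$ in \textsc{Construct-Flat-Decomposition} is precisely calibrated to ensure.
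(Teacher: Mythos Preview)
Your proposal is correct and follows essentially the same route as the paper: first establish the uniform bound $|p_I(I')-(\wh{p})_I(I')|\le \eps/49$ on initial intervals, then use the single-crossing structure of a monotone $p_I$ against $u_I$ to produce a witnessing initial interval, and finally invoke Observation~\ref{obs:mon} for the converse direction. Your algebraic derivation of the $\eps/49$ bound is a clean variant of the paper's computation and gives the identical constant.

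One small point you gloss over: in Part~3 the situation is not literally ``symmetric'' to Part~1, because \textsc{Orientation} tests the ``$\uparrow$'' condition in Step~2 \emph{before} the ``$\downarrow$'' condition in Step~3. So when $p_I$ is non-increasing and far from uniform, you must additionally argue that Step~2 does not fire, i.e., that no initial interval $I'$ satisfies $u_I(I')-(\wh{p})_I(I')>\eps/7$. The paper handles this explicitly via Observation~\ref{obs:mon} and the $\eps/49$ bound (giving $u_I(I')-(\wh{p})_I(I')\le \eps/49<\eps/7$). You already have this implicitly: it is exactly the contrapositive of your Part~2 argument, applied before the symmetric version of the Part~1 argument. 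Just make that invocation explicit and the proof is complete.
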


\section{Lower Bounds}

Our algorithmic results follow from a reduction which shows how one can reduce the problem of testing properties of monotone or $k$-modal distributions to the task of testing properties of general distributions over a much smaller support.  Our approach to proving lower bounds is complementary; we give a canonical scheme for transforming ``lower bound instances'' of general distributions to related lower bound instances of monotone distributions with much larger supports.

A generic lower bound instance for distance estimation has the following form: there is a distribution $D$ over  \emph{pairs} of distributions, $(p,p')$, with the information theoretic guarantee that, given $s$ independent samples from distributions $p$ and $p'$, with $(p,p')\leftarrow D,$ it is impossible to distinguish the case that $\dtv(p,p') \le \eps_1$ versus $\dtv(p,p') > \eps_2$ with any probability greater than $1-\delta$, where the probability is taken over both the selection of $(p,p')\leftarrow D$ and the choice of samples.  In general, such information theoretic lower bounds are difficult to prove.  Fortunately, as mentioned above, we will be able to prove lower bounds for monotone and $k$-modal distributions by leveraging the known lower bound constructions in a black-box fashion.

Definitions~\ref{def:lb} and ~\ref{def:lb2}, given below, define a two-stage transformation of a generic distribution into a related $k$-modal distribution over a much larger support.  This transformation preserves total variation distance:  for any pair of distributions, the variation distance between their transformations is identical to the variation distance between the original distributions.  Additionally, we ensure that given access to $s$ independent samples from an original input distribution, one can \emph{simulate} drawing $s$ samples from the related $k$-modal distribution yielded by the transformation.  Given any lower--bound construction $D$ for general distributions, the above transformation will yield a lower--bound instance $D_k$ for $(k-1)$-modal distributions (so monotone distributions correspond to $k=1$) defined by selecting a pair of distributions $(p,p')\leftarrow D,$ then outputting the pair of transformed distributions.  This transformed ensemble of distributions is a lower--bound instance, for if some algorithm could successfully test pairs of $(k-1)$-modal distributions from $D_k,$ then that algorithm could be used to test pairs from $D$, by \emph{simulating} samples drawn from the transformed versions of the distributions.
The following proposition, proved in Section~\ref{sec:proplb}, summarizes the above discussion:

\begin{prop}\label{prop:lb}
Let $D$ be a distribution over pairs of distributions supported on $[n]$ such that given $s$ samples from distributions $p,p'$ with $(p,p') \leftarrow D,$ no algorithm can distinguish whether $\dtv(p,p')\le \eps_1$ versus $\dtv(p,p') > \eps_2$ with probability greater than $1-\delta$ (over both the draw of $(p,p')$ from
$D$ and the draw of samples from $p,p'$).  Let $p_{max},p_{min}$ be the respective maximum and minimum probabilities with which any element arises in distributions that are supported in $D$.    Then there exists a distribution $D_k$ over pairs of $(k-1)$-modal distributions supported on $[N] = [4ke^{\frac{8n}{k}\left(1+ \log(p_{max}/p_{min}) \right)}]$  such that no algorithm, when given $s$ samples from distributions $p_k,p'_k,$ with $(p_k,p'_k) \leftarrow D_k$,  can distinguish whether $\dtv(p_k,p'_k)\le \eps_1$ versus $\dtv(p_k, p'_k) > \eps_2$ with success probability greater than $1-\delta$. \end{prop}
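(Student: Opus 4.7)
The plan is to establish the proposition by verifying that the two-stage transformation of Definitions~\ref{def:lb} and~\ref{def:lb2} possesses three properties, from which the conclusion follows via a black-box reduction. The three properties are: (i) for every pair $(p,p')$ in the support of $D$, the transformed pair $(p_k,p'_k)$ satisfies $\dtv(p_k,p'_k) = \dtv(p,p')$; (ii) one sample from $p_k$ can be simulated using a single sample from $p$ together with internal randomness, without any additional access to $p$; and (iii) $p_k$ is $(k-1)$-modal and supported on $[N]$ for the stated $N$. Granted (i)--(iii), any algorithm $A$ that distinguishes $\dtv(p_k,p'_k)\le\eps_1$ from $\dtv(p_k,p'_k)>\eps_2$ using $s$ samples with success probability exceeding $1-\delta$ can be converted into an algorithm $B$ for the original problem by drawing $s$ samples from $p,p'$, passing the simulated samples to $A$, and returning $A$'s answer; by (i) the threshold parameters are preserved, and by (ii) the sample budget is preserved, contradicting the hypothesis on $D$.

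The first two properties should drop out of the block structure I expect the definitions to use: each element $i\in[n]$ is associated with a fixed contiguous block $B_i\subseteq[N]$, and the mass $p(i)$ is spread inside $B_i$ according to a fixed template density $t_i$ (a distribution on $B_i$) that depends only on $i$, not on $p$. Then $p_k(j)=p(i)\cdot t_i(j)$ for $j\in B_i$, so $\sum_{j\in B_i}|p_k(j)-p'_k(j)|=|p(i)-p'(i)|$ and summing over $i$ gives (i); property (ii) is realized by sampling $i\sim p$ and then $j\sim t_i$ from the known template.

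The substantive content is (iii). The plan is to partition $[n]$ into $k$ consecutive groups of size $n/k$ and, within each group, use blocks of geometrically growing length $|B_i|\asymp e^{ci}$ with template $t_i$ uniform on $B_i$, for a constant $c=\Theta(1+\log(p_{max}/p_{min}))$. Then the density of $p_k$ on $B_i$ is $p(i)/|B_i|\asymp p(i)\,e^{-ci}$; taking $c$ larger than $\log(p_{max}/p_{min})$ forces the ratio of consecutive densities to lie strictly on one side of $1$, so each group produces a monotone piece of $p_k$. Alternating the orientation of the geometric growth across the $k$ groups yields a piecewise monotone distribution with at most $k-1$ modes at the group boundaries. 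A per-group support of $\sum_{i\le n/k}e^{ci}=O(e^{cn/k})$ and the choice $c=8(1+\log(p_{max}/p_{min}))$, with a factor $4k$ of slack, yield total support size $N=4k\,e^{8(n/k)(1+\log(p_{max}/p_{min}))}$ as claimed.

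The main obstacle I anticipate is the bookkeeping at the group boundaries: one must verify that the density at the \emph{last} block of a non-increasing group is at least (respectively at most) the density at the \emph{first} block of the adjacent non-decreasing group, so that each boundary contributes exactly one mode rather than violating $(k-1)$-modality or collapsing into a monotone stretch. This reduces to choosing the initial block length in each group appropriately, using the crude bounds $p_{min}\le p(i)\le p_{max}$; the slack factor of $4k$ in $N$ is precisely what absorbs this choice. The remaining verifications are routine consequences of the template-based construction.
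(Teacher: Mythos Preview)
Your high-level reduction strategy---verify that the transformation preserves $\dtv$, is sample-simulable, and lands in the right class with the right support---is exactly the paper's strategy, and your arguments for (i) and (ii) are essentially the paper's Lemmas~\ref{lb:l1pres} and~\ref{lb:samp}. However, the concrete construction you guess for Definitions~\ref{def:lb} and~\ref{def:lb2} is \emph{not} the paper's. You propose a single-stage scheme: split $[n]$ into $k$ groups of size $n/k$, give point $i$ a uniform block of length $\asymp e^{ci}$ with $c>\log(p_{max}/p_{min})$, and alternate the direction of growth across groups. This does work, and in fact yields a genuinely $(k-1)$-modal distribution, matching the proposition as stated. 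The paper instead uses a \emph{two}-stage transformation: Definition~\ref{def:lb} first replaces each $i\in[n]$ by $c=\Theta(\log(p_{max}/p_{min}))$ points carrying a geometric profile $q$, producing an intermediate $f$ with the uniform ratio bound $f(j)/f(j-1)\le 1+\eps$; Definition~\ref{def:lb2} then replaces point $j$ of $f$ by $a_{j\bmod r}$ uniform copies, with $a_\ell$ growing like $(1+\eps)^\ell$ and $r=\lceil m/k\rceil$. All $k$ pieces are non-increasing (no alternation), so the output is actually $2(k-1)$-modal (see Lemma~\ref{lb:kmod}); the $(k-1)$ in the proposition is a slip relative to the lemmas and corollaries. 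Your construction is more direct; the paper's gains a clean intermediate object (a distribution with bounded consecutive ratios), at the cost of an extra stage and a weaker modality bound.

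One remark on your ``main obstacle'': for your alternating scheme the boundary bookkeeping you worry about is unnecessary. If a non-increasing piece meets a non-decreasing piece, the junction contributes at most one min-interval regardless of which side's endpoint density is larger (and symmetrically for the other boundary type), and ``collapsing into a monotone stretch'' only lowers the mode count. So no special choice of initial block length is needed to stay within $(k-1)$ modes; the factor-of-$4k$ slack in $N$ is consumed instead by the $k$ groups and by integer rounding of the geometrically growing block lengths.
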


Before proving this proposition, we state various corollaries which result from applying the Proposition to known lower-bound constructions for general distributions.  The first is for the ``testing identity, $q$ is
unknown'' problem:

\begin{corollary}
There exists a constant $c$ such that for sufficiently large $N$ and $1 \leq k=O(\log N)$,  there is a distribution $D_k$ over pairs of $2(k-1)$-modal distributions $(p,p')$ over $[N]$, such that \emph{no} algorithm, when given $c \left(\frac{k \log N}{\log \log N} \right)^{2/3}$ samples from a pair of distributions $(p,p') \leftarrow D,$ can distinguish the case that $\dtv(p,p')=0$ from the case $\dtv(p,p')> .5$ with probability at least $.6$.
\end{corollary}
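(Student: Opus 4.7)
The plan is to apply Proposition~\ref{prop:lb} in a black-box fashion to a standard lower-bound construction for testing identity of two unknown general distributions, and then solve the resulting equation $N = 4ke^{(8n/k)(1+\log(p_{max}/p_{min}))}$ for $n$ in terms of $N$ and $k$.

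First, I would invoke the known lower bound of P.~Valiant \cite{PValiant:08}: there is a distribution $D$ over pairs $(p,p')$ of (unrestricted) distributions over $[n]$ such that no algorithm can distinguish $\dtv(p,p')=0$ from $\dtv(p,p')>1/2$ with success probability exceeding $0.6$ using fewer than $c_0 n^{2/3}$ samples, for some absolute constant $c_0>0$. The particular construction has the property that every element of the support carries probability mass in a range $[1/(\mathrm{poly}(n) \cdot n),\, \mathrm{poly}(n)/n]$, so that $\log(p_{max}/p_{min}) = \Theta(\log n)$. (If a Valiant instance with a tighter ratio were used the bound would only improve; the $\log\log$ factor in the conclusion is precisely what arises from this polynomial ratio.)

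Second, I would apply Proposition~\ref{prop:lb} to $D$ with parameters $\eps_1 = 0$, $\eps_2 = 1/2$, $\delta = 0.4$, and modal parameter $2k-1$ in place of $k$ (so that the transformed distributions are $(2k-2)$-modal, matching the "$2(k-1)$-modal" in the statement). This yields a distribution $D_k$ over pairs of $2(k-1)$-modal distributions on $[N]$ where
\[
N \;=\; 4(2k-1)\,\exp\!\left(\tfrac{8n}{2k-1}\bigl(1+\log(p_{max}/p_{min})\bigr)\right).
\]
Taking logarithms and using $\log(p_{max}/p_{min}) = \Theta(\log n)$ together with $k = O(\log N)$ (so that $\log n = \Theta(\log\log N)$ in the relevant regime), I would solve for $n$ and obtain
\[
n \;=\; \Theta\!\left(\frac{k\log N}{\log\log N}\right).
\]

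Third, I would translate back: Proposition~\ref{prop:lb} guarantees that any sample-access algorithm for $D_k$ immediately yields a sample-access algorithm for $D$ with the same sample size and success probability, so the $\Omega(n^{2/3})$ impossibility for $D$ transfers verbatim to $D_k$, giving
\[
s \;<\; c_0\, n^{2/3} \;=\; c\left(\frac{k\log N}{\log\log N}\right)^{2/3},
\]
for a suitable constant $c$, as the sample threshold below which no tester succeeds with probability $0.6$. The main obstacle in executing this plan is the bookkeeping of step~2: verifying that the specific Valiant construction satisfies $\log(p_{max}/p_{min}) = \Theta(\log n)$ so that the $\log\log N$ factor emerges in the right place, and checking that the replacement of $k$ by $2k-1$ (or a similar adjustment) in Proposition~\ref{prop:lb} yields exactly $2(k-1)$-modal distributions while only affecting the hidden constant $c$.
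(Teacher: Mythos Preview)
Your proposal is correct and takes essentially the same approach as the paper: apply Proposition~\ref{prop:lb} to the \cite{BFR+:00,Valiant08stoc} lower-bound construction and solve for $n$ in terms of $N$ and $k$. The paper resolves your stated bookkeeping obstacle by quoting the explicit bounds $p_{max}=1/n^{2/3}$ and $p_{min}=1/(2n)$ from that construction (so $\log(p_{max}/p_{min})=\Theta(\log n)$ is immediate), and your substitution of $2k-1$ for $k$ is unnecessary since the transformation behind Proposition~\ref{prop:lb} already produces $2(k-1)$-modal distributions when invoked with parameter $k$ (see Lemma~\ref{lb:kmod}).
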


This Corollary gives the lower bounds stated in lines~2 and~6 of Table~1.  It follows from applying Proposition~\ref{prop:lb} to a (trivially modified) version of the lower bound construction given in~\cite{BFR+:00,Valiant08stoc}, summarized by the following theorem:

\begin{theorem}[\cite{BFR+:00,Valiant08stoc}]
There exists a constant $c$ such that for sufficiently large $n$, there is a distribution $D$ over pairs of distributions $(p,p')$ over $[n]$, such that for any $(p,p') \leftarrow D,$ the maximum probability with which any element occurs in $p$ or $p'$ is $\frac{1}{n^{2/3}},$ and the minimum probability is $\frac{1}{2n}$.  Additionally, no algorithm, when given $c n^{2/3}$ samples from $(p,p')\leftarrow D,$ can distinguish whether $\dtv(p,p')=0,$ from $\dtv(p,p')>.5$ with probability at least $.6$.\end{theorem}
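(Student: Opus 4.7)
The plan is to construct $D$ as an equal mixture of a YES ensemble (in which $p=p'$) and a NO ensemble (in which $\dtv(p,p')>1/2$), tuned so that the joint sample distributions in the two cases are statistically indistinguishable to any algorithm restricted to $cn^{2/3}$ samples. Concretely, I would use a two-scale ``bulk plus heavy'' shape: designate $m=\Theta(n^{2/3})$ positions as heavy, each carrying probability $1/n^{2/3}$, and let the remaining $\approx n$ positions be bulk, each carrying probability close to $1/(2n)$; constants are tuned so that the total mass is $1$ and the stated max/min bounds are met. The YES instance is obtained by choosing the heavy set uniformly at random and setting $p=p'$. The NO instance picks two random heavy sets $A$ for $p$ and $B$ for $p'$ that are (nearly) disjoint, with the bulk the same in both distributions, so that $\dtv(p,p')$ exceeds $1/2$ after a careful choice of parameters. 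Because the construction is fully symmetric in the $n$ positions, any algorithm's decision is a function of the \emph{fingerprint}---the multiset of observed count pairs $(C_i,C'_i)$ across positions, so we may restrict attention to fingerprint-based testers.

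Next, I would Poissonize the sampling process: replace the fixed sample budget $s=cn^{2/3}$ with independent $\mathrm{Poisson}(s)$ draws from each of $p$ and $p'$, which decouples the counts across positions and makes each pair $(C_i,C'_i)$ independent. In the YES case every heavy position contributes an independent draw from $\mathrm{Poisson}(c)\times\mathrm{Poisson}(c)$, while in the NO case every position in $A$ contributes from $\mathrm{Poisson}(c)\times\mathrm{Poisson}(c/(2n^{1/3}))$ and every position in $B$ from the reverse pair. The bulk contributes essentially identical, nearly-zero count pairs in both cases. The problem then reduces to bounding the total variation distance between the two induced (exchangeable) multisets of count pairs.

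The main obstacle---and the heart of the proof---is this variation distance bound. The standard tool is the moment-matching machinery of P.~Valiant: when two mixtures of probability distributions on $[n]$ match in their first few bivariate moments $\sum_i p_i^a (p'_i)^b$, the Poissonized count-pair fingerprints they induce have total variation controlled by a quantity that remains small as long as $s$ is below a certain $n$-dependent threshold. I would verify that the YES and NO ensembles above match their first two bivariate moments by the choice of heavy-set sizes (this is where the scales $m = \Theta(n^{2/3})$ and heavy probability $1/n^{2/3}$ jointly enter), and that the resulting TV bound is $o(1)$ precisely up to $s=\Theta(n^{2/3})$; this is the exponent that emerges from the moment-matching scaling and is what forces the tight $n^{2/3}$ threshold.

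Finally, I would depoissonize (a standard step costing only $o(1)$ in variation distance) and verify that the max/min probability bounds in the statement hold as constructed. The most delicate piece is the moment-matching estimate itself: one must track both how closely the first two bivariate moments match between the two ensembles and how the resulting fingerprint total variation scales with $s$, and ensure that choosing $c$ sufficiently small pushes the failure probability above $0.4$ for every algorithm.
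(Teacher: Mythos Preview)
The paper does not prove this theorem. It is quoted as a known result from \cite{BFR+:00,Valiant08stoc} (the text even introduces it as ``a (trivially modified) version of the lower bound construction given in~\cite{BFR+:00,Valiant08stoc}'') and is used purely as a black-box input to Proposition~\ref{prop:lb}. So there is no ``paper's own proof'' to compare your proposal against.

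That said, your sketch is along the lines of the cited works: a two-scale heavy/bulk construction, reduction to fingerprints by symmetry, Poissonization, and an indistinguishability argument via moment matching in the style of P.~Valiant. A couple of places where your outline is loose and would need real work if you actually carried it out: (i) with $m\approx n^{2/3}/2$ heavy elements of mass $1/n^{2/3}$ and bulk mass $1/(2n)$, the NO-case total variation lands right around $1/2$, so hitting ``$>1/2$'' while simultaneously keeping the max and min probabilities exactly at $1/n^{2/3}$ and $1/(2n)$ requires more careful constant-juggling than you indicate; (ii) the assertion that the YES and NO ensembles ``match their first two bivariate moments'' is not quite right as stated---for instance the cross moment $\sum_i p_i p'_i$ differs between the two ensembles by design (this is essentially the collision statistic one would use to distinguish them), and the actual argument in \cite{Valiant08stoc} bounds the fingerprint distance through a more delicate analysis than simple low-order moment matching. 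None of this is a fatal gap in approach, but the moment-matching step is where the real content lives, and your proposal treats it as a routine verification when it is not.
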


Our second corollary is for $L_1$ estimation, in the case that one of the distributions is explicitly given.  This trivially also yields an equivalent lower bound for the setting in which both distributions are given via samples.

\begin{corollary}
For any $a,b$ with $0<a<b<1/2,$ there exists a constant $c>0$, such that for any sufficiently large $N$ and $1 \leq k=O(\log N)$, there exists a $2(k-1)$-modal distribution $q$ of support $[N]$, and a distribution $D_k$ over $2(k-1)$-modal distributions over $[N]$, such that \emph{no} algorithm, when given $c \frac{k \log N}{\log \log N \cdot \log \log \log N}$ samples from a distribution $p \leftarrow D,$ can distinguish the case that $\dtv(p,q)<a$ versus $\dtv(p,p')> b$ with probability at least $.6$.
\end{corollary}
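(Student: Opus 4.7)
The plan is to apply Proposition~\ref{prop:lb} to the $L_1$-estimation lower bound of Valiant and Valiant \cite{ValiantValiant:11,ValiantValiant:11focs}. Their result provides, for any $0<a<b<1/2$, a distribution over distributions $p$ on $[n]$ (together with a fixed explicit reference distribution $\bar q$) such that no algorithm using $c' n/\log n$ samples can distinguish $\dtv(p,\bar q)<a$ from $\dtv(p,\bar q)>b$ with probability at least $0.6$. Because Proposition~\ref{prop:lb} exponentiates the support size in $\log(p_{max}/p_{min})$, the key quantitative point to verify is that the VV construction uses only probabilities in a range of polylogarithmic aspect ratio, i.e.\ $p_{max}/p_{min}=O(\mathrm{polylog}(n))$, so that $\log(p_{max}/p_{min})=O(\log\log n)$. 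This is indeed the case in the standard VV instance, whose elements have mass $\Theta(1/n)$ up to $\log n$ factors.

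Granting this, I would then feed the VV ensemble through Proposition~\ref{prop:lb} with parameter $k$. Setting $p'\equiv \bar q$ in the proposition's input gives a distribution $D_k$ over $(k-1)$-modal (and hence trivially $2(k-1)$-modal) distributions on $[N]$, together with a fixed $2(k-1)$-modal distribution $q$ (the image of $\bar q$), with
\[
N \;=\; 4k\,e^{(8n/k)(1+\log(p_{max}/p_{min}))} \;=\; 4k\,e^{\Theta((n/k)\log\log n)},
\]
and with $\dtv$ preserved by the transformation. Therefore the same sample lower bound $c'n/\log n$ carries over, by the simulation argument of Proposition~\ref{prop:lb}.

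It remains to invert this blowup. Taking logarithms gives $\log N=\Theta((n\log\log n)/k)$, so $n\log\log n=\Theta(k\log N)$. Since $k=O(\log N)$, we have $k\log N=O(\log^2 N)$, hence $\log\log(k\log N)=\Theta(\log\log\log N)$, and solving the self-referential equation yields
\[
n \;=\; \Theta\!\left(\frac{k\log N}{\log\log\log N}\right),
\qquad \log n \;=\; \Theta(\log\log N).
\]
Plugging into $\Omega(n/\log n)$ produces the advertised sample lower bound of $\Omega\!\bigl(k\log N/(\log\log N\cdot\log\log\log N)\bigr)$, matching the statement of the corollary. The claim for the two-sample variant ($q$ unknown) then follows \emph{a fortiori}, since any algorithm in that model can simulate one distribution using its explicit description.

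The main obstacle is the parameter-extraction step: one must confirm that the published VV construction can indeed be taken to have $\log(p_{max}/p_{min})=O(\log\log n)$, and that it can be cast in the ``known distribution'' form required above (where one of the two distributions is fixed across the ensemble). Both are technicalities of the VV construction rather than deep issues; once they are in hand, the inversion computation above is routine and the modal count $2(k-1)$ is a direct consequence of the transformation's structure in Proposition~\ref{prop:lb}.
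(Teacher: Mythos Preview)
Your proposal is correct and takes essentially the same approach as the paper: apply Proposition~\ref{prop:lb} to the Valiant--Valiant lower bound instance (with the uniform distribution as the fixed reference), then invert the support-size blowup. The paper confirms precisely the technical point you flagged as the ``main obstacle'' by quoting the VV construction (Theorem~\ref{thm:vv11f}) with $p_{\max}=O(\log n/n)$ and $p_{\min}=1/(2n)$, so that $\log(p_{\max}/p_{\min})=O(\log\log n)$; in fact you carry out the parameter inversion more explicitly than the paper does.
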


This Corollary gives the lower bounds claimed in lines~3,~4,~7 and~8 of Table~1.  It follows from applying Proposition~\ref{prop:lb} to the lower bound construction given in~\cite{ValiantValiant:11}, summarized by the following theorem:
\begin{theorem}[\cite{ValiantValiant:11}]~\label{thm:vv11f}
For any $a,b$ with $0<a<b<1/2,$ there exists a constant $c>0$, such that for any sufficiently large $n$, there is a distribution $D$ over distributions with support $[n]$, such that for any $p \leftarrow D,$ the maximum probability with which any element occurs in $p$ is $O\left(\frac{\log n}{n}\right),$ and the minimum probability is $\frac{1}{2n}$.  Additionally, no algorithm, when given $c\frac{n}{\log n}$ samples from $p \leftarrow D$ can distinguish whether $\dtv(p,u_n)<a$ versus $\dtv(p,u_n)>b$ with probability at least $.6$, where $u_n$ denotes the uniform distribution over $[n]$.
\end{theorem}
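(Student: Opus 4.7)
The plan is to follow the canonical ``wishful thinking'' framework of Valiant--Valiant. I will exhibit $D$ as the equal mixture $\frac{1}{2}D_{\text{near}} + \frac{1}{2}D_{\text{far}}$, where every $p\leftarrow D_{\text{near}}$ satisfies $\dtv(p,u_n)<a$ and every $p\leftarrow D_{\text{far}}$ satisfies $\dtv(p,u_n)>b$, and argue that the joint law of the $s$-sample fingerprint (the vector $(F_1,F_2,\ldots)$ with $F_j$ the number of domain elements seen exactly $j$ times) under $D_{\text{near}}$ and under $D_{\text{far}}$ differs in total variation by less than $1/5$ when $s=cn/\log n$. Since $\dtv(\cdot,u_n)$ is invariant under permutations of $[n]$, an averaging argument shows that any algorithm for the distinguishing task can, without loss of generality, be taken to act only on the fingerprint; so fingerprint-indistinguishability forces success probability at most $0.6$.

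First I would perform two standard reductions. (i) Symmetrize: by the symmetry of the property under relabelings of $[n]$, collapse any tester to one that reads only the fingerprint. (ii) Poissonize: replace the exact sample size $s$ by $\mathrm{Poi}(s)$, paying only $O(1/\sqrt{s})$ in total variation. Under Poissonization the counts $X_i$ of the different domain elements become independent with $X_i \sim \mathrm{Poi}(s\cdot p(i))$, and for any distribution $\mathcal{A}$ over probability values the expected fingerprint has a clean form: $\mathbb{E}[F_j] = n\cdot\mathbb{E}_{x\sim \mathcal{A}}\bigl[e^{-sx}(sx)^j/j!\bigr]$.

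Next I would construct the ensembles. For $\mathcal{A}\in\{\mathcal{A}_{\text{near}},\mathcal{A}_{\text{far}}\}$ supported in $[1/(2n),\,C\log n/n]$, define a sample from $D_{\text{near}}$ (resp.\ $D_{\text{far}}$) by drawing $p(1),\ldots,p(n)$ i.i.d.\ from $\mathcal{A}$ and renormalizing. Standard concentration shows that if $\mathcal{A}_{\text{near}}$ is concentrated tightly around $1/n$ while $\mathcal{A}_{\text{far}}$ places mass on ``low'' values near $1/(2n)$ and ``high'' values near $C\log n/n$ in proportions chosen so that $\mathbb{E}_{\mathcal{A}_{\text{far}}}[x]=1/n$ but $\mathbb{E}_{\mathcal{A}_{\text{far}}}|x-1/n|>2b/n$, then w.h.p.\ $\dtv(p,u_n)<a$ under $D_{\text{near}}$ and $\dtv(p,u_n)>b$ under $D_{\text{far}}$; the support bounds on $p$ required by the theorem hold deterministically. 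The crucial step is then to also choose $\mathcal{A}_{\text{near}},\mathcal{A}_{\text{far}}$ so that the two signed measures $\mathcal{A}_{\text{near}}-\mathcal{A}_{\text{far}}$ has its first $k\asymp \log n$ moments vanish on $[1/(2n),C\log n/n]$. When this holds, the Poisson expectations $\mathbb{E}[F_j]$ above match for all $j\le k$, and an $L^2$-style bound on the total variation between independent Poisson fingerprints (bounding $\sum_j \Var(F_j)$ against the squared expectation gaps) gives $\dtv < 1/5$ as required.

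The main obstacle is the simultaneous construction of $\mathcal{A}_{\text{near}}$ and $\mathcal{A}_{\text{far}}$: they must be \emph{probability} measures on $[1/(2n),C\log n/n]$, they must be respectively concentrated (to realize the ``near'' and ``far'' cases), and their difference must kill $\log n$ moments. The classical way to obtain such a pair is to start from a degree-$k$ Chebyshev polynomial (or Laguerre polynomial, exploiting the Poisson weight $e^{-sx}$ that naturally appears in the fingerprint expectation) on the interval $[1/(2n),C\log n/n]$, observe that it is orthogonal to all lower-degree polynomials with respect to the relevant weight, and split it into its positive and negative parts to obtain two nonnegative measures with matching first $k$ moments; these can then be convolved with the ``near'' and ``far'' templates. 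The parameter $C=O(1)$ controlling the support range and the parameter $c$ controlling the sample bound must be traded off carefully: enlarging the support improves the quality of the moment-matching polynomial but weakens the concentration argument that gives the target distances $a$ and $b$, and the ratio $\log(p_{\max}/p_{\min})=\Theta(\log\log n)$ coming from the support interval is exactly what turns a $\tilde\Omega(n)$ moment-matching bound into the $\Omega(n/\log n)$ sample lower bound in the theorem.
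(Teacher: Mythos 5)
This statement is not proved in the paper at all: Theorem~\ref{thm:vv11f} is imported verbatim from \cite{ValiantValiant:11} and used as a black box (it is fed into Proposition~\ref{prop:lb} to produce the $k$-modal lower bounds), so there is no in-paper argument to compare yours against. What you have written is a sketch of the external result itself, and it does correctly identify the machinery of the cited work: symmetrize so that only the fingerprint matters, Poissonize, build a ``near-uniform'' and a ``far-from-uniform'' ensemble whose probability-value profiles agree on low-order moments (via a Chebyshev/Laguerre-type polynomial split into positive and negative parts), keep all probabilities in $[\frac{1}{2n}, O(\frac{\log n}{n})]$, and conclude fingerprint indistinguishability at $s = c\,n/\log n$ samples.

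As a proof, however, the sketch has two genuine gaps. First, the heart of the matter --- exhibiting measures $\mathcal{A}_{\mathrm{near}},\mathcal{A}_{\mathrm{far}}$ on $[\frac{1}{2n}, O(\frac{\log n}{n})]$ that simultaneously (i) match $\Theta(\log n)$ moments, (ii) make $\dtv(p,u_n)<a$ hold for the near ensemble and $\dtv(p,u_n)>b$ for the far ensemble after renormalization, and (iii) survive the trade-off between support width and number of matched moments --- is exactly the technical content of \cite{ValiantValiant:11}, and your paragraph only gestures at it (``split the Chebyshev polynomial into positive and negative parts and convolve with templates'') without verifying that the resulting pair has the required distance separation or that the parameters close. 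Second, the indistinguishability step is not valid as stated: bounding $\sum_j \Var(F_j)$ against the squared gaps in $\mathbb{E}[F_j]$ does not bound the total variation distance between the joint fingerprint distributions, since the $F_j$'s are strongly dependent even after Poissonization and a naive $L^2$ comparison is too lossy at this sample complexity. The correct tools are either P.~Valiant's ``wishful thinking'' / low-frequency-blindness theorem (which requires the expectation discrepancies to be small relative to $\sqrt{1+\mathbb{E}[F_j]}$ and has additional hypotheses, notably that all probabilities are $O(1/s)$ up to the $\log$ factor being exploited here) or the Valiant--Valiant central limit theorem for generalized multinomial distributions together with a mean/covariance comparison; invoking one of these, with its hypotheses checked against your construction, is necessary before the claimed $\dtv<1/5$ bound and the $0.6$ success threshold follow.
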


Note that the above theorem can be expressed in the language of Proposition~\ref{prop:lb} by defining the distribution $D'$ over pairs of distributions which chooses a distribution according to $D$ for the first distribution of each pair, and always selects $u_n$ for the second distribution of each pair.

Our third corollary, which gives the lower bounds claimed in lines~1 and~5 of Table~1, is for the ``testing identity, $q$ is known'' problem:

\begin{corollary}
For any $\eps \in (0,1/2]$, there is a constant $c$ such that for sufficiently large $N$ and $1 \leq k=O(\log m)$, there is a $k$-modal distribution $p$ with support $[N]$, and a distribution $D$ over $2(k-1)$-modal distributions of support $[N]$ such that \emph{no} algorithm, when given $c (k \log m)^{1/2}$ samples from a distribution $p' \leftarrow D,$ can distinguish the case that $\dtv(p,p')=0$ from the case $\dtv(p,p')> \eps$ with probability at least $.6$.
\end{corollary}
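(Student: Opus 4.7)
The plan is to apply Proposition~\ref{prop:lb} to the classical $\Omega(\sqrt{n})$ lower bound for testing identity against a fixed known distribution, specifically the Paninski-style hardness of uniformity testing~\cite{BFFKRW:01}. As a starting point I would invoke the following standard fact: there exists a constant $c' > 0$ and a distribution $D_0$ over distributions on $[n]$ such that (i) either $p = u_n$ or $\dtv(p,u_n) > \eps$ for every $p$ in the support of $D_0$, with each case occurring with probability $1/2$; (ii) every $p$ in the support of $D_0$ satisfies $(1-2\eps)/n \le p(i) \le (1+2\eps)/n$, so $p_{max}/p_{min} = O(1)$; and (iii) no algorithm using fewer than $c'\sqrt{n}$ samples can distinguish the two cases with success probability $0.6$.

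I would then recast this as a lower bound on \emph{pairs} by defining $D_0'$ to be the distribution that draws $p \leftarrow D_0$ and outputs the pair $(p,u_n)$. This satisfies the hypothesis of Proposition~\ref{prop:lb} with $\eps_1 = 0$, $\eps_2 = \eps$, and $\delta = 0.4$. Because $\log(p_{max}/p_{min}) = O(1)$, the proposition produces a distribution $D_k$ over pairs of ``$k$-modal-like'' distributions supported on $[N]$ with $N = 4k \cdot e^{\Theta(n/k)}$, equivalently $n = \Theta(k \log N)$. Re-expressing $c'\sqrt{n}$ in terms of $N$ then yields the desired $\Omega(\sqrt{k \log N})$ lower bound for distinguishing $\dtv=0$ from $\dtv>\eps$. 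Setting $p$ to be the (fixed) image of $u_n$ under the transformation gives the explicit known distribution in the corollary, while $D$ is the marginal distribution on the first coordinate of $D_k$.

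The main obstacle, and the only part that goes beyond a black-box application of Proposition~\ref{prop:lb}, is verifying the exact modal counts: the corollary promises a $k$-modal known distribution $p$ and $2(k-1)$-modal members of $D$, so the two coordinates must be accounted for separately. To check this I would unpack the two-stage construction of Definitions~\ref{def:lb} and~\ref{def:lb2}: the fixed second coordinate is obtained by applying that construction to the \emph{uniform} distribution $u_n$, whose constant mass-per-block does not introduce any within-block variation, so the resulting distribution inherits only the alternating ``scaffold'' structure of the transformation and is $k$-modal. By contrast, a non-uniform member of the support of $D_0$ has varying mass across atoms, which after transformation may introduce at most one extra mode per scaffold piece, producing a distribution with at most $2(k-1)$ modes. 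Once this mode-counting step is in hand the corollary follows immediately from combining the Paninski lower bound with Proposition~\ref{prop:lb}.
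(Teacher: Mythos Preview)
Your core approach is exactly the paper's: apply Proposition~\ref{prop:lb} to the standard $\Omega(\sqrt{n})$ uniformity-testing lower bound (the paper states this as the Fact immediately following the corollary), recast as a distribution over pairs $(p,u_n)$, observe that $p_{max}/p_{min}=O(1)$ so that $N = e^{\Theta(n/k)}$, and rewrite the $\sqrt{n}$ bound as $\Theta(\sqrt{k\log N})$. That part is correct and matches the paper line for line.

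The part you add beyond the paper --- the mode-counting argument distinguishing the fixed distribution from the random ones --- is where the proposal goes wrong. Your claim that the transformation applied to $u_n$ yields a $k$-modal distribution does not hold. In Definition~\ref{def:lb} the auxiliary distribution $q$ is geometrically \emph{increasing}, so even when $p=u_n$ the intermediate distribution $f_{u_n,\eps,p_{max},p_{min}}$ has nontrivial variation across its support (a periodic sawtooth), and the second-stage construction of Definition~\ref{def:lb2} then produces a distribution that is strictly non-increasing on each of the $k$ equal-length blocks with a genuine jump at every block boundary. This is precisely the structure that Lemma~\ref{lb:kmod} analyzes for \emph{arbitrary} input $p$, and it gives $k-1$ interior maxima and $k-1$ interior minima, hence $2(k-1)$ modes --- uniform input or not. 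Your ``one extra mode per scaffold piece'' picture for the non-uniform case is likewise off: the $2(k-1)$ modes come entirely from the scaffold, not from the input distribution.

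The paper itself makes no attempt to justify the asymmetric ``$k$-modal versus $2(k-1)$-modal'' phrasing in the corollary; it simply invokes the proposition black-box, which yields $2(k-1)$-modal distributions on both sides (consistent with the other two corollaries). The ``$k$-modal'' in the statement appears to be a slip rather than a claim requiring separate proof, so you should not try to establish it.
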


The above corollary follows from applying Proposition~\ref{prop:lb} to the following trivially verified lower bound construction:

\begin{fact}
Let $D$ be the ensemble of distributions of support $n$ defined as follows:  with probability $1/2,$  $p\leftarrow D$ is the uniform distribution on support $n$, and with probability $1/2$, $p\leftarrow D$ assigns probability $1/2n$ to a random half of the domain elements, and probability $3/2n$ to the other half of the domain elements.  No algorithm, when given fewer than $n^{1/2}/100$ samples from a distribution $p\leftarrow D$ can distinguish between $\dtv(p,u_n)=0$ versus $\dtv(p,u_n)\ge .5$ with probability greater than $.6$.
\end{fact}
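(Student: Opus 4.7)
The plan is to reduce the distinguishability question to a total-variation bound between two joint distributions on $s$-tuples of samples, and then to control that bound using a $\chi^2$-moment (collision-style) computation. Let $Q_0 = u_n^{\otimes s}$ denote the joint law of $s$ i.i.d.\ samples under the uniform hypothesis, and let $Q_1 = \mathbf{E}_{H}[p_H^{\otimes s}]$ denote the joint law under the half-heavy hypothesis, where $H$ is a uniformly random subset of $[n]$ of size $n/2$ and $p_H(x) = 3/(2n)$ for $x \in H$, $p_H(x)=1/(2n)$ otherwise. Since the prior on the two hypotheses is $(1/2,1/2)$, the Bayes-optimal distinguisher has success probability exactly $\tfrac{1}{2} + \tfrac{1}{2}\dtv(Q_0,Q_1)$, so it suffices to show that $\dtv(Q_0,Q_1) < 0.2$ whenever $s < n^{1/2}/100$.

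Next, I would use the standard inequality $\dtv(Q_0,Q_1) \le \tfrac{1}{2}\sqrt{\chi^2(Q_1\Vert Q_0)}$, and compute the $\chi^2$-divergence by exploiting that $Q_1$ is itself a mixture. Writing $p_H(x)/u_n(x) = 1 + R_{H,x}$ with $R_{H,x} = +1/2$ if $x \in H$ and $-1/2$ otherwise, and introducing an independent copy $H'$ of $H$, one obtains
\begin{equation*}
\chi^2(Q_1\Vert Q_0) + 1 \;=\; \mathbf{E}_{H,H'}\!\left[\Big(\tfrac{1}{n}\sum_{y\in[n]}(1+R_{H,y})(1+R_{H',y})\Big)^{\!s}\right].
\end{equation*}
A direct count splitting $[n]$ into $H\cap H'$, $H\triangle H'$, and $\overline{H\cup H'}$ reduces the inner sum to $\tfrac{3}{4} + \tfrac{|H\cap H'|}{n} = 1 + \Delta/n$, where $\Delta := |H\cap H'| - n/4$.

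The main (but routine) step is then a moment computation for $\Delta$. For two independent uniform halves of $[n]$, $\mathbf{E}[\Delta] = 0$ and $\mathbf{E}[\Delta^2] = O(n)$, with exponential tails. Expanding $(1+\Delta/n)^s$ and taking expectation term by term gives $\mathbf{E}[(1+\Delta/n)^s] = 1 + O(s^2/n)$ for $s \le \sqrt{n}$, hence $\chi^2(Q_1\Vert Q_0) = O(s^2/n)$. Plugging in $s < n^{1/2}/100$ yields $\chi^2 \le O(1/10^4)$, so $\dtv(Q_0,Q_1) = O(1/100)$, and the success probability is therefore bounded by $0.5 + O(1/100) < 0.6$.

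The one place where a little care is required is bounding the tail contributions in the expansion of $(1+\Delta/n)^s$: since $\Delta$ can in principle be as large as $n/4$, one cannot simply truncate the Taylor series. I would handle this by splitting on the event $\{|\Delta| \le C\sqrt{n\log n}\}$ (on which $|\Delta/n|\cdot s \ll 1$ and the second-order approximation is valid) and bounding the complementary contribution using standard hypergeometric tail bounds, which decay fast enough to be absorbed into the $O(s^2/n)$ main term. Everything else is bookkeeping.
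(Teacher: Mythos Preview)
Your argument is correct and is the standard $\chi^2$/second-moment route to this folklore lower bound. The paper itself provides no proof at all---it simply introduces the construction as a ``trivially verified lower bound construction''---so there is nothing to compare against beyond noting that your approach is the natural one.

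One minor simplification you might consider: the event-splitting in the last paragraph is unnecessary. Since $1+x\le e^x$, you have $(1+\Delta/n)^s\le e^{s\Delta/n}$, and $\Delta=|H\cap H'|-n/4$ is (conditionally on $H$) a centered hypergeometric random variable, hence sub-Gaussian with variance proxy $O(n)$ by Hoeffding's sampling-without-replacement inequality. This gives $\mathbf{E}\big[e^{s\Delta/n}\big]\le e^{c s^2/n}$ directly, so $\chi^2(Q_1\Vert Q_0)\le e^{c s^2/n}-1=O(s^2/n)$ in one line, with no tail-splitting needed.
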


As noted previously (after Theorem~\ref{thm:vv11f}), this fact can also be expressed in the language of Proposition~\ref{prop:lb}.

\section{Conclusions}

We have introduced a simple new approach for tackling distribution testing problems for restricted classes
of distributions, by reducing them to general-distribution testing problems over a smaller domain.  We applied this approach to get new testing results  for a range of distribution testing problems involving
monotone and $k$-modal distributions, and established lower bounds showing that all our new
algorithms are essentially optimal.

A general direction for future work is to apply our reduction method to obtain near-optimal testing algorithms for other interesting classes of distributions.  This will involve constructing flat decompositions of
various types of distributions using few samples, which seems to be a natural and interesting
algorithmic problem.  A specific goal is to develop a more efficient version of our
\textsc{ Construct-Flat-Decomposition} algorithm for $k$-modal distributions; is it possible to obtain an
improved version of this algorithm that uses $o(k)$ samples?

\bibliographystyle{alpha} \bibliography{allrefs}

\appendix

\medskip
{\bf
For simplicity, the appendix consists of a slightly expanded and self-contained version of the exposition in the body of the paper, following the ``Notation and Preliminaries'' section.}

\section{Shrinking the domain size:  Reductions for distribution-testing problems} \label{sec:reduction}

In this section we present the general framework of our reduction-based approach and sketch how we
instantiate this approach for monotone and $k$-modal distributions.

\ignore{

{\huge Here's a first stab at this stuff.  I think we should set up a separate section
for this, even if it is short and simple, to highlight that it's a conceptual contribution
of what we do.}

\bigskip

\todo{Note that we now want these flattened and reduced notions where ${\cal I}$ doesn't
cover all of $[n]$ but rather just covers some subset $S$ of $[n]$.  So we'll probably need to change
notation etc a bit below.}

\inote{I changed it so that the decomposition covers all of $[n]$.}
}


We denote by $|I|$ the cardinality of an interval $I \subseteq [n]$, i.e. for $I=[a,b]$ we have
$|I|=b-a+1.$
Fix a distribution $p$ over $[n]$ and a partition of $[n]$ into disjoint intervals
$\mathcal{I} :=  \{ I_i \}_{i=1}^{\ell}.$
The {\em flattened distribution} $(p_f)^{\mathcal{I}}$ corresponding to $p$ and $\mathcal{I}$
is the distribution over $[n]$ defined as follows:  for $j \in [\ell]$ and $i \in I_j$, $(p_f)^{\mathcal{I}} (i) = \littlesum_{t \in I_j} p(t) / |I_j|$.
That is, $(p_f)^{\mathcal{I}}$ is obtained from $p$ by averaging the weight that $p$ assigns to each interval over the entire interval.
The {\em reduced distribution} $(p_r)^{\mathcal{I}}$ corresponding to $p$ and $\mathcal{I}$ is the distribution over $[\ell]$
that assigns the $i$th point the weight $p$ assigns to the
interval $I_i$; i.e., for $i \in [\ell]$, we have $(p_r)^{\mathcal{I}} (i) = p(I_i)$.
Note that if $p$ is non-increasing then so is $(p_f)^{\mathcal{I}}$, but
this is not necessarily the case for $(p_r)^{\mathcal{I}}$.

\medskip\noindent {\bf Definition~\ref{def:flatdecomp}.}\emph{
Let $p$ be a distribution over $[n]$ and let ${\cal I}= \{ I_i \}_{i=1}^{\ell}$ be a partition
of $[n]$ into disjoint intervals.  We say that ${\cal I}$ is a \emph{$(p,\eps,\ell)$-flat decomposition of $[n]$} if
$\dtv(p,(p_f)^{{\cal I}}) \leq \eps.$
}
\medskip

The following useful lemma relates closeness of $p$ and $q$ to closeness of the
reduced distributions:

\medskip
\noindent {\bf Lemma~\ref{lem:reduction}} \emph{
Let ${\cal I} = \{ I_i \}_{i=1}^{\ell}$ be a partition of $[n]$ into disjoint intervals.
Suppose that $p$ and $q$ are distributions over $[n]$ such that ${\cal I}$ is both a $(p,\eps,\ell)$-flat decomposition of $[n]$
and is also a $(q,\eps,\ell)$-flat decomposition of $[n]$.  Then
$ \left| \dtv(p,q) - \dtv ( (p_r)^{\cal I}, (q_r)^{\cal I} ) \right|  \leq 2\eps.$
Moreover, if $p = q$ then $(p_r)^{\cal I} = (q_r)^{\cal I}$.
}

\begin{proof}
The second statement is clear by the definition of a reduced distribution.
To prove the first statement, we first observe that
for any pair of distributions $p, q$ and any partition $\mathcal{I}$ of $[n]$ into disjoint intervals,
we have that $\dtv( (p_r)^{\mathcal{I}} ,(q_r)^{\mathcal{I}}) = \dtv( (p_f)^{\mathcal{I}}, (q_f)^{\mathcal{I}})$.
We thus have that $\left| \dtv(p, q) - \dtv( (p_r)^{\mathcal{I}} , (q_r)^{\mathcal{I}} ) \right|$ is
equal to
\begin{eqnarray*}
\left| \dtv(p, q) - \dtv( (p_f)^{\mathcal{I}} , (q_f)^{\mathcal{I}} ) \right| =
\dtv(p, q) - \dtv( (p_f)^{\mathcal{I}} , (q_f)^{\mathcal{I}} ) \leq \dtv(p,(p_f)^{{\cal I}}) + \dtv(q,(q_f)^{{\cal I}}),
\end{eqnarray*}
where the equality above is equivalent to $\dtv(p,q) \geq \dtv( (p_f)^{\mathcal{I}} , (q_f)^{\mathcal{I}})$
(which is easily verified by considering each interval $I_i \in \mathcal{I}$ separately and applying triangle inequality) and the inequality is the
triangle inequality.
Since $\mathcal{I}$ is both a $(p,\eps,\ell)$-flat decomposition of $[n]$
and a $(q,\eps,\ell)$-flat decomposition of $[n]$, we have that $\dtv(p,(p_f)^{{\cal I}}) \leq \eps$ and $\dtv(q,(q_f)^{{\cal I}}) \leq \eps.$
The RHS above is thus bounded by $2\eps$ and the lemma follows.
\end{proof}

Lemma~\ref{lem:reduction}, while simple, is at the heart of our reduction-based
approach; it lets us transform a distribution-testing problem over the large
domain $[n]$ to a distribution-testing problem over the much smaller ``reduced''
domain $[\ell]$.  At a high level, all our testing algorithms will follow the
same basic approach:  first they run a procedure which, with high probability,
constructs a partition ${\cal I}$ of $[n]$ that is both a $(p,\eps,\ell)$-flat
decomposition of $[n]$ and a $(q,\eps,\ell)$-flat decomposition of
$[n]$.  Next they run the appropriate general-distribution tester over the
$\ell$-element distributions $(p_r)^{{\cal I}},(q_r)^{{\cal I}}$ and output what
it outputs; Lemma~\ref{lem:reduction} guarantees that the distance between
$(p_r)^{\cal I}$ and $(q_r)^{\cal I}$ faithfully reflects the distance between
$p$ and $q$, so this output is correct.

We now provide a few more details that are specific to the various different testing problems that we consider.
For the monotone distribution testing problems the construction of ${\cal I}$ is done obliviously (without drawing any samples or any reference to $p$ or $q$ of any sort) and there is no possibility of failure -- the assumption that $p$ and $q$ are both (say) non-decreasing guarantees that the ${\cal I}$ that is constructed is both a $(p,\eps,\ell)$-flat decomposition of $[n]$ and a $(q,\eps,\ell)$-flat decomposition of $[n]$.
We describe this decomposition procedure in Section~\ref{sec:obliv} and present our monotone distribution testing algorithms that are based on it in Section~\ref{sec:testmonotone}.

For the $k$-modal testing problems it is not so straightforward to construct the desired decomposition ${\cal I}$. This is done via a careful procedure which uses $k^2 \cdot \poly(1/\eps)$ samples from $p$ and $q$.  This procedure has the property that with probability $1-\delta/2$, the ${\cal I}$ it outputs is
both a $(p,\eps,\ell)$-flat decomposition of $[n]$ and a
$(q,\eps,\ell)$-flat decomposition of $[n]$, where $\ell = O(k \log(n)/\eps^2).$\ignore{

\begin{itemize}
\item If $p=q$ then $\Pr[$the procedure outputs ``failure''$] \leq 1/100$; and

\item For any $k$-modal distributions $p,q$ over $[n]$, we have $\Pr[$the procedure outputs an ${\cal I}$ which is not both a $(p,\eps,\ell)$-flat decomposition of $[n]$ and a
$(q,\eps,\ell)$-flat decomposition of $[n]] \leq 1/100.$
\end{itemize}

The first property above ensures that it is ``safe'' to reject the pair $p,q$ if the procedure outputs
``failure,'' and the second property ensures that if the procedure outputs a decomposition ${\cal I}$ then
it is ``safe'' to assume that ${\cal I}$ is both a $(p,\eps,\ell)$-flat decomposition of $[n]$
and a $(q,\eps,\ell)$-flat decomposition of $[n]$.}  Given this, by  running a testing algorithm (which has success probability $1 - \delta/2$) on the
pair $(p_r)^{\cal I},(q_r)^{\cal I}$ of distributions over $[\ell]$, we will get an answer which is with probability $1 - \delta$ a legitimate answer for the original testing problem.  The details are given in Section~\ref{sec:testkmodal}.

We close this section with a result about partitions and
flat decompositions which will be useful later.
Let ${\cal I} = \{I_i\}_{i=1}^{a},$
${\cal I}'=\{I'_j\}_{j=1}^b$ be two partitions of $[n]$. We say
that ${\cal I}'$ is a \emph{refinement} of ${\cal I}$ if for every
$i \in [a]$ there is a subset $S_i$ of $[b]$ such that $\cup_{j \in S_i} I'_j =
I_i$ (note that for this to hold we must have $a \leq b$). Note that $\{S_i\}_{i=1}^a$ forms a partition of $[b]$.
We prove the following useful lemma:

\medskip\noindent{\bf Lemma~\ref{lem:refinement}.}\emph{
Let $p$ be any distribution over $[n]$, let
${\cal I}=\{I_i\}_{i=1}^{a}$ be a $(p,\eps,a)$-flat
decomposition of $[n]$, and let ${\cal J}=\{J_i\}_{i=1}^b$ be
a refinement of ${\cal I}.$ Then ${\cal J}$ is a $(p,2\eps,b)$-flat
decomposition of $[n]$.
}
\medskip

\begin{proof}
Fix any $i \in [\ell]$ and let $S_i \subseteq [b]$ be such that $I_i = \cup_{j \in S_i} J_j.$  To prove the lemma it suffices to show that
\begin{equation}  \label{eq:piece}
2 \littlesum_{t \in I_i}|p(t) - (p_f)^{\cal I}(t)| \geq
\littlesum_{j \in S} \littlesum_{t \in J_j} |p(t) - (p_f)^{{\cal J}}(t)|,
\end{equation}
since the sum on the LHS is the contribution that $I_i$ makes to $\dtv(p,(p_f)^{\cal I})$ and the sum on the RHS is the
corresponding contribution $I_i$ makes to $\dtv(p,(p_f)^{{\cal J}})$.  It may
seem intuitively obvious that the sum on the LHS (which corresponds to approximating the sub-distribution $p^{I_i}$ using a ``global average'') must be smaller
than the sum on the RHS (which corresponds to using separate ``local averages''). However, this intuition is not quite correct, and it is necessary to have the factor of two. To see this, consider a distribution $p$ over $[n]$ such that $p(1) = (1/2) \cdot(1/n)$; $p(i) = 1/n$ for $i \in [2,n-1]$; and $p(n) = (3/2)\cdot (1/n)$. Taking $I_1 = [1,n/2]$ and $I_2=[n/2+1, n]$, it is easy to check that inequality (\ref{eq:piece}) is essentially tight (up to a $o(1)$ factor).

We now proceed to establish (\ref{eq:piece}). Let $T \subseteq [n]$ and consider a partition of $T$ into $k$ nonempty
sets $T_i$, $i \in [k]$. Denote $\mu \eqdef p(T) / |T|$ and  $\mu_i \eqdef p(T_i) / |T_i|$. Then, (\ref{eq:piece}) can be re-expressed as follows
\begin{equation}
\label{eq:re}
2 \littlesum_{t \in T} | p(t) - \mu | \geq \littlesum_{i=1}^k \littlesum_{t \in T_i} |p(t) - \mu_i|.\end{equation}
We shall prove the above statement for all sequences of numbers
$p(1),\dots,p(n)$. Since adding or subtracting the same quantity from each number $p(t)$ does not change the validity of (\ref{eq:re}), for the sake of convenience we may assume all the numbers average to $0$, that is, $\mu = 0$. Consider the $i$-th term on the right hand side,
$\littlesum_{t \in T_i} |p(t) - \mu_i|$. We can bound this quantity from above as follows:
\begin{eqnarray*}
 \littlesum_{t \in T_i} |p(t) - \mu_i| \leq  \littlesum_{t \in T_i} |p(t)|  + |T_i| \cdot |\mu_i|
                                                          =   \littlesum_{t \in T_i} |p(t)| + \left| p(T_i)\right|
                                                         =  2 \littlesum_{t \in T_i} |p(t)|
                                                         =      2 \littlesum_{t \in T_i} |p(t) - \mu|,
\end{eqnarray*}
where the inequality follows from the triangle inequality (applied term by term), the first equality is by the definition of $\mu_i$, the second equality is trivial, and the final equality uses the assumption that $\mu=0$. The lemma follows by summing over $i \in [k]$, using the fact that the $T_i$'s form a partition of $T$.
\end{proof}

\section{Efficiently Testing Monotone Distributions} \label{sec:monotone}

\subsection{Oblivious decomposition of monotone distributions} \label{sec:obliv2}
Our main tool for testing monotone distributions is an {\em oblivious decomposition} of monotone distributions that is a variant of a construction of Birg\'{e}
\cite{Birge:87b}. As we will see it enables us to reduce the problem of
testing a monotone distribution to the problem of testing an arbitrary distribution over a much
smaller domain.  The decomposition result is given below:

\medskip\noindent{\bf Theorem~\ref{thm:birge-oblivious}} (\cite{Birge:87b}){\bf.} \emph{
(oblivious decomposition)  Fix any $n \in \mathbb{Z}^+$ and $\eps>0$.
The partition $\mathcal{I} : = \{ I_i\}_{i=1}^{\ell}$ of $[n]$ described below
 has the following properties:  $\ell = O\left( (1/\eps) \cdot \log (\eps \cdot n + 1) \right)$,
and for any non-increasing distribution $p$ over
$[n]$, ${\cal I}$ is a $(p,O(\eps),\ell)$-flat decomposition of $[n]$.
}
\medskip

There is a dual version of Theorem~\ref{thm:birge-oblivious}, asserting the existence of an ``oblivious''
partition for non-decreasing distributions (which is of course different from the ``oblivious''
partition ${\cal I}$ for non-increasing distributions of Theorem~\ref{thm:birge-oblivious}); this will be useful later.


While our construction is essentially that of Birg\'{e}, we
note that the version given in \cite{Birge:87b} is for
non-increasing distributions over the continuous domain $[0,n]$,
and it is phrased rather differently.  Adapting the arguments
of \cite{Birge:87b} to our discrete setting of
distributions over $[n]$ is not conceptually difficult but requires
some care.  For the sake of being self-contained we
provide a self-contained proof of the discrete version, stated above, that
we require in Appendix~\ref{ap:birge}.

\subsection{Efficiently testing monotone distributions} \label{sec:testmonotone2}

Now we are ready to establish our upper bounds on testing monotone distributions
(given in the first four rows of Table~1).
All of the algorithms are essentially the same: each works by reducing the given monotone distribution testing problem to the same testing problem for arbitrary distributions over support of size $\ell = O(\log n / \eps)$
using the oblivious decomposition from the previous subsection.
For concreteness we explicitly describe the tester for the ``testing identity, $q$ is known'' case below, and then indicate the small changes that are necessary to get the testers for the other three cases.









\medskip

\ignore{\hskip-.2in}
\noindent\framebox{

\medskip \noindent \begin{minipage}{17cm}

\medskip

\textsc{ Test-Identity-Known-Monotone}

\noindent {\bf Inputs:} $\eps,\delta >0$; sample access to non-increasing
distribution $p$ over $[n]$; explicit description of non-increasing distribution
$q$ over $[n]$

\begin{enumerate}
\item Let $\mathcal{I} : = \{I_i\}_{i=1}^{\ell}$, with $\ell= \Theta (\log(\eps n+1)/\eps)$, be the partition of $[n]$ given by Theorem~\ref{thm:birge-oblivious}, which is a $(p',\eps/8,\ell)$-flat decomposition of $[n]$ for any non-increasing distribution $p'$.


\item Let $(q_r)^{\cal I}$ denote the reduced distribution over $[\ell]$ obtained from $q$ using
${\cal I}$ as defined in Section~\ref{sec:reduction}.

\item  Draw $m = s_{IK}(\ell,\eps/2,\delta)$  samples 
        from $(p_r)^{\cal I}$, where $(p_r)^{\cal I}$ is the reduced distribution over $[\ell]$ obtained from $p$ using ${\cal I}$ as defined in Section~\ref{sec:reduction}.


\item Run \textsc{ Test-Identity-Known}$((p_r)^{\cal I}, (q_r)^{\cal I}, {\frac \eps 2},\delta)$ using the samples from Step~3 and
output what it outputs.

\end{enumerate}
\end{minipage}}
\medskip

We now establish our claimed upper bound for the ``testing identity, $q$ is known'' case.
%
%
We first observe that in Step~3, the desired $m=s_{IK}(\ell,\eps/2,\delta)$ samples from $(p_r)^{\cal I}$ can easily be obtained by drawing $m$ samples from $p$ and converting each one to the corresponding draw from
$(p_r)^{\cal I}$ in the obvious way.
If $p = q$ then by Lemma~\ref{lem:reduction} we have that $(p_r)^{\cal I} = (q_r)^{\cal I}$, and \textsc{ Test-Identity-Known-Monotone} outputs ``accept'' with probability at least $1 - \delta$ by Theorem~\ref{thm:testidentityknown}.
If $\dtv(p,q) \geq \eps$, then by Lemma~\ref{lem:reduction} and Theorem~\ref{thm:birge-oblivious} we have that $\dtv((p_r)^{\cal I},(q_r)^{\cal I}) \geq 3\eps/4$, so \textsc{ Test-Identity-Known-Monotone} outputs ``reject'' with probability at least $1 - \delta$ by Theorem~\ref{thm:testidentityknown}.
For the ``testing identity, $q$ is unknown'' case, the the algorithm \textsc{ Test-Identity-Unknown-Monotone} is very similar to \textsc{ Test-Identity-Known-Monotone}.  The differences are as follows:  instead of Step~2, in Step~3 we draw
$m = s_{IU}(\ell,\eps/2,\delta)$ samples from $(p_r)^{\cal I}$ and the same number
of samples from $(q_r)^{\cal I}$; and in Step~4, we run \textsc{ Test-Identity-Unknown}$((p_r)^{\cal I},(q_r)^{\cal I},
{\frac \eps 2},\delta)$ using the samples from Step~3.
The analysis is exactly the same as above (using Theorem~\ref{thm:testidentityunknown} in place
of Theorem~\ref{thm:testidentityknown}).

We now describe the algorithm \textsc{$L_1$-Estimate-Known-Monotone} for the ``tolerant testing, $q$ is known'' case. This algorithm takes values $\eps$ and $\delta$ as input, so the partition ${\cal I}$ defined in Step~1 is a $(p',\eps/4,\ell)$-flat decomposition of $[n]$ for any non-increasing $p'.$
In Step~3 the algorithm draws $m = s_{E}(\ell,\eps/2,\delta)$ samples
and runs \textsc{ $L_1$-Estimate}$((p_r)^{\cal I}, (q_r)^{\cal I},\eps/2,\delta)$ in Step~4.
If $\dtv(p,q)=c$ then by Lemma~\ref{lem:reduction} we have that $\dtv((p_r)^{\cal I},(q_r)^{\cal I}) \in [c-\eps/2,c+\eps/2]$ and \textsc{ $L_1$-Estimate-Known-Monotone} outputs a value within the prescribed range with probability at least $1-\delta,$ by Theorem~\ref{thm:testtolerantknown}.
The algorithm \textsc{ $L_1$-Estimate-Unknown-Monotone}
case and its analysis are entirely similar.

\section{Efficiently Testing $k$-modal Distributions} \label{sec:testkmodal}


\ignore{

}

In this section we establish our main positive testing results for $k$-modal distributions, the upper bounds
stated in the final four rows of Table~1.
The key to all these results is an algorithm \textsc{ Construct-Flat-Decomposition}$(p,\eps,\delta)$.
We prove the following performance guarantee about this algorithm:

\medskip\noindent{\bf Lemma~\ref{lem:cfsd}.}\emph{
Let $p$ be a $k$-modal distribution over $[n]$.  Algorithm \textsc{ Construct-Flat-Decomposition}$(p,\eps,\delta)$
draws $O(k^2 \eps^{-4} \log(1/\delta))$ samples from $p$ and outputs a $(p,\eps,\ell)$-flat decomposition of $[n]$ with probability at least $1-\delta$, where $\ell = O(k \log(n)/\eps^2)$.
}
\medskip

The bulk of our work in Section~\ref{sec:testkmodal} is to describe \textsc{ Construct-Flat-Decomposition}$(p,\eps,\delta)$ and prove Lemma~\ref{lem:cfsd}, but first we show how Lemma~\ref{lem:cfsd} easily yields our claimed testing results for $k$-modal distributions.
As in the monotone case all four algorithms are essentially the same:  each works by reducing the given $k$-modal distribution testing problem to the same testing problem for arbitrary distributions over $[\ell].$  We describe the \textsc{ Test-Identity-Known-kmodal} algorithm below, and then indicate the necessary changes to get the other three testers.

The following terminology will be useful:  Let ${\cal I}= \{I_i\}_{i=1}^r$ and ${\cal I}'=\{I'_i\}_{i=1}^s$ be two
partitions of $[n]$ into $r$ and $s$ intervals respectively.  The \emph{common refinement} of ${\cal I}$
 and ${\cal I}'$ is the partition ${\cal J}$ of $[n]$ into intervals obtained from ${\cal I}$ and ${\cal I}'$ in the obvious way, by taking all possible nonempty intervals of the form
 $I_i \cap I'_j.$  It is clear that ${\cal J}$ is both a refinement of ${\cal I}$ and of ${\cal I}'$
 and that the number of intervals $|{\cal J}|$ in ${\cal J}$ is at most $r+s.$


\ignore{(Note that it uses samples from $q$
even though it has an explicit description of $q$; this is for consistency with the
later \textsc{ Test-kmodal-Unknown} algorithm, and does not hurt the sample
complexity of \textsc{ Test-kmodal-Known}.)}

\medskip

\ignore{\hskip-.2in}

\noindent \framebox{

\medskip \noindent \begin{minipage}{17cm}

\medskip

\textsc{ Test-Identity-Known-kmodal}

\noindent {\bf Inputs:} $\eps,\delta >0$; sample access to $k$-modal
distributions $p,q$ over $[n]$

\begin{enumerate}

\item Run \textsc{ Construct-Flat-Decomposition}$(p,\eps/2,\delta/4)$ and let ${\cal I}=\{I_i\}_{i=1}^{\ell},$
    $\ell = O(k\log(n)/\eps^2)$, be the partition that it outputs.  Run
    \textsc{ Construct-Flat-Decomposition}$(p,\eps/2,\delta/4)$ and let ${\cal I}'=\{I'_i\}_{i=1}^{\ell'},$
    $\ell' = O(k\log(n)/\eps^2)$, be the partition that it outputs.  Let ${\cal J}$ be the
    common refinement of ${\cal I}$ and ${\cal I}$ and
    let $\ell_{\cal J} = O(k \log(n)/\eps^2)$ be the number of intervals in ${\cal J}.$

\item Let $(q_r)^{\cal J}$ denote the reduced distribution over $[\ell_{\cal J}]$ obtained from $q$
using ${\cal J}$ as defined in Section~\ref{sec:reduction}.

\item  Draw $m = s_{IK}(\ell_{\cal J},\eps/2,\delta/2)$ samples 
        from $(p_r)^{\cal J}$, where $(p_r)^{\cal J}$ is the reduced distribution over $[\ell_{\cal J}]$ obtained from $p$ using ${\cal J}$ as defined in Section~\ref{sec:reduction}.


\item Run \textsc{ Test-Identity-Known}$((p_r)^{\cal J}, (q_r)^{\cal J}, {\frac \eps 2},{\frac \delta 2})$ using the samples from Step~3 and output what it outputs.


\end{enumerate}
\end{minipage}}
\medskip

We note that Steps~2, 3 and~4 of \textsc{ Test-Identity-Known-kmodal} are the same as the corresponding
steps of \textsc{ Test-Identity-Known-Monotone}.  For the analysis of \textsc{ Test-Identity-Known-kmodal},
 Lemmas~\ref{lem:cfsd} and~\ref{lem:refinement} give us that with probability $1 - \delta/2$, the partition
${\cal J}$ obtained in Step~1 is both a $(p,\eps,\ell_{\cal J})$-flat and $(q,\eps,\ell_{\cal J})$-flat decomposition of $[n]$;
we condition on this going forward.  From this point on the analysis is essentially identical to
the analysis for \textsc{ Test-Identity-Known-Monotone}
and is omitted.

The modifications required to obtain
algorithms \textsc{ Test-Identity-Unknown-kmodal}, \textsc{ $L_1$-Estimate-Known-kmodal} and
\textsc{ $L_1$-Estimate-Unknown-kmodal}, and the analysis of these algorithms, are completely analogous to the
modifications and analyses of Appendix~\ref{sec:testmonotone2} and are omitted.

\subsection{The \textsc{ Construct-Flat-Decomposition} algorithm.}

We present \textsc{ Construct-Flat-Decomposition}$(p,\eps,\delta)$ followed by an intuitive explanation.  Note that it employs a procedure \textsc{ Orientation}$(\wh{p},I)$, which uses no samples and is presented and analyzed in Section~\ref{sec:kmodaltools}.

\medskip

\noindent\framebox{

\medskip \noindent \begin{minipage}{17cm}

\medskip

\textsc{ Construct-Flat-Decomposition}

\noindent \textsc{ Inputs:} $\eps,\delta >0$; sample access to $k$-modal
distribution $p$ over $[n]$

\begin{enumerate}

\vspace{-.2cm} \item Initialize ${\cal I} := \emptyset.$

\vspace{-.2cm} \item Fix $\tau:=\eps^2/(20000k)$. Draw $ r = \Theta(\log(1/\delta)/\tau^2)$ samples
from $p$ and let $\wh{p}$ denote the resulting empirical distribution (which by Theorem~\ref{thm:DKW} has
$\dk(\wh{p},p) \leq \tau$ with probability at least $1-\delta$).

\vspace{-.2cm} \item Greedily partition the domain $[n]$ into $\alpha$ {\em atomic intervals} $\{I_i\}_{i=1}^{\alpha}$ as follows:
$I_1 := [1, j_1]$, where $j_1 := \min\{ j \in [n] \mid \wh{p}([1,j]) \geq \eps/(100k)\}$.
For $i \geq 1$, if $\cup_{j=1}^i I_j = [1, j_i]$, then
$I_{i+1}:=[j_i+1, j_{i+1}]$, where $j_{i+1}$ is defined as follows:
If $\wh{p}([j_i+1, n]) \geq \eps/(100k)$, then $j_{i+1}: = \min\{ j \in [n]
\mid \wh{p}([j_{i}+1, j]) \geq \eps/(100k)\}$, otherwise, $j_{i+1} := n$.

\vspace{-.2cm} \item Construct a set of $n_m$ {\em moderate intervals}, a set of
$n_h$ {\em heavy points}, and a set of $n_n$ {\em
negligible intervals} as follows:  For each atomic interval $I_i = [a,b] $,
\begin{itemize}

\vspace{-.2cm} \item [(a)] if $\wh{p}([a,b]) \leq 3\eps/(100k)$ then $I_i$ is declared to be
a \emph{moderate} interval;

\vspace{-.2cm} \item [(b)] otherwise we have $\wh{p}([a,b]) > 3\eps/(100k)$ and we declare
$b$ to be a \emph{heavy point}.  If $a < b$ then we declare $[a,b-1]$
to be a \emph{negligible} interval.
\end{itemize}

For each interval $I$ which is a heavy point, add $I$ to ${\cal I}.$
Add each negligible interval $I$  to ${\cal I}.$

\vspace{-.2cm} \item For each moderate interval $I$, run procedure \textsc{ Orientation}$(\wh{p},I)$; let $\circ \in \{\uparrow,\downarrow,\bot\}$ be its output.

     If $\circ = \bot$ then add $I$  to ${\cal I}.$

    If $\circ = \downarrow$ then let ${\cal J}_I$ be the partition of $I$ given
    by Theorem~\ref{thm:birge-oblivious} which is a $(p',\eps/4,O(\log(n)/\eps))$-flat decomposition of
    $I$ for any non-increasing distribution $p'$ over $I.$  Add all the elements of ${\cal J}_I$ to ${\cal I}.$

    If $\circ = \uparrow$ then let ${\cal J}_I$ be the partition of $I$ given
    by the dual version of Theorem~\ref{thm:birge-oblivious}, which is a $(p',\eps/4,O(\log(n)/\eps))$-flat decomposition of $I$ for any non-decreasing distribution $p'$ over $I.$ Add all the elements of ${\cal J}_I$ to ${\cal I}.$

\vspace{-.2cm} \item Output the partition ${\cal I}$ of $[n]$.

\end{enumerate}
\end{minipage}}

\medskip

Roughly speaking, when \textsc{ Construct-Flat-Decomposition} constructs a partition ${\cal I}$, it initially breaks $[n]$ up into
two types of intervals.  The first type are intervals that are ``okay'' to include in
a flat decomposition, either because they have very little mass, or because they
consist of a single point, or because they are close to uniform.  The second type are intervals
that are ``not okay'' to include in a flat decomposition -- they have significant mass
and are far from uniform -- but the algorithm is able to ensure that almost all of these are monotone
distributions with a known orientation.  It then uses the oblivious decomposition of
Theorem~\ref{thm:birge-oblivious} to construct a flat decomposition of each such interval.  (Note that
it is crucial that the orientation is known in order to be able to use Theorem~\ref{thm:birge-oblivious}.)

In more detail, \textsc{ Construct-Flat-Decomposition}$(p,\eps,\delta)$
works  as follows.  The algorithm first draws a batch of samples from $p$ and uses them to construct an
estimate $\wh{p}$ of the CDF of $p$ (this is straightforward using the DKW
inequality). Using $\wh{p}$ the algorithm partitions $[n]$ into a collection
of $O(k/\eps)$ disjoint intervals in the following way:
\begin{itemize}
\vspace{-.2cm} \item A small collection of the intervals are ``negligible''; they collectively have total mass less than $\eps$ under $p$.  Each negligible interval $I$ will be an element of the partition ${\cal I}.$

\vspace{-.2cm} \item Some of the intervals are ``heavy points''; these are intervals consisting of a single point
that has mass $\Omega(\eps/k)$ under $p$.  Each heavy point $I$ will also be an element of the
partition ${\cal I}.$

\vspace{-.2cm} \item The remaining intervals are ``moderate'' intervals, each of which has mass $\Theta(\eps/k)$ under $p$.
\end{itemize}

It remains to incorporate the moderate intervals into the partition ${\cal I}$ that is being constructed.
This is done as follows:  using $\wh{p}$, the algorithm comes up with a ``guess'' of the correct orientation (non-increasing, non-decreasing, or close to uniform) for each moderate interval.  Each moderate interval where the ``guessed'' orientation is ``close to uniform'' is included in the partition ${\cal I}.$ Finally, for each moderate interval $I$ where the guessed orientation is ``non-increasing'' or ``non-decreasing'', the algorithm invokes Theorem~\ref{thm:birge-oblivious} on $I$ to perform the oblivious decomposition for monotone distributions, and the resulting sub-intervals are included in ${\cal I}$.
The analysis will show that the guesses are almost always
correct, and intuitively this should imply that  the ${\cal I}$ that is constructed is indeed a $(p,\eps,\ell)$-flat decomposition of $[n]$.

\subsection{Performance of \textsc{ Construct-Flat-Decomposition}:  Proof of Lemma~\ref{lem:cfsd}.}





%


The claimed sample bound is obvious from
inspection of the algorithm, as the only step that draws any samples is Step~2. The bound on the number of intervals in the flat decomposition follows directly from the upper bounds on the number of heavy points, negligible intervals and moderate intervals shown below, using also Theorem~\ref{thm:birge-oblivious}.
It remains to show that the output of the algorithm is a valid flat decomposition of $p$. First, by the DKW inequality (Theorem~\ref{thm:DKW}) we have that with probability at least $1 - \delta$
it is the case that
\begin{align}
|\wh{p}(I)-p(I)| \leq {\frac {\eps^2}{10000k}},~~\text{for every interval~}I \subseteq [n]. \label{eq:result of DKW}
\end{align}

We make some preliminary observations about the weight that $p$ has on the intervals
constructed in Steps~4 and~5.
Since every atomic interval $I_i$ constructed in
Step~4 has $\wh{p}(I) \geq \eps/(100k)$ (except potentially the rightmost one),
it follows that the number $\alpha$ of atomic intervals constructed in Step~3
satisfies $$\alpha \leq \lceil  100 k/\eps \rceil.$$

We now establish bounds on the probability mass that $p$ assigns
to the moderate intervals, heavy points, and negligible intervals
that are constructed in Step~4.  Using~\eqref{eq:result of DKW},
each interval $I_i$ that is declared to be a moderate interval
in Step~4(a) must satisfy
\begin{equation}
99\eps/(10000k) \leq p([a,b]) \leq 301 \eps/(10000k) \quad \text{(for all
moderate intervals $[a,b]$)}. \label{eq:moderate}
\end{equation}

By virtue of the greedy process that is used to construct atomic intervals
in Step~3, each point $b$ that is declared to be a heavy point in
Step~4(b) must satisfy $\wh{p}(b) \geq 2\eps/(100k)$ and thus using~\eqref{eq:result of DKW} again
\begin{equation}
p(b) \geq 199 \eps/(10000k) \quad \text{(for all heavy points $b$)}.
\label{eq:heavy}
\end{equation}
Moreover, each interval $[a,b-1]$ that is
declared to be a negligible interval must satisfy $\wh{p}([a,b-1])
< \eps/(100k)$ and thus using~\eqref{eq:result of DKW} again
\begin{equation}
p([a,b-1]) \leq 101 \eps / (10000k)
\quad \text{(for all negligible intervals $[a,b-1]$)}.
\label{eq:negligible}
\end{equation}

It is clear that $n_m$ (the number of moderate intervals) and $n_h$ (the number of heavy
points) are each at most $\alpha.$
Next we observe that the number of negligible intervals $n_n$ satisfies
\[n_n \leq k.\]
This is because
at the end of each negligible interval $[a,b-1]$ we have (observing
that each negligible interval must be nonempty) that
$p(b-1) \leq p([a,b-1]) \leq 101\eps/(10000k)$ while $p(b) \geq 199\eps/(10000k)$.
Since $p$ is $k$-modal, there can be at most $\lceil (k+1)/2
\rceil \leq k$ points $b \in [n]$ satisfying this condition. Since each
negligible interval $I$ satisfies $p(I) \leq 101\eps / (10000k)$ we have
that the total probability mass under $p$ of all the negligible intervals
is at most $101 \eps / 10000.$

Thus far we have built a partition of $[n]$  into a collection of $n_m \leq \lceil 100k/\eps \rceil$
moderate intervals (which we denote $M_1,\dots,M_{n_m}$), a set of $n_h \leq \lceil 100k/\eps \rceil$
heavy points (which we denote $h_1,\dots,h_{n_h}$) and a set of $n_n \leq k$ negligible
intervals (which we denote $N_1,\dots,N_{n_n}$).
Let $A \subseteq \{1,\dots,n_m\}$ denote the set of those indices $i$ such that \textsc{ Orientation}$(\wh{p},M_i)$ outputs $\bot$ in Step~6.
The partition ${\cal I}$ that \textsc{ Construct-Flat-Decomposition} constructs consists of
$\{h_1\},\dots,\{h_{n_h}\}$, $N_1,\dots,N_{n_n}$, $\{M_i\}_{i \in A}$, and
$
\bigcup_{i \in ([n_m] \setminus A)} {\cal J}_{M_i}.
$
We can thus write $p$ as
\begin{equation} \label{eq:p}
p = \littlesum_{j=1}^{n_h} p(h_j) \cdot {\bm 1}_{h_j} + \littlesum_{j=1}^{n_n} p(N_j) p_{N_j} +
\littlesum_{j \in A} p(M_j) p_{M_j} +
\littlesum_{j \in ([n_m] \setminus A)} \littlesum_{I \in {\cal J}_{M_j}} p(I)p_I.
\end{equation}
Using Lemma~\ref{lem:decomp} (proved in Appendix~\ref{ap:boundvar}) we can bound the total variation distance between
$p$ and $(p_f)^{\cal I}$ by
\begin{eqnarray}
\dtv(p,(p_f)^{\cal I}) &\leq&  {\frac 1 2}\littlesum_{j=1}^{n_h} |  p(h_j) - (p_f)^{\cal I}(h_j)| +
{\frac 1 2} \littlesum_{j=1}^{n_n} |p (N_j)-(p_f)^{\cal I}(N_j)|  +
\littlesum_{j=1}^{n_n} p(N_j) \cdot \dtv(p_{N_j},((p_f)^{\cal I})_{N_j}) \nonumber \\
&& +
{\frac 1 2} \littlesum_{j \in A} |p(M_j) - (p_f)^{\cal I}(M_j)| +
\littlesum_{j \in A} p(M_j) \cdot \dtv(p_{M_j},((p_f)^{\cal I})_{M_j}) \nonumber \\
&& +
{\frac 1 2} \littlesum_{j \in ([n_m] \setminus A)} \littlesum_{I \in J_{M_j}}
|p(I) - (p_f)^{\cal I}(I)| +
\littlesum_{j \in ([n_m] \setminus A)} \littlesum_{I \in J_{M_j}}
p(I) \cdot \dtv(p_I,((p_f)^{\cal I})_I).  \label{eq:p2}
\end{eqnarray}
Since $p(I) = (p_f)^{\cal I}(I)$ for every $I \in {\cal I}$, this simplifies to
\begin{eqnarray}
\dtv(p,(p_f)^{\cal I}) &\leq&
\littlesum_{j=1}^{n_n} p(N_j) \cdot \dtv(p_{N_j},((p_f)^{\cal I})_{N_j}) +
\littlesum_{j \in A} p(M_j) \cdot \dtv(p_{M_j},((p_f)^{\cal I})_{M_j}) \nonumber \\
&& +
\littlesum_{j \in ([n_m] \setminus A)} \littlesum_{I \in J_{M_j}}
p(I) \cdot \dtv(p_I,((p_f)^{\cal I})_I).  \label{eq:p3}
\end{eqnarray}
which we now proceed to bound.

Recalling from (\ref{eq:negligible}) that $p(N_j) \leq 101\eps/(10000k)$ for each
negligible interval $N_j$, and recalling that $n_n \leq k$, the first summand in (\ref{eq:p3}) is at most $101\eps/10000.$

To bound the second summand, fix any $j \in A$ so $M_j$ is a moderate interval such that
\textsc{ Orientation}$(\wh{p},M_j)$ returns $\bot.$ If $p_{M_j}$ is non-decreasing then by Lemma~\ref{claim:orientation} it must be the case that $\dtv(p_{M_j},((p_f)^{\cal I})_{M_j}) \leq \eps/6$
(note that $((p_f)^{\cal I})_{M_j}$ is just $u_{M_j}$, the uniform distribution over $M_j$).
Lemma~\ref{claim:orientation} gives the same bound if $p_{M_j}$ is non-increasing.  If $p_{M_j}$ is neither non-increasing nor non-decreasing then we have no nontrivial bound on $\dtv(p_{M_j},((p_f)^{\cal I})_{M_j})$, but since $p$ is $k$-modal there can be at most $k$ such values of $j$ in $A$.  Recalling (\ref{eq:moderate}), overall we have that
\[
\littlesum_{j \in A} p(M_j) \cdot \dtv(p_{M_j},((p_f)^{\cal I})_{M_j}) \leq {\frac {301 \eps k}{10000k}} +
{\frac \eps 6} \leq {\frac {1968 \eps}{10000}},
\]
and we have bounded the second summand.

It remains to bound the final summand of (\ref{eq:p3}).
For each $j \in ([n_m] \setminus A)$, we know that \textsc{ Orientation}$(\wh{p},M_j)$
outputs either $\uparrow$ or $\downarrow$.\ignore{\cnote{The previous argument given here was, I think, insufficient. Here it is in case someone wants to read it/I'm mistaken: By Lemma~\ref{claim:orientation} we have that the output of
\textsc{ Orientation}$(\wh{p},M_j)$ gives the correct orientation of $p_{M_j}$.  Consequently ${\cal J}_{M_j}$
is a $(p_{M_j},\eps/4,O(\log(n)/\eps))$-flat decomposition of $M_j$, by Theorem~\ref{thm:birge-oblivious}. This means that
$\dtv(p_{M_j},((p_f)^{\cal I})_{M_j}) \leq \eps/4$, which is equivalent to
\[
{\frac 1 {p(M_j)}} \littlesum_{I \in {\cal J}_{M_j}} p(I) \dtv(p_I,((p_f)^{\cal I})_I) \leq {\frac \eps 4}, \quad \text{i.e.}\quad
\littlesum_{I \in {\cal J}_{M_j}} p(I) \dtv(p_I,((p_f)^{\cal I})_I) \leq p(M_j) \cdot {\frac \eps 4}.
\]

Summing over all $j \in ([n_m] \setminus A)$, we get
\[
\littlesum_{j \in ([n_m] \setminus A)}  \littlesum_{I \in {\cal J}_{M_j}} p(I) \dtv(p_I,((p_f)^{\cal I})_I) \leq \littlesum_{j \in ([n_m] \setminus A)} p(M_j) \cdot {\frac \eps 4} \leq {\frac \eps 4}.
\]
So the third summand of (\ref{eq:p3}) is at most $\eps/4$, and overall we have that
$(\ref{eq:p3}) \leq {\frac {4468 \eps}{10000}}$. Hence, we have shown that ${\cal I}$ is a
$(p,\eps,\ell)$-flat decomposition of $[n]$.}} If $p_{M_j}$ is monotone, then by Lemma~\ref{claim:orientation} we have that the output of
\textsc{ Orientation}$(\wh{p},M_j)$ gives the correct orientation of $p_{M_j}$.  Consequently ${\cal J}_{M_j}$
is a $(p_{M_j},\eps/4,O(\log(n)/\eps))$-flat decomposition of $M_j$, by Theorem~\ref{thm:birge-oblivious}. This means that
$\dtv(p_{M_j},((p_f)^{\cal I})_{M_j}) \leq \eps/4$, which is equivalent to
\[
{\frac 1 {p(M_j)}} \littlesum_{I \in {\cal J}_{M_j}} p(I) \dtv(p_I,((p_f)^{\cal I})_I) \leq {\frac \eps 4}, \quad \text{i.e.}\quad
\littlesum_{I \in {\cal J}_{M_j}} p(I) \dtv(p_I,((p_f)^{\cal I})_I) \leq p(M_j) \cdot {\frac \eps 4}.
\]
Let $B \subset [n_m] \setminus A$ be such that, for all $j \in B$, $p_{M_j}$ is monotone. Summing the above over all $j \in B$ gives:
\[
\littlesum_{j \in B}  \littlesum_{I \in {\cal J}_{M_j}} p(I) \dtv(p_I,((p_f)^{\cal I})_I) \leq \littlesum_{j \in B} p(M_j) \cdot {\frac \eps 4} \leq {\frac \eps 4}.
\]
Given that $p$ is $k$-modal, the cardinality of the set $[n_m]\setminus (A \cup B)$ is at most $k$. So we have the bound:
\[
\littlesum_{j \in [n_m]\setminus (A \cup B)}  \littlesum_{I \in {\cal J}_{M_j}} p(I) \dtv(p_I,((p_f)^{\cal I})_I) \leq \littlesum_{j \in [n_m]\setminus (A \cup B)} p(M_j)  \leq {\frac {301 \eps k}{10000k}}.
\]
So the third summand of (\ref{eq:p3}) is at most $\eps/4+301 \eps/10000$, and overall we have that
$(\ref{eq:p3}) \leq {\frac { \eps}{2}}$. Hence, we have shown that ${\cal I}$ is a
$(p,\eps,\ell)$-flat decomposition of $[n]$, and Lemma~\ref{lem:cfsd} is proved.

\subsection{The \textsc{ Orientation} algorithm.} \label{sec:kmodaltools2}

The \textsc{ Orientation} algorithm takes as input an explicit distribution of a distribution $\wh{p}$ over
$[n]$ and an interval $I \subseteq [n].$  Intuitively, it assumes that $\wh{p}_I$ is close
(in Kolmogorov distance) to a monotone distribution $p_I$, and its goal is to determine the orientation of $p_I$: it outputs either $\uparrow$, $\downarrow$ or $\bot$ (the last of which means ``close to uniform'').
The algorithm is quite simple; it checks whether there exists an
initial interval $I'$ of $I$ on which $\wh{p}_I$'s weight is significantly different from $u_I(I')$ (the weight
that the uniform distribution over $I$ assigns to $I'$) and bases its output on this in the obvious way.  A precise description of the algorithm (which uses no samples) is given below.

\bigskip

\ignore{\hskip-.2in} \framebox{

\medskip \noindent \begin{minipage}{16cm}

\medskip

\textsc{ Orientation}

\noindent \textsc{ Inputs:} explicit description of distribution $\wh{p}$ over $[n]$; interval
$I =[a,b] \subseteq [n]$

\smallskip

\begin{enumerate}

\item If $|I|=1$ (i.e. $I = \{a\}$ for some $a \in [n]$) then return ``$\bot$'', otherwise continue.

\item If there is an initial interval $I'=[a,j]$ of $I$ that satisfies
$
u_I(I')-(\wh{p})_I(I') > {\frac \eps 7}
$
then halt and output ``$\uparrow$''.  Otherwise,

\item If there is an initial interval $I'=[a,j]$ of $I$ that satisfies
$
u_I(I')-(\wh{p})_I(I') < - {\frac \eps 7}$
then halt and output ``$\downarrow$''.  Otherwise,

\item Output ``$\bot$''.

\end{enumerate}

\smallskip

\end{minipage}}
\bigskip

We proceed to analyze \textsc{ Orientation}. We show that if $p_I$ is far from uniform then \textsc{ Orientation} outputs the correct orientation for it. We also show that whenever \textsc{ Orientation} does not output ``$\bot$'', whatever it outputs is the correct orientation of $p_I$. For ease of readability, for the rest of this subsection we use the following notation:
\[
\Delta := {\frac {\eps^2}{10000k}}
\]

\medskip\noindent{\bf Lemma~\ref{claim:orientation}.}\emph{
Let $p$ be a distribution over $[n]$ and let interval $I =[a,b] \subseteq [n]$ be such that $p_I$ is monotone.
Suppose $p(I) \geq 99\eps/(10000k)$, and suppose that for every interval $I' \subseteq I$ we have that}
\begin{equation}
|\wh{p}(I')-p(I')| \leq \Delta.
\label{eq:nifty}
\end{equation}
\emph{Then}
\begin{enumerate}
\item \emph{If $p_I$ is non-decreasing and $p_I$ is $\eps/6$-far from the uniform distribution
$u_I$ over $I$, then \textsc{ Orientation}$(\wh{p},I)$ outputs ``$\uparrow$'';}

\item \emph{if \textsc{ Orientation}$(\wh{p},I)$ outputs ``$\uparrow$'' then $p_I$ is non-decreasing;}

\item \emph{if $p_I$ is non-increasing and $p_I$ is $\eps/6$-far from the uniform distribution
$u_I$ over $I$, then \textsc{ Orientation}$(\wh{p},I)$ outputs ``$\downarrow$'';}

\item \emph{if \textsc{ Orientation}$(\wh{p},I)$ outputs ``$\downarrow$'' then $p_I$ is non-increasing.}
\end{enumerate}

\medskip

\begin{proof}
Let $I'=[a,j] \subseteq I$ be any initial interval of $I.$  We first establish the upper bound
\begin{equation}
|p_I(I')-(\wh{p})_I(I')| \leq \eps/49
\label{eq:closeyclose}
\end{equation}
as this will be useful for the rest of the proof.  Using (\ref{eq:nifty}) we have
\begin{eqnarray}
p_I(I')-(\wh{p})_I(I') &=& {\frac {p(I')}{p(I)}} - {\frac {\wh{p}(I')}{\wh{p}(I)}} \geq
{\frac {p(I')}{p(I)}} - {\frac {p(I') + \Delta}{{p}(I) - \Delta}} \nonumber \\
&=& - \Delta \cdot {\frac {p(I')+p(I)}{p(I)(p(I)- \Delta)}}.
\label{eq:bib}
\end{eqnarray}
Now using the fact that $p(I') \leq p(I)$ and $p(I) \geq 99 \eps/(10000k)$, we get that (\ref{eq:bib})
is at least
\[
- \Delta \cdot {\frac {2p(I)}{(98/99)p(I)^2}} =
-{\frac {2 \cdot 99 \Delta}{98 p(I)}} \geq
-{\frac {2 \cdot 99 \Delta \cdot 10000k}{98 \cdot  99 \eps}}
= -{\frac {\eps}{49}}.
\]
So we have established the lower bound $p_I(I')-(\wh{p})_I(I') \geq -\eps/49.$
For the upper bound, similar reasoning gives
\begin{eqnarray*}
p_I(I')-(\wh{p})_I(I') &\leq& \Delta \cdot {\frac {p(I')+p(I)}{p(I)(p(I) + \Delta)}}
\leq \Delta \cdot {\frac {2p(I)}{p(I)^2 \cdot (100/99)}}\\
&\leq& \Delta \cdot {\frac {2 \cdot 10000k \cdot 99}{99 \eps \cdot 100}} = {\frac \eps {50}}
\end{eqnarray*}
and so we have shown that $|p_I(I')-(\wh{p})_I(I')| \leq \eps/49$ as desired.  Now we proceed to prove the lemma.

We first prove Part 1. Suppose that $p_I$ is non-decreasing and $\dtv(p_I,u_I) > \eps/6.$
Since $p_I$ is monotone and $u_I$ is uniform and both are supported on $I$, we have that the
pdfs for $p_I$ and $u_I$ have exactly one crossing. An easy consequence of this is that
$\dk(p_I,u_I) = \dtv(p_I,u_I) > \eps/6.$  By the definition of $d_K$ and the fact that $p_I$ is
non-decreasing, we get that there exists a point $j \in I$ and an interval $I'=[a,j]$
which is such that
\[
\dk(p_I,u_I) = u_I(I') - p_I(I') > {\frac \eps 6}.
\]
Using (\ref{eq:closeyclose}) we get from this that
\[
u_I(I')-(\wh{p})_I(I') > {\frac \eps 6} - {\frac \eps {49}} > {\frac \eps 7}
\]
and thus \textsc{ Orientation} outputs ``$\uparrow$'' in Step~3 as claimed.

Now we turn to Part 2 of the lemma. Suppose that \textsc{ Orientation}$(\wh{p},I)$ outputs ``$\uparrow$''. Then it must be the case that there is an initial interval $I'=[a,j]$ of $I$ that satisfies $u_I(I')-(\wh{p})_I(I') > {\frac \eps 7}.$  By (\ref{eq:closeyclose}) we have that $u_I(I') - p_I(I') > {\frac \eps 7} - {\frac \eps {49}} = {\frac {6 \eps}{49}}$. But Observation~\ref{obs:mon} tells us that if $p_I$ were non-increasing then we would have $u_I(I')-p_I(I') \leq 0$; so $p_I$ cannot be non-increasing, and therefore it must be non-decreasing.

For Part 3, suppose that $p_I$ is non-increasing and $\dtv(p_I,u_I) > \eps/6.$
First we must show that \textsc{ Orientation} does \emph{not} output ``$\uparrow$'' in Step~3.
Since $p_I$ is non-increasing, Observation~\ref{obs:mon} gives us that
$u_I(I') - p_I(I') \leq 0$ for every initial interval $I'$ of $I$.  Inequality (\ref{eq:closeyclose})
then gives $u_I(I') - (\wh{p})_I(I') \leq \eps/49$, so \textsc{ Orientation} indeed does not output ``$\uparrow$''
in Step~3 (and it reaches Step~4 in its execution).  Now arguments exactly analogous to the arguments
for part 1 (but using now the fact that $p_I$ is non-increasing rather than non-decreasing) give that
there is an initial interval $I'$ such that
$(\wh{p})_I(I') - u_I(I') > {\frac \eps 6} - {\frac \eps {49}} > {\frac \eps 7}$, so \textsc{ Orientation}
outputs ``$\downarrow$'' in Step~4 and Part 3 of the lemma follows.

Finally, Part 4 of the lemma follows from analogous arguments as Part 2.\end{proof}

\section{Proof of Proposition~\ref{prop:lb}} \label{sec:proplb}

We start by defining the transformation, and then prove the necessary lemmas to show that the transformation yields $k$-modal distributions with the specified increase in support size, preserves $L_1$ distance between pairs, and has the property that samples from the transformed distributions can be simulated given access to samples from the original distributions.

The transformation proceeds in two phases.  In the first phase, the input distribution $p$ is transformed into a related distribution $f$ with larger support; $f$ has the additional property that the ratio of the probabilities of consecutive domain elements is bounded.  Intuitively the distribution $f$ corresponds to a ``reduced distribution'' from Section~\ref{sec:reduction}.  In the second phase, the distribution $f$ is transformed into the final $2(k-1)$-modal distribution $g$.  Both stages of the transformation consist of subdividing each element of the domain of the input distribution into a set of elements of the output distribution;  in the first stage, the probabilities of each element of the set are chosen according to a geometric sequence, while in the second phase, all elements of each set are given equal probabilities.

We now define this two-phase transformation and prove Proposition~\ref{prop:lb}.

\begin{definition}~\label{def:lb}
Fix $\eps>0$ and a distribution $p$ over $[n]$ such that $p_{min} \leq p(i) \leq p_{max}$ for all $i \in [n].$  We define the distribution $f_{p,\eps,p_{max},p_{min}}$in two steps.   Let $q$ be the distribution on support $[c]$ with $c=1+ \lceil \log_{1+\eps} p_{max} - \log_{1+\eps}p_{min} \rceil $  that is defined by $q(i)=(1+\eps)^{i-1} \frac{\eps}{(1+\eps)^{c}-1}.$  The distribution  $f_{p,\eps,p_{max},p_{min}}$  has support $[cn]$, and for $i\in[n]$ and $j \in [c]$ it assigns probability $p(i)q(j)$ to domain element $c(i-1)+j.$
\end{definition}

It is convenient for us to view the $\mod r$ operator as giving an output in $[r]$, so that
``$r \mod r$'' equals $r.$

\begin{definition} \label{def:lb2} We define the distribution $g_{k, p,\eps,p_{max},p_{min}}$ from distribution  $f_{p,\eps,p_{max},p_{min}}$ of support $[m]$ via the following process.  Let $r=\lceil \frac{m}{k} \rceil,$ and let $a_1:=1$, and for all $i \in \{2,\ldots,r \},$ let $a_i:=\lceil (1+\eps)a_{i-1} \rceil.$  For each $i \in [m],$  we assign probability $\frac{f_{p,\eps,p_{max},p_{min}}(i)}{a_{i \mod r}}$ to each of the $a_j$ support elements in the set $\{1+t,2+t,\ldots, a_{i \mod r}+t\},$ where $t=\littlesum_{\ell=1}^{i-1} a_{(\ell \mod r)}.$
\end{definition}

\begin{lemma}~\label{lb:samp}
Given $\eps,p_{min},p_{max},$ and access to independent samples from distribution $p$, one can generate independent samples from $f_{p,\eps,p_{max},p_{min}}$ and from  $g_{k, p,\eps,p_{max},p_{min}}.$
\end{lemma}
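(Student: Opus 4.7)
\textbf{Proof plan for Lemma~\ref{lb:samp}.}

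The plan is to exhibit explicit sampling procedures and argue, by direct inspection of Definitions~\ref{def:lb} and~\ref{def:lb2}, that they produce exactly one sample from the target distribution per sample drawn from $p$. The key observation is that in both constructions the only ingredient that depends on $p$ is a single ``parent'' index, while all remaining randomness is drawn from an auxiliary distribution whose description is determined entirely by the known parameters $\eps$, $p_{\min}$, $p_{\max}$, $n$, and $k$ (not by $p$ itself).

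For $f_{p,\eps,p_{\max},p_{\min}}$, recall from Definition~\ref{def:lb} that the mass at domain point $c(i-1)+j$ is the product $p(i)\,q(j)$, where $q$ is a fixed geometric distribution on $[c]$ with $c = 1+\lceil\log_{1+\eps}p_{\max}-\log_{1+\eps}p_{\min}\rceil$. Thus the sampler draws one sample $i$ from $p$, independently draws $j$ from $q$ (which can be done with no access to $p$ since $\eps,p_{\min},p_{\max}$ are given), and outputs $c(i-1)+j$. The output distribution is the product measure on $(i,j)$ pushed forward by $(i,j)\mapsto c(i-1)+j$, which by construction is $f_{p,\eps,p_{\max},p_{\min}}$. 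Each sample uses exactly one sample from $p$.

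For $g_{k,p,\eps,p_{\max},p_{\min}}$, the sampler first draws $i$ from $f_{p,\eps,p_{\max},p_{\min}}$ using the procedure above, consuming one sample from $p$. Next, having $i$ in hand, it computes the fixed integer sequence $a_1,\dots,a_r$ (with $r=\lceil m/k\rceil$, $m = cn$) and the cumulative offset $t = \sum_{\ell=1}^{i-1}a_{(\ell \bmod r)}$; both of these depend only on $\eps$ and $m$, not on $p$, so they require no extra samples. The sampler then outputs a uniformly random element of $\{1+t,2+t,\dots,a_{i\bmod r}+t\}$. By Definition~\ref{def:lb2}, the conditional distribution over the block associated with $i$ is precisely uniform on that block with total mass $f_{p,\eps,p_{\max},p_{\min}}(i)$, so the marginal of the simulator's output coincides with $g_{k,p,\eps,p_{\max},p_{\min}}$.

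There is no real obstacle here—the proposition is essentially a tautological consequence of the two definitions once one reads off that $p$ enters each construction only through a single coordinate drawn from $p$, and that the remaining randomness (the index $j\sim q$ in the first step, and the uniform choice within the block $\{1+t,\dots,a_{i\bmod r}+t\}$ in the second) uses only parameters that the simulator has in closed form. The only mild bookkeeping is to verify that the values $c$, $q$, $r$, and the sequence $\{a_i\}$ are all computable from $(\eps,p_{\min},p_{\max},n,k)$ without ever querying $p$, which is immediate from their definitions.
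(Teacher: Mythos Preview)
Your proposal is correct and follows essentially the same approach as the paper's own proof: for $f$ you sample $i\leftarrow p$ and $j\leftarrow q$ and output $c(i-1)+j$, and for $g$ you first sample $i\leftarrow f$ and then output a uniformly random element of the block $\{1+t,\dots,a_{i\bmod r}+t\}$. The paper's proof is slightly terser, but the sampling procedures and the justification are identical.
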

\begin{proof}
To generate a sample according to $f_{p,\eps,p_{max},p_{min}},$ one simply takes a sample $i \leftarrow p$ and then draws $j \in [c]$ according to the distribution $q$ as defined in Definition~\ref{def:lb} (note that this draw according to $q$ only involves  $\eps, p_{min}$ and $p_{max}$).  We then output the value $c(i-1)+j.$  It follows immediately from the above definition that the distribution of the output value is $f_{p,\eps,p_{max},p_{min}}.$

To generate a sample according to $g_{k,p,\eps,p_{max},p_{min}}$ given a sample $i \leftarrow f_{p,\eps,p_{max},p_{min}},$ one simply outputs (a uniformly random) one of the $a_{(i \mod r)}$ support  elements of $g_{k,p,\eps,p_{max},p_{min}}$ corresponding to the element $i$ of $f_{p,\eps,p_{max},p_{min}}.$  Specifically, if the support of $f_{p,\eps,p_{max},p_{min}}$ is $[m]$, then we output a random element of the set $\{1+t,2+t,\ldots, a_{i \mod r}+t\},$ where $t=\littlesum_{\ell=1}^{i-1} a_{(\ell \mod r)},$ with $a_j$ as defined in Definition~\ref{def:lb2}, and $r=\lceil \frac{m}{k} \rceil.$
\end{proof}

\begin{lemma}~\label{lb:kmod}
If $p_{min} \leq p(i) \leq p_{max}$ for all $i \in [n]$,  then the distribution $f_{p,\eps,p_{max},p_{min}}$ of Definition~\ref{def:lb}, with density $f:[cn]\rightarrow \RR,$ has the property that $\frac{f(i)}{f(i-1)}\le 1+\eps$ for all $i>1$, and the distribution  $g_{k,p,\eps,p_{max},p_{min}}$ of Definition~\ref{def:lb2} is $2(k-1)$-modal.
\end{lemma}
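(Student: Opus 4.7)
The plan is to prove the two claims of Lemma~\ref{lb:kmod} separately, each by a direct calculation on the block structure of the definitions.

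For the first claim, note that the support of $f := f_{p,\eps,p_{max},p_{min}}$ is naturally partitioned into $n$ blocks of size $c$, where block $i$ consists of the positions $c(i-1)+1,\dots,ci$ and carries the values $p(i)q(1),\,p(i)q(2),\dots,p(i)q(c)$. Two kinds of ratios $f(t)/f(t-1)$ then appear. Within a block, $t$ and $t-1$ correspond to consecutive $j$-indices of $q$, and by the definition $q(j)=(1+\eps)^{j-1}\cdot \eps/((1+\eps)^{c}-1)$ the ratio is exactly $1+\eps$. At a block boundary, the ratio equals $(p(i+1)/p(i))\cdot q(1)/q(c) = (p(i+1)/p(i))\cdot(1+\eps)^{-(c-1)}$. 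The choice $c = 1+\lceil\log_{1+\eps}(p_{max}/p_{min})\rceil$ ensures $(1+\eps)^{c-1}\ge p_{max}/p_{min}\ge p(i+1)/p(i)$, so this ratio is at most $1\le 1+\eps$. Combining the two cases gives $f(t)/f(t-1)\le 1+\eps$ for all $t>1$.

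For the second claim, I would analyze the shape of $g := g_{k,p,\eps,p_{max},p_{min}}$ one ``block of $f$'' at a time. Let $m=cn$ and $r=\lceil m/k\rceil$. By construction, the domain of $g$ is partitioned into $m$ consecutive groups indexed by $i\in[m]$, where the $i$-th group has $a_{i\bmod r}$ sub-elements each carrying the probability $f(i)/a_{i\bmod r}$; so within a group $g$ is constant. The index $i\bmod r$ cycles through $1,2,\dots,r,1,2,\dots,r,\dots$ as $i$ ranges over $[m]$, producing at most $\lceil m/r\rceil \le k$ cycles. Between two groups $i,i+1$ lying in the same cycle (i.e.\ $i\bmod r<r$), we have $a_{(i+1)\bmod r}=\lceil(1+\eps)\,a_{i\bmod r}\rceil\ge (1+\eps)\,a_{i\bmod r}$, while by the first claim $f(i+1)\le(1+\eps)f(i)$; dividing the two inequalities gives $f(i+1)/a_{(i+1)\bmod r}\le f(i)/a_{i\bmod r}$. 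Hence $g$ is non-increasing within each cycle. At the at most $k-1$ cycle boundaries, where $a$ resets to $a_1=1$, the value of $g$ may strictly increase.

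Thus the profile of $g$ is a concatenation of at most $k$ non-increasing stretches separated by at most $k-1$ upward jumps. Each upward jump contributes at most one min-interval (at the end of the previous cycle, where $g$ reaches a local valley before the jump) and at most one max-interval (at the start of the new cycle, where $g$ sits at a local peak before beginning its non-increasing descent). This gives a total of at most $2(k-1)$ max- and min-intervals, so $g$ is $2(k-1)$-modal.

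The main subtlety, which I would handle carefully but which cannot inflate the count, is boundary and plateau effects: the first group of cycle $1$ (at position $1$) and the last group of cycle $k$ (at position $m'$) lie on the boundary of $g$'s support and so do not themselves count as interior max/min-intervals under the definition requiring $I\subseteq[2,m'-1]$; and a plateau that happens to span a ``jump'' that is actually non-strict merges two potential extrema into at most one. Both of these only \emph{decrease} the mode count, preserving the upper bound $2(k-1)$.
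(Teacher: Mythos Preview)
Your proof is correct and follows essentially the same approach as the paper: you verify the ratio bound for $f$ by splitting into within-block and block-boundary cases (using that $(1+\eps)^{c-1}\ge p_{max}/p_{min}$), and you show $g$ is non-increasing on each of the at most $k$ cycles by combining $f(i+1)\le(1+\eps)f(i)$ with $a_{(i+1)\bmod r}\ge(1+\eps)a_{i\bmod r}$. Your treatment of the boundary and plateau cases is in fact more careful than the paper's (which contains a slip about which endpoint hosts the local maxima versus minima).
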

\begin{proof}
Note that the distribution $q$, with support $[c]$ as defined in Definition~\ref{def:lb},  has the property that $q(i)/q(i-1)=1+\eps$ for all $i \in \{2,\ldots,c\},$ and thus $f({\ell})/f({\ell}-1)=1+\eps$ for any ${\ell}$ satisfying $({\ell} \mod c) \neq 1.$  For values $\ell$ that are 1 mod $c$, we have $$\frac{f({\ell})}{f({\ell}-1)} =\frac{p(i+1)}{p(i)(1+\eps)^{c-1}} \le \frac{p(i+1)p_{min}}{p(i)p_{max}} \le 1.$$

Given this property of $f_{p,\eps,p_{max},p_{min}},$ we now establish that  $g_{k, p,\eps,p_{max},p_{min}}$ is monotone decreasing on each of the $k$ equally sized contiguous regions of its domain.  First consider the case $k=1$; given a support element $j$, let $i$ be such that $j \in \{1+\littlesum_{\ell=1}^{i-1}a_\ell,\ldots, a_i+\littlesum_{\ell=1}^{i-1}a_\ell\}.$ We thus have that $$g_{1,p,\eps,p_{max},p_{min}}(j) = \frac{f_{p,\eps,p_{max},p_{min}}(i)}{a_i} \le \frac{(1+\eps)f_{p,\eps,p_{max},p_{min}}(i-1)}{a_i} \le \frac{f_{p,\eps,p_{max},p_{min}}(i-1)}{a_{i-1}} \le g_{1,p,\eps,p_{max},p_{min}}(j-1),$$
and thus $g_{1,p,\eps,p_{max},p_{min}}$ is indeed 0-modal since it is monotone non-increasing.
For $k>1$ the above arguments apply to each of the $k$ equally-sized contiguous regions of the support, so there are $2(k-1)$ modes, namely the local maxima occurring at the right endpoint of each region, and the local minima occurring at the left endpoint of each region.
\end{proof}

\begin{lemma}~\label{lb:l1pres}
For any distributions $p,p'$ with support $[n]$, and any $\eps, p_{max},p_{min},$ we have that
$$\dtv(p,p')= \dtv\left( f_{p,\eps,p_{max},p_{min}},f_{p',\eps,p_{max},p_{min}}\right)=\dtv\left( g_{k,p,\eps,p_{max},p_{min}},g_{k,p',\eps,p_{max},p_{min}}\right).$$
\end{lemma}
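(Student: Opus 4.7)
The plan is to prove the two equalities separately, and each will follow from the same simple observation: both transformations distribute the mass assigned to each original domain element among a fixed set of new elements in fixed proportions that do not depend on the input distribution $p$. Since both transformations are ``linear'' in this sense, total variation distance is preserved exactly.

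First I would verify the equality $\dtv(p,p')=\dtv(f_{p,\eps,p_{max},p_{min}},f_{p',\eps,p_{max},p_{min}})$. Denote $f=f_{p,\eps,p_{max},p_{min}}$ and $f'=f_{p',\eps,p_{max},p_{min}}$, and let $q$ be the distribution on $[c]$ from Definition~\ref{def:lb}. Since $f(c(i-1)+j)=p(i)q(j)$ and similarly for $f'$, we compute
\begin{align*}
\dtv(f,f') &= \tfrac12 \littlesum_{i=1}^n \littlesum_{j=1}^c \bigl| p(i)q(j) - p'(i)q(j) \bigr| \\
&= \tfrac12 \littlesum_{i=1}^n |p(i)-p'(i)| \littlesum_{j=1}^c q(j) = \tfrac12 \littlesum_{i=1}^n |p(i)-p'(i)| = \dtv(p,p'),
\end{align*}
using that $q$ is a probability distribution.

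Next I would establish $\dtv(f,f')=\dtv(g,g')$, where $g = g_{k,p,\eps,p_{max},p_{min}}$ and $g'=g_{k,p',\eps,p_{max},p_{min}}$, using Definition~\ref{def:lb2}. The critical point is that the block sizes $a_{i \bmod r}$ and the partition of the new support are determined by $k,\eps,p_{max},p_{min}$, and $n$ alone (not by $p$ or $p'$). Hence for the $a_{i \bmod r}$ new elements corresponding to the $i$th support element of $f$, each receives mass $f(i)/a_{i \bmod r}$ under $g$ and $f'(i)/a_{i \bmod r}$ under $g'$. Summing contributions:
\[
\dtv(g,g') = \tfrac12 \littlesum_{i} a_{i \bmod r} \cdot \left| \tfrac{f(i)}{a_{i \bmod r}} - \tfrac{f'(i)}{a_{i \bmod r}} \right| = \tfrac12 \littlesum_i |f(i)-f'(i)| = \dtv(f,f').
\]

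Combining the two identities yields the claim. There is no real obstacle here: the lemma is essentially the statement that mass-preserving ``refinement'' of the domain (via multiplication by a fixed probability vector, or via uniform splitting into a fixed number of pieces) preserves $L_1$ distance. The only thing one must be slightly careful about is to confirm that the partition structure used to define $g$ from $f$ (and the indices $r,a_j,t$ in Definition~\ref{def:lb2}) depends only on the parameters $k,\eps,p_{max},p_{min},n$ and not on the specific distribution being transformed, so that $g$ and $g'$ share the same block structure and the above summation matches term-by-term.
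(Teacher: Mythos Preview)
Your proof is correct and follows essentially the same approach as the paper: both argue that the transformations subdivide each domain element in a manner oblivious to the input distribution, and then compute the total variation distance directly. Your write-up is in fact slightly more thorough, since you carry out both equalities explicitly and keep track of the factor of $\tfrac12$, whereas the paper only illustrates the first equality.
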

\begin{proof}
Both equalities follow immediately from the fact that the transformations of Definitions~\ref{def:lb} and ~\ref{def:lb2} partition each element of the input distribution in a manner that is oblivious to the probabilities.  To illustrate, letting $c=1+\lceil \log_{1+\eps} p_{max} - \log_{1+\eps}p_{min} \rceil,$ and letting $q$ be as in Definition~\ref{def:lb}, we have the following:
\begin{eqnarray*}
\dtv\left( f_{p,\eps,p_{max},p_{min}},f_{p',\eps,p_{max},p_{min}}\right) & = & \littlesum_{i \in [n], j \in [c]} q(j)|p(i)-p'(i)| \\
& = &\littlesum_{i \in [n]} |p(i)-p'(i)|.
\end{eqnarray*}
\end{proof}

\begin{lemma}~\label{lb:size}
If $p$ has support $[n],$ then for any $\eps<1/2,$ the distribution $g_{k,p,\eps,p_{max},p_{min}}$ is
supported on $[N]$, where $N$ is at most $k \frac{ e^{\frac{8n}{k}\left(1+ \log(p_{max}/p_{min}) \right)}}{\eps^2}$.
\end{lemma}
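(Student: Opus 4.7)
The plan is to compute the support size $N$ by summing the block sizes $a_{i \bmod r}$ in Definition~\ref{def:lb2} and showing this fits under the target bound.

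First, I would observe that $N = \sum_{i=1}^{m} a_{(i \bmod r)}$, where $m = cn$ is the support size of $f_{p,\eps,p_{max},p_{min}}$ and $r = \lceil m/k \rceil$. Because $r \geq m/k$, the index $i \bmod r$ cycles through $\{1,\dots,r\}$ at most $k$ times, so $N \leq k \cdot \sum_{i=1}^{r} a_i$. The next step is to get a clean recursive bound on $a_i$: the substitution $b_i := a_i + 1/\eps$ converts the recurrence $a_i \leq (1+\eps)a_{i-1}+1$ into $b_i \leq (1+\eps)b_{i-1}$, so $b_i \leq (1+\eps)^{i-1}(1 + 1/\eps)$ and hence $a_i \leq (1+\eps)^{i-1}(1+1/\eps) \leq (2/\eps)(1+\eps)^{i-1}$ for $\eps < 1$.

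I would then sum the geometric series to get
\[
\sum_{i=1}^{r} a_i \;\leq\; \frac{2}{\eps} \cdot \frac{(1+\eps)^r - 1}{\eps} \;\leq\; \frac{2(1+\eps)^r}{\eps^2},
\]
and hence $N \leq \frac{2k(1+\eps)^r}{\eps^2}$. It remains to control the exponent $r$. Since $c \leq 2 + \log_{1+\eps}(p_{max}/p_{min})$, and since for $\eps < 1/2$ we have $\log(1+\eps) \geq 3\eps/4$, we get $c \leq 2 + \frac{4\log(p_{max}/p_{min})}{3\eps}$, and therefore $\eps c \leq 2\eps + \tfrac{4}{3}\log(p_{max}/p_{min}) \leq 1 + \tfrac{4}{3}\log(p_{max}/p_{min})$. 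Combining with $r \leq cn/k + 1$ and the standard inequality $(1+\eps)^r \leq e^{\eps r}$ yields
\[
(1+\eps)^r \;\leq\; e^{\eps(cn/k + 1)} \;\leq\; e^{(n/k)(1 + \frac{4}{3}\log(p_{max}/p_{min})) + 1/2}.
\]

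Finally, I would absorb the leading factor of 2 and the constant $1/2$ in the exponent into slack: since $n \geq k$ (otherwise the claim is vacuous), $n/k \geq 1$, so enlarging the constant in the exponent from $4/3$ to $8$ easily accommodates both the ``$+1/2$'' and the $\ln 2$ needed to absorb the factor of $2/\eps^2 \to 1/\eps^2$. Thus
\[
N \;\leq\; k \cdot \frac{e^{(8n/k)(1 + \log(p_{max}/p_{min}))}}{\eps^2},
\]
as required. The only delicate step is the constant-tracking at the end; every other step is routine.
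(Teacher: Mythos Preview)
Your proof is correct and follows essentially the same approach as the paper: bound $N \leq k\sum_{i=1}^r a_i$, control the $a_i$ by a geometric sequence, sum, and then bound the exponent $r$ in terms of $c$, $n$, $k$. The only cosmetic difference is in handling the ceiling in the recurrence for $a_i$ --- the paper compares to an auxiliary sequence $b_i$ with $b_1 = \lceil 1/\eps\rceil$ and ratio at most $1+2\eps$, while your shift $b_i = a_i + 1/\eps$ gives the slightly tighter ratio $1+\eps$; the remainder of the argument is identical.
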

\begin{proof}
The support of  $f_{p,\eps,p_{max},p_{min}}$ is $n(1+\lceil \log_{1+\eps}p_{max}-\log_{1+\eps}p_{min} \rceil ) \le n\left( 2+ \frac{\log(p_{max}/p_{min})}{\log(1+\eps)}\right).$  Letting $a_1:=1$ and $b_1:= \lceil \frac{1}{\eps} \rceil,$ and defining $a_i:= \lceil a_{i-1}(1+\eps)\rceil,$ and $b_i:= \lceil b_{i-1}(1+\eps)\rceil,$ we have that $a_i \le b_i$ for all $i$.  Additionally, $b_{i+1}/b_i \le 1+2\eps$, since all $b_i \ge 1/\eps,$ and thus the ceiling operation can increase the value of $(1+\eps)b_i$ by at most $\eps b_i$.  Putting these two observations together, we have $$\littlesum_{i=1}^m a_i \le \littlesum_{i=1}^m b_i \le \frac{(1+2\eps)^{m+1}}{2 \eps^2}.$$ For any $\eps \le 1/2,$ we have that the support of $g_{k,p,1/2,p_{max},p_{min}}$ is at most
\begin{eqnarray*}
k\frac{(1+2\eps)^{\left \lceil \frac{n}{k}\left(2+ \frac{\log(p_{max}/p_{min})}{\log(1+\eps)}\right) \right\rceil}}{\eps^2} & \le & k\frac{(1+2\eps)^{2 \frac{n}{k}\left(2+ 4 \frac{\log(p_{max}/p_{min})}{2 \eps}\right) }}{\eps^2} \\
& \le & k \frac{ (1+2\eps)^{\frac{1}{2\eps} \left( \frac{8n}{k}\left(1+ \log(p_{max}/p_{min})\right) \right)}}{\eps^2} \\
& \le & k \frac{ e^{\frac{8n}{k}\left(1+ \log(p_{max}/p_{min}) \right)}}{\eps^2}.
\end{eqnarray*}
\end{proof}

\begin{proof} [Proof of Proposition~\ref{prop:lb}]
The proof is now a simple matter of assembling the above parts. Given a distribution $D$ over pairs of distributions of support $[n],$ as specified in the proposition statement, the distribution $D_k$ is defined via the process of taking $(p,p')\leftarrow D,$ then applying the transformation of Definitions~\ref{def:lb} and ~\ref{def:lb2} with $\eps=1/2$ and to yield a pair $\left(g_{k,p,1/2,p_{max},p_{min}},g_{k,p',1/2,p_{max},p_{min}}\right)$.  We claim that this $D_k$ satisfies all the properties claimed in the proposition statement.   Specifically, Lemmas~\ref{lb:kmod} and ~\ref{lb:size}, respectively, ensure that every distribution in the support of $D_k$ has at most $2(k-1)$ modes, and has support size at most $4 ke^{\frac{8n}{k}\left(1+ \log(p_{max}/p_{min}) \right)}.$  Additionally, Lemma~\ref{lb:l1pres} guarantees that the transformation preserves $L_1$ distance, namely, for two distributions $p,p'$ with support $[n],$ we have $L_1(p,p')=L_1(g_{k,p,1/2,p_{max},p_{min}},g_{k,p',1/2,p_{max},p_{min}}).$  Finally, Lemma~\ref{lb:samp} guarantees that, given $s$ independent samples from $p$, one can simulate drawing $s$ independent samples according to $g_{k,p,1/2,p_{max},p_{min}}.$  Assuming for the sake of contradiction that one had an algorithm that could distinguish whether $L_1(g_{k,p,1/2,p_{max},p_{min}},g_{k,p',1/2,p_{max},p_{min}})$ is  less than $\eps_1$ versus greater than $\eps_2$ with the desired probability given $s$ samples, one could take $s$ samples from distributions $(p,p')\leftarrow D$, \emph{simulate} having drawn them from $g_{k,p,1/2,p_{max},p_{min}}$ and $g_{k,p',1/2,p_{max},p_{min}},$ and then run the hypothesized tester algorithm on those samples, and output the answer, which will be the same for $(p,p')$ as for $(g_{k,p,1/2,p_{max},p_{min}},g_{k,p',1/2,p_{max},p_{min}}).$  This contradicts the assumption that no algorithm with these success parameters exists for $(p,p')\leftarrow D.$
\end{proof}

\section{Proof of Theorem~\ref{thm:birge-oblivious}} \label{ap:birge}

We first note that we can assume that $\eps > 1/ n$. Otherwise, the decomposition of $[n]$ into singleton intervals
$I_i  = \{ i\}$, $i \in [n]$, trivially satisfies the statement of the theorem. Indeed, in this case we have that $(1/\eps) \cdot \log n >n$ and $p_f \equiv p$.


We first describe the oblivious decomposition and then show that it satisfies the statement of the theorem.
The decomposition $\mathcal{I}$ will be a partition of $[n]$ into $\ell$ nonempty consecutive intervals $I_1, \ldots, I_{\ell}$.
In particular, for $j\in[\ell]$, we have $I_{j} = [n_{j-1}+1, n_j]$ with $n_0=0$ and $n_{\ell} = n$.
The {\em length} of interval $I_i$,  denoted by $l_i$, is defined to be the cardinality of $I_i$, i.e., $l_i = |I_i|$.
(Given that the intervals are disjoint and consecutive, to fully define them it suffices to specify their lengths.)

We can assume wlog that $n$ and $1/\eps$ are each at least sufficiently large universal constants.
The interval lengths are defined as follows. Let $\ell \in \mathbb{Z}^+$ be the smallest integer such that
$$\littlesum_{i=1}^{\ell} \lfloor (1+\eps)^i \rfloor \geq n.$$
For $i=1,2,\ldots, \ell-1$ we define
$$ l_i  := \lfloor (1+\eps)^i \rfloor .$$
For the $\ell$-th interval, we set
$$l_{\ell} :=  n - \littlesum_{i=1}^{\ell-1} l_i.$$

It follows from the aforementioned definition that the number $\ell$ of intervals in the decomposition
is at most \ignore{$O((1/\eps) \cdot \log n)$. The calculation actually gives that}
$$O\left( (1/\eps) \cdot \log (1+\eps \cdot n) \right).$$


Let $p$ be any non-increasing distribution over $[n]$. We will now show that the above described decomposition satisfies
$$ \dtv(p_f, p)  = O(\eps)$$
where $p_f$ is the flattened distribution corresponding to $p$  and the partition $\mathcal{I} = \{I_i\}_{i=1}^{\ell}$.
We can write
$$\dtv(p_f, p) = (1/2) \cdot \littlesum_{i=1}^n \left| p_f (i) - p(i) \right|  = \littlesum_{j=1}^{\ell}  \dtv \left( (p_f)^{I_j}, p^{I_j} \right) $$
where $p^I$ denotes the (sub-distribution) restriction of $p$ over $I$.

Let $I_j = [n_{j-1}+1, n_j]$ with $l_j = |I_j| = n_j - n_{j-1}$. Then we have that
$$\dtv \left( (p_f)^{I_j}, p^{I_j} \right) = (1/2) \cdot \littlesum_{i=n_{j-1}+1}^{n_j} \left| p_f(i) - p(i) \right| .$$
Recall that $p_f$ is by definition constant within each $I_j$ and in particular equal to $\bar{p}_f^j =  \littlesum_{i=n_{j-1}+1}^{n_j} p(i) / l_j$.
Also recall that $p$ is non-increasing, hence $p(n_{j-1}) \geq p(n_{j-1}+1) \geq \bar{p}_f^j \geq p(n_{j})$. Therefore, we can bound from above the variation distance
within $I_j$ as follows
$$\dtv \left( (p_f)^{I_j}, p^{I_j} \right)  \leq l_j \cdot \left( p(n_{j-1}+1) - p(n_j) \right) \leq  l_j \cdot \left( p(n_{j-1}) - p(n_j) \right).$$
\ignore{\inote{Comparison to Birge's analysis (continuous case):
The discrete setting makes the proof slightly less straightforward than Birge.
Recall that in the latter (Equation (2.5) p.4 of his paper) we had the freedom to select the lengths $l_i$ of the intervals to be arbitrary reals.
Birge guarantees that $l_{i+1}  \leq  (1+\eps) \cdot l_i$ (in fact, we have equality with the potential exception of the last interval).
Also, the first interval in Birge has length $\eps / H$, where $H$ is the maximum value of the pdf. Using these two facts plus a telescoping series
he completes the proof.

In the discrete case, every $l_i$ must be integral, of course. Hence, for the small length intervals, we {\em cannot} have that $l_{i+1}  \leq  (1+\eps) \cdot l_i$.
E.g. Once we jump from a length-$1$ interval to a length-$2$ interval, from a length-$2$ interval to a length-$3$, etc.
Once the intervals become of size at least $1/\eps$, we are fine. Indeed, our definition of the integral lengths
guarantees that  $l_{i+1}  \leq  (1+2 \eps) \cdot l_i$ or sth in this case. (Another approach would be to start with large intervals, say of length at least $1/eps$.
This does not seem to work, because the leftmost points of the distribution may be heavy.)

Now, the distribution of the lengths of the intervals will be $1,1,\ldots, 1, 2,2,\ldots, 2, 3,3,\ldots, 3, \ldots, 1/\eps, 1/\eps+1, \ldots$.
That is, we have a bunch of intervals of length $1$, a bunch of intervals of length $2$, $3$, $4$, ..., up to $1/\eps$. For intervals of length at least $1/\eps$ we have
a unique interval for a given length (if any).

So, the discrete case argument proceeds as follows: For the intervals of length at least $1/\eps$ use the same argument as Birge.
For the small intervals, partition them into clusters of intervals  with the same length. Note that each cluster has about $1/\eps$ elements (points of the support).
Now do a telescoping series over the clusters by using the fact that consecutive clusters have "lengths of corresponding intervals" that differ by $1$.
}}So, we have
\begin{equation}
\dtv(p_f, p) \leq   \littlesum_{j=1}^{\ell} l_j \cdot \left( p(n_{j-1}) - p(n_j) \right).
\label{eq:pfvsp}
\end{equation}

To bound the above quantity we analyze summands with $l_j < 1/\eps$ and with
$l_j \geq 1/\eps$ separately.\ignore{proceed as follows.
We break the contribution into two parts: the one that comes
from intervals of length at most $1/\eps$, and the one that comes from the rest.}

Formally, we partition the set of intervals $I_1, \ldots, I_{\ell}$ into ``short'' intervals and ``long intervals'' as follows:
If any interval $I_j$ satisfies $l_j \geq 1/\eps$, then let $j_0 \in \mathbb{Z}^+$ be the largest integer  such that $l_{j_0} < 1/\eps$; otherwise we have that every
interval $I_j$ satisfies $l_j < 1/\eps$, and in this case we let $j_0=\ell.$
If $j_0 < \ell$ then we have that $j_0 = \Theta ((1/\eps) \cdot \log_2 (1/\eps))$.
Let $S=\{I_i\}_{i=1}^{j_0}$ denote the set of \emph{short} intervals and let $L$ denote its complement $L = \mathcal{I} \setminus S$.





Consider the short intervals and cluster them into {\em groups} according to their length; that is, a group contains all intervals in $S$ of the same length.
We denote by $G_i$ the $i$th group, which by definition contains all intervals in $S$
of length $i$; note that these intervals are consecutive.
The {\em cardinality} of a group (denoted by $|\cdot|$) is the number of intervals it contains;
the {\em length} of a group is the number of elements it contains (i.e. the sum of the lengths of
the intervals it contains).

\ignore{
\inote{Again, we note that some groups may be empty, e.g. if $\eps = 100/ n$, we will only have groups of at most constant size,
if $\eps = \log n / n$, we will have groups of size up to $\log(1/\eps)$, etc.}
}

Note that $G_1$ (the group containing all singleton intervals) has $|G_1| = \Omega(1/\eps)$ (this follows from the assumption that $1/\eps <n$).  Hence $G_1$ has length $\Omega (1/\eps)$. Let $j^{\ast} < 1/\eps$ be the maximum length of any short interval in $S$.  It is easy to verify that each group $G_j$ for $j \leq j^{\ast}$ is nonempty, and that for all $j \leq j^{\ast}-1$, we have $ |G_j| = \Omega \left( (1/\eps) \cdot (1/j) \right)$, \inote{Please check}\rnote{I convinced myself this is indeed correct.} which implies that the length
of $G_j$ is $\Omega(1/\eps)$.

To bound the contribution to (\ref{eq:pfvsp}) from the short intervals, we consider the corresponding sum for each group, and use the fact that $G_1$ makes no contribution
to the error. In particular, the contribution of the short intervals is

\begin{equation} \littlesum_{l = 2}^{j^{\ast}} l \cdot \left( p^{-}_l - p^{+}_l \right)
\label{eq:shortcontrib}
\end{equation}
where $p^{-}_l$ (resp.  $p^{+}_l$) is the probability mass of the leftmost (resp. rightmost) point in $G_l$.
Given that $p$ is non-increasing, we have that $p^{+}_l \geq p^{-}_{l+1}$.
Therefore, we can upper bound (\ref{eq:shortcontrib}) by
$$ 2 \cdot p^{+}_1 + \littlesum_{l=2}^{j^{\ast}-1} p^{+}_l - j^{\ast} \cdot p^{+}_{j^{\ast}}.$$
Now note that $p^{+}_1 = O(\eps) \cdot p(G_1)$, since $G_1$ has length (total number of elements) $\Omega(1/\eps)$
and $p$ is non-increasing. Similarly, for $l < j^{\ast}$, we have that
$p^{+}_l  = O(\eps) \cdot p(G_l)$, since  $G_l$ has length $\Omega(1/\eps)$.
Therefore, the above quantity can be upper bounded by
\begin{equation}O(\eps) \cdot p(G_1)  + O(\eps) \cdot \littlesum_{l=2}^{j^{\ast}-1} p(G_l)  - j^{\ast} \cdot p^{+}_{j^{\ast}} = O(\eps) \cdot p(S) -  j^{\ast} \cdot p^{+}_{j^{\ast}}.
\label{eq:ub}
\end{equation}
We consider two cases: The first case is that $L=\emptyset$. In this case, we are done because the above expression (\ref{eq:ub}) is $O(\eps)$.
The second case is that $L \neq \emptyset$ (we note in passing that in this case
the total number of elements in all short intervals is $\Omega(1/\eps^2)$, which means
that we must have $\eps = \Omega(1/\sqrt{n})$).  In this case
we bound the contribution of the long intervals using the same argument as Birg\'{e}.
In particular, the contribution of the long intervals is
  \begin{equation}
  \littlesum_{j= j_0 + 1}^{\ell} l_j \cdot \left( p(n_{j-1}) - p(n_j) \right) \leq (j^{\ast}+1) \cdot p^{+}_{j^{\ast}} +   \littlesum_{j= j_0 + 1}^{\ell-1} (l_{j+1} - l_{j}) \cdot p(n_{j}).
  \label{eq:smoob} \end{equation}
\rnote{I don't follow why the above inequality holds; can we explain more?}
\inote{My vague recollection: Telescoping sum; first term does not cancel out. Since this is the first long interval,
its length is at most by one bigger than the previous length. By monotonicity, $p(n_{j_0})$ is at most  $ p^{+}_{j^{\ast}}.$}

Given that $l_{j+1} - l_j \leq (2\eps) \cdot l_j$ and $\littlesum_j l_j \cdot p(n_{j}) \leq p(L)$, it follows that the second summand in (\ref{eq:smoob}) is at most $O(\eps) \cdot p(L)$.
Therefore, the total variation distance between $p$ and $p_f$ is at most (\ref{eq:ub}) +
(\ref{eq:smoob}), i.e.
\begin{equation}
O(\eps) \cdot p(S) + O(\eps) \cdot p(L) +  p^{+}_{j^{\ast}} .
\label{eq:stoot}
\end{equation}
 Finally, note that $p(L)+p(S)=1$ and $p^{+}_{j^{\ast}}  = O(\eps)$. (The latter holds because $p^{+}_{j^{\ast}}$ is the probability mass of the rightmost point in $S$;
recall that $S$ has length at least $1/\eps$ and $p$ is decreasing.)  This implies that (\ref{eq:stoot}) is at most $O(\eps)$, and this completes the proof
of Theorem~\ref{thm:birge-oblivious}.

\section{Bounding variation distance} \label{ap:boundvar}

As noted above, our tester will work by decomposing the interval
$[n]$ into sub-intervals.  The following lemma will be useful
for us; it bounds
the variation distance between two distributions $p$ and $q$ in terms
of how $p$ and $q$ behave on the sub-intervals in such a decomposition.

\begin{lemma} \label{lem:decomp}
Let $[n]$ be partitioned into $I_1,\dots,I_r$.  Let $p,q$ be two distributions over $[n]$.  Then
\begin{equation}
\label{eq:bound}
\dtv(p,q) \leq {\frac 1 2} \littlesum_{j=1}^r |p(I_j) - q(I_j)| + \littlesum_{j=1}^r p(I_j) \cdot \dtv(p_{I_j},q_{I_j}).
\end{equation}
\end{lemma}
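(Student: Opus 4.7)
The plan is to start from the definition $\dtv(p,q) = \tfrac{1}{2}\sum_{i\in[n]} |p(i)-q(i)|$ and split the sum over the intervals in the partition, writing
\[
\dtv(p,q) = \tfrac{1}{2}\sum_{j=1}^{r} \sum_{i \in I_j} |p(i)-q(i)|,
\]
so it suffices to bound each inner sum $\sum_{i \in I_j} |p(i)-q(i)|$ by $|p(I_j)-q(I_j)| + 2\,p(I_j)\dtv(p_{I_j},q_{I_j})$ and then sum over $j$.

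First I would rewrite $p(i) = p(I_j)\,p_{I_j}(i)$ and $q(i) = q(I_j)\,q_{I_j}(i)$ for $i \in I_j$. The key algebraic step is the ``add and subtract'' trick: write
\[
p(I_j)\,p_{I_j}(i) - q(I_j)\,q_{I_j}(i) = p(I_j)\bigl(p_{I_j}(i) - q_{I_j}(i)\bigr) + \bigl(p(I_j) - q(I_j)\bigr) q_{I_j}(i),
\]
and apply the triangle inequality term by term. Summing over $i \in I_j$, the first piece yields $p(I_j)\sum_{i \in I_j}|p_{I_j}(i)-q_{I_j}(i)| = 2\,p(I_j)\dtv(p_{I_j},q_{I_j})$, and the second piece yields $|p(I_j)-q(I_j)| \cdot \sum_{i \in I_j} q_{I_j}(i) = |p(I_j)-q(I_j)|$, since $q_{I_j}$ is a probability distribution supported on $I_j$.

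Combining these two estimates gives
\[
\sum_{i \in I_j} |p(i)-q(i)| \leq |p(I_j)-q(I_j)| + 2\,p(I_j)\dtv(p_{I_j},q_{I_j}).
\]
Summing over $j \in [r]$ and dividing by $2$ yields exactly the bound claimed in \eqref{eq:bound}, completing the proof.

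There is no real obstacle here: the result is essentially a direct consequence of the triangle inequality once the ``add and subtract'' decomposition is made. The only subtlety worth flagging is that the bound is naturally asymmetric in $p$ and $q$ (the coefficient $p(I_j)$ could equally well have been $q(I_j)$, which is what one gets by symmetrizing the choice of which of $p_{I_j}$ or $q_{I_j}$ to factor out); the stated asymmetric form is what will be convenient for the application in equation~\eqref{eq:p2}.
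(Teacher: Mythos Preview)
Your proof is correct and is essentially the same argument as the paper's: both add and subtract the intermediate term $p(I_j)\,q_{I_j}(i)$ (the paper writes it as $q(i)\cdot p(I_1)/q(I_1)$, which is the same thing) and then apply the triangle inequality so that one piece gives $2\,p(I_j)\dtv(p_{I_j},q_{I_j})$ and the other gives $|p(I_j)-q(I_j)|$. Your presentation is in fact slightly cleaner than the paper's, which introduces an unnecessary case split on whether $p(I_1)\le q(I_1)$; your version shows directly that no such assumption is needed.
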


\begin{proof}
Recall that $\dtv(p,q) = {\frac 1 2} \littlesum_{i=1}^n |p(i)-q(i)|.$
To prove the claim it suffices to show that
\begin{equation}
{\frac 1 2} \littlesum_{i \in I_1}|p(i)-q(i)| \leq {\frac 1 2}
|p(I_1) - q(I_1)| + p(I_1) \cdot \dtv(p_{I_1},q_{I_1}).
\label{eq:a}
\end{equation}

We assume that $p(I_1) \leq q(I_1)$ and prove (\ref{eq:a})
under this assumption.  This gives
the bound in general since if $p(I_1) > q(I_1)$ we have
\[
{\frac 1 2} \littlesum_{i \in I_1}|p(i)-q(i)| \leq |p(I_1) - q(I_1)| + q(I_1) \cdot \dtv(p_{I_1},q_{I_1})
< |p(I_1) - q(I_1)| + p(I_1) \cdot \dtv(p_{I_1},q_{I_1})
\]
where the first inequality is by (\ref{eq:a}).
The triangle inequality gives us
\[
|p(i)-q(i)| \leq \left|p(i) - q(i) \cdot {\frac {p(I_1)} {q(I_1)}}\right| +
\left|q(i) \cdot {\frac {p(I_1)} {q(I_1)}} - q(i)\right|.
\]
Summing this over all $i \in I_1$ we get
\[
{\frac 1 2} \littlesum_{i \in I_1}|p(i)-q(i)| \leq {\frac 1 2} \littlesum_{i \in I_1}
\left|p(i) - q(i) \cdot {\frac {p(I_1)} {q(I_1)}}\right| + {\frac 1 2} \littlesum_{i \in I_1}
\left|q(i) \cdot {\frac {p(I_1)} {q(I_1)}} - q(i)\right|.
\]
We can rewrite the first term on the RHS as
\begin{eqnarray*}
{\frac 1 2} \littlesum_{i \in I_1}
\left|p(i) - q(i) \cdot {\frac {p(I_1)} {q(I_1)}}\right|&=&
p(I_1) \cdot {\frac 1 2} \littlesum_{i \in I_1}
\left|{\frac {p(i)}{p(I_1)}} - {\frac {q(i)}{q(I_1)}}\right|=
p(I_1) \cdot {\frac 1 2} \littlesum_{i \in I_1}
\left|p_{I_1}(i) - q_{I_1}(i)\right|\\
&=& p(I_1) \cdot \dtv(p_{I_1},q_{I_1})
\end{eqnarray*}
so to prove the desired bound it suffices to show that
\begin{eqnarray}
{\frac 1 2} \littlesum_{i \in I_1}
\left|q(i) \cdot {\frac {p(I_1)} {q(I_1)}} - q(i)\right| \leq
|p(I_1) - q(I_1)|.
\label{eq:b}
\end{eqnarray}
We have
\[
\left|q(i) \cdot {\frac {p(I_1)} {q(I_1)}} - q(i)\right| = q(i) \cdot
\left|{\frac {p(I_1)} {q(I_1)}} - 1\right|
\]
and hence we have
\[
{\frac 1 2} \littlesum_{i \in I_1} \left|
q(i) \cdot {\frac {p(I_1)} {q(I_1)}} - q(i)\right| = {\frac 1 2} \littlesum_{i \in I_1}
q(i) \cdot
\left|{\frac {p(I_1)} {q(I_1)}} - 1\right| = {\frac 1 2}
q(I_1) \cdot \left|{\frac {p(I_1)} {q(I_1)}} - 1\right| = {\frac 1 2}
|p(I_1) - q(I_1)|.
\]
So we indeed have (\ref{eq:b}) as required, and the lemma holds.
\end{proof}

%

\end{document}